\tikzstyle{arrow}+=[thick,rounded corners=0.5em]
\tikzstyle{every picture}+=[remember picture,baseline]
\newlist{propertyenum}{enumerate}{1}
\newlist{conditionenum}{enumerate}{1}
\newlist{lemmaenum}{enumerate}{1}
\newlist{lemmaenum*}{enumerate*}{1}
\setlist[propertyenum,conditionenum,lemmaenum,lemmaenum*]{label={(\arabic*)}}
\crefname{axiom}{axiom}{axioms}
\Crefname{axiom}{Axiom}{Axioms}
\crefname{cond}{condition}{conditions}
\Crefname{cond}{Condition}{Conditions}
\crefname{property}{property}{properties}
\Crefname{property}{Property}{Properties}
\crefname{invariant}{invariant}{invariants}
\Crefname{invariant}{Invariant}{Invariants}
\crefname{propertyenumi}{property}{properties}
\Crefname{propertyenumi}{Property}{Properties}
\crefname{conditionenumi}{condition}{conditions}
\Crefname{conditionenumi}{Condition}{Conditions}
\crefname{lemmaenumi}{lemma}{lemmas}
\Crefname{lemmaenumi}{Lemma}{Lemmas}
\crefname{lemmaenum*i}{lemma}{lemmas}
\Crefname{lemmaenum*i}{Lemma}{Lemmas}
\newcommand{\choice}[2]{\ensuremath{{#1}\mathbin{\choiceSymb}{#2}}}
\newcommand{\choiceSymb}{\ensuremath{{\cup}}}
\newcommand{\compSymb}{\ensuremath{;}}
\newcommand{\comp}[2]{\ensuremath{{#1}\mathbin{\compSymb}{#2}}}
\newcommand{\emptysequence}{\ensuremath{\varepsilon}}
\newcommand{\falsum}{\ensuremath{\bot}}
\newcommand{\InfinitarySystem}{\ensuremath{\mathbf{G3PDL}^{\infty}}}
\newcommand{\counterEgs}[3]{\ensuremath{\mathcal{C}_{({#1},{#2})}({#3})}}
\newcommand{\CyclicSystem}{\ensuremath{\mathbf{G3PDL}^{\omega}}}
\newcommand{\ID}[1]{\ensuremath{\mathsf{Id}({#1})}}
\newcommand{\idValuation}{\ensuremath{\mathfrak{v}}}
\newcommand{\interp}[2][\model]{\ensuremath{\interpSym[{#1}]({#2})}}
\newcommand{\interpSym}[1][\model]{\mathcal{I}_{#1}}
\newcommand{\iter}[1]{\ensuremath{{#1}^{\iterSymb}}}
\newcommand{\iterSymb}{\ensuremath{\ast}}
\newcommand{\Labels}{\ensuremath{\mathcal{L}}}
\newcommand{\labs}[1]{\ensuremath{\mathsf{labs}(#1)}}
\newcommand{\measureOf}[3]{\ensuremath{\mu_{({#1},{#2})}({#3})}}
\newcommand{\model}{\ensuremath{\mathfrak{m}}}
\newcommand{\nec}[2]{\ensuremath{[{#1}]{#2}}}
\newcommand{\pathsetMeasure}[1]{\ensuremath{\mu({#1})}}
\newcommand{\prefix}{\ensuremath{\sqsubseteq}}
\newcommand{\relatedBy}[1]{\ensuremath{\mathrel{R_{#1}}}}
\newcommand{\sequent}[2]{\ensuremath{{#1} \Rightarrow {#2}}}
\newcommand{\sizeOf}[3][{(\model, \valuation)}]{\ensuremath{\mu_{#1}({#2},{#3})}}
\newcommand{\starredlabs}[1]{\ensuremath{\mathsf{\ast\text{-}labs}(#1)}}
\newcommand{\subst}[2]{\ensuremath{\{{#1}/{#2}\}}}
\newcommand{\test}[1]{\ensuremath{{#1}\testSymb}}
\newcommand{\testSymb}{\ensuremath{?}}
\newcommand{\truncateAt}[2]{\ensuremath{{\lfloor {#2} \rfloor}_{#1}}}
\newcommand{\valuation}{\ensuremath{v}}
\DeclareMathOperator{\combinationsFor}{\ensuremath{!}}
\DeclareMathOperator{\languageOf}{\mathcal{L}}
\DeclareMathOperator{\pathMax}{\mathsf{path}\text{-}\mathsf{max}}
\DeclareMathOperator{\starMax}{\ast\text{-}\mathsf{max}}
\DeclareMathOperator{\unfold}{\Lambda}
\begin{document}

\title{A Non-wellfounded, Labelled Proof System for Propositional Dynamic Logic}
\author{Simon Docherty\inst{1} and Reuben N. S. Rowe\inst{2}}
\institute{
  Department of Computer Science,
  University College London, UK
  \texttt{simon.docherty@ucl.ac.uk}
    \and
  School of Computing,
  University of Kent,
  Canterbury, UK
  \texttt{reuben.rowe@kent.ac.uk}
}

\maketitle

\begin{abstract}
  We define an infinitary labelled sequent calculus for PDL, {\InfinitarySystem}.
  A finitarily representable cyclic system, {\CyclicSystem}, is then given.
  We show that both are sound and complete with respect to standard models of PDL and, further, that {\InfinitarySystem} is cut-free complete.
  We additionally investigate proof-search strategies in the cyclic system for the fragment of PDL without tests.
\end{abstract}

\section{Introduction}

Fischer and Ladner's Propositional Dynamic Logic (PDL) \cite{FischerLadner1979}, which is the propositional variant of Pratt's Dynamic Logic \cite{Pratt1976}, is perhaps \emph{the} quintessential modal logic of action. 
While (P)DL arose initially as a modal logic for reasoning about program execution 
its impact as a formalism for extending `static' logical systems with `dynamics'
via composite actions \cite[p. 498]{Harel1984} has been felt broadly across logic. This is witnessed 
in extensions and variants designed for reasoning about games \cite{Parikh1983}, 
natural language \cite{Groenendijk1991}, cyber-physical systems \cite{Platzer2008}, epistemic agents \cite{GSL2012}, XML \cite{Afanasiev2005}, 
and knowledge representation \cite{DeGiacomo1994}, among others. 

What is lacking, however, is a uniform proof theory for PDL-type logics. 
Much of the proof theoretic work on these logics focuses on Hilbert-style axiomatisations, 
which are not amenable to automation.
%
Proof systems for PDL itself can broadly be characterised as one of two sorts. 
Falling into the first category are a multitude of infinitary systems \cite{RenardeldeLavalette2008,Hill2010,Frittella2014}
employing either infinitely-wide $\omega$-proof rules, or (equivalently) allowing countably infinite contexts.
In the other category are tableau-based algorithms for deciding PDL-satisfiability \cite{DeGiacomo2000,GoreWidmann2009}.
While these are (neccessarily) finitary, they require a great deal of auxillary structure tailored to the decision procedure for PDL itself, and it is unclear how they might be generalised to systems for variants or extensions of the logic.


%

In the proof theory of modal logic, a high degree of uniformity and modularity 
has been achieved through labelled systems. The idea of using 
labels as syntactic representatives of Kripke models in 
modal logic proof systems can be traced back to Fitting \cite{Fitting1983}, 
and a succinct history of the use of labelled systems is provided by Negri \cite{Negri2011}. 
Negri's work \cite{Negri2005} is the high point of the technique, giving a 
procedure to transform frame conditions for Kripke models 
into labelled sequent calculi rules preserving structural properties 
of the proof system, given they are defined as \emph{coherent axioms}. 

The power of this rule generation technique is of particular interest 
because it enables the specification of sound and complete systems 
for classes of Kripke frames that are first-order, but not modally, 
definable. In the context of PDL-type logics, this is of interest because 
of common additional program constructs like intersection which 
have a non-modally definable intended interpretation \cite{Passy1991}. However, even 
with this expressive power, such a framework on its own 
cannot account for program modalities involving iteration. 
In short, formulae involving these modalities are interpreted 
via the reflexive-transitive closure of accessibility relations, 
and this closure is not first-order (and therefore, not coherently) 
definable. Something more must be done to capture the PDL family of logics. 

In this paper we provide the first step towards a uniform proof theory 
of the sort that is currently missing for this family of logics 
by giving two new proof systems for PDL. We combine two ingredients 
from modern proof theory that have hitherto remained separate: 
labelled deduction \`{a} la Negri and non-wellfounded and \emph{cyclic} sequent calculi. 

We first construct a labelled sequent calculus {\InfinitarySystem},
extending that of Negri \cite{Negri2005}, in which proofs are permitted 
to be infinitely tall. For this system soundness (via descending counter-models) and cut-free completeness (via counter-model construction)
are proved in a similar manner to Brotherston and Simpson's 
infinitary proof theory for first-order logic with inductive definitions \cite{BrotherstonSimpson11}. 
Next we restrict attention to \emph{regular} proofs, meaning only those infinite proof trees 
that are finitely representable (i.e.~only have a finite number of distinct subtrees), obtaining the cyclic system \CyclicSystem. 
This can be done by permitting the forming of backlinks (or, cycles) in the 
proof tree, granted a (decidable) trace condition guaranteeing 
soundness can be established. We then show that the
axiomatisation of PDL \cite{HKT2000} can be derived in \CyclicSystem, obtaining completeness.  
We finish the paper with an investigation of proof-search in the cyclic system for
a sub-class of sequents, and conjecture cut-free completeness for the test-free fragment of PDL.

Most crucially, this gives a simple (finitary) sequent calculus that 
elegantly handles iteration in a fashion reflecting actual reasoning 
about the operation. 
We conjecture that this system can be adapted to uniformly handle, 
not just additional program constructs that are found in the PDL literature, 
but also other modal logics (for example, epistemic logics with common knowledge 
modalities) whose interpretation requires the transitive closure of 
accessibility relations, which are perhaps also defined by 
coherent axioms. We discuss this, and a plan for 
future work 
in the conclusion. 

For space reasons we elide proofs, but include them in an appendix.

\paragraph*{Related Work.}

Beyond the proof systems outlined above, the most significant 
related work can be found in Das and Pous' \cite{DasPous17,DasPous18} cyclic proof systems 
for deciding Kleene algebra (in)equalities. Das and Pous' insight 
that iteration can be handled in a cyclic sequent calculus is 
essential to our work here, although there are additional complexities 
involved in formulating a system for PDL because of the interaction 
between programs (which form a Kleene algebra with tests) and 
formulae. We also note that Gor\'{e} and Widmann's tableau procedure 
also utilises the formation of cycles in proof trees. Our proof of cut-free completeness of the
 infinitary system also follows that of Brotherston and Simpson \cite{BrotherstonSimpson11} for first-order logic
with inductive definitions.

Recent work by Cohen and Rowe \cite{CohenRowe18} gives a 
cyclic proof system for the extension of first-order logic with 
a transitive closure operator and we conjecture 
that our labelled cyclic system (and labelled cyclic systems for modal 
logics more generally) can be formalised within it. This idea 
echoes van Benthem's suggestion that the most natural
setting for many modal logics is not first-order logic, but in 
fact first-order logic with a least fixed point operator \cite{Benthem2006}.

Cyclic proof systems have also been defined for some modal logics 
with similar model properties to PDL, including the logic of 
common knowledge \cite{Wehbe2010}  and G\"{o}del-L\"{o}b logic \cite{Shamkanov2014}. 
The idea of cyclic proof can be traced to 
modal $\mu$-calculus \cite{Niwinksi1996}.
Indeed, it can be shown that the logic of common knowledge \cite{Alberucci2002}, G\"{o}del-L\"{o}b logic \cite{Benthem2006,Visser2005} and PDL \cite{Benthem2006,Carreiro2014} can be faithfully interpreted in the modal $\mu$-calculus, indicating that 
perhaps cyclic proof was the right approach for PDL all along.

\section{PDL: Syntax and Semantics}

The syntax of PDL formulas is defined as follows.
We assume countably many atomic \emph{propositions} (ranged over by $p$, $q$, $r$), and countably many atomic \emph{programs} (ranged over by $a$, $b$, $c$).

\begin{definition}[Syntax of PDL]
  The set of \emph{formulas} ($\varphi$, $\psi$, $\ldots$) and the set of \emph{programs} ($\alpha$, $\beta$, $\ldots$) are defined mutually by the following grammar:
  \begin{align*}
    \varphi, \psi
      &\Coloneqq 
             \falsum
        \mid p
        \mid \varphi \wedge \psi
        \mid \varphi \vee \psi
        \mid \varphi \rightarrow \psi
        \mid \nec{\alpha}{\varphi}
      \\
    \alpha, \beta, \gamma
      &\Coloneqq
             a
        \mid \comp{\alpha}{\beta}
        \mid \choice{\alpha}{\beta}
        \mid \test{\varphi}
        \mid \iter{\alpha}
  \end{align*}
\end{definition}

We briefly reprise the semantics of PDL (see \cite[\textsection{5.2}]{HKT2000}).
A PDL model $\model = (\mathcal{S}, \mathcal{I})$ is a Kripke model consisting of a set $\mathcal{S}$ of \emph{states} and an \emph{interpretation} function $\mathcal{I}$ that assigns: a subset of $\mathcal{S}$ to each atomic proposition; and a binary relation on $\mathcal{S}$ to each atomic program.
We inductively construct an extension of the interpretation function, denoted $\interpSym$, that operates on the full set of propositions and programs.

\begin{definition}[Semantics of PDL]
\label{def:Semantics}
  Let $\model = (\mathcal{S}, \mathcal{I})$ be a PDL model. We define the extended interpretation function $\interpSym$ inductively as follows:
  \begin{align*}
    \interp{\bot} &= \emptyset
      &
    \interp{a} &= \mathcal{I}(a)
      \\
    \interp{p} &= \mathcal{I}(p)
      &
    \interp{\comp{\alpha}{\beta}} &= \interp{\alpha} \circ \interp{\beta}
      \\
    \interp{\varphi \wedge \psi} &= \interp{\varphi} \cap \interp{\psi} 
      &
    \interp{\choice{\alpha}{\beta}} &= \interp{\alpha} \cup \interp{\beta}
      \\
    \interp{\varphi \vee \psi} &= \interp{\varphi} \cup \interp{\psi} 
      &
    \interp{\test{\varphi}} &= \ID{\interp{\varphi}}
      \\
    \interp{\varphi \rightarrow \psi} &= (\mathcal{S} \setminus \interp{\varphi}) \cup \interp{\psi}
      &
    \interp{\iter{\alpha}} &= \bigcup_{k \geq 0} {\interp{\alpha}^{k}}
      \\[-0.75em]
    \interp{\nec{\alpha}{\varphi}} &= \mathcal{S} \setminus \Pi_1(\interp{\alpha} \circ \ID{\mathcal{S} \setminus \interp{\varphi}})
  \end{align*}
  where $\circ$ denotes relational composition, $R^{n}$ denotes the composition of $R$ with itself $n$ times, $\Pi_1$ returns a set by projecting the first component of each tuple in a relation, and $\ID{X}$ denotes the identity relation over the set $X$.
\end{definition}

We write $\model, s \models \varphi$ to mean $s \in \interp{\varphi}$, and $\model \models \varphi$ to mean that $\model, s \models \varphi$ for all states $s \in \mathcal{S}$.
A PDL formula $\varphi$ is \emph{valid} when $\model \models \varphi$ for all models $\model$.

\section{An Infinitary, Labelled Sequent Calculus}

We now define a sequent calculus for deriving theorems (i.e.~valid formulas) of PDL.
This proof system has two important features.
The first is that it is a \emph{labelled} proof system.
Thus sequents contain assertions about the structure of the underlying Kripke models and formulas are labelled with atoms denoting specific states in which they should be interpreted.
Secondly, we allow proofs of infinite height.

We assume a countable set $\Labels$ of \emph{labels} (ranged over by $x$, $y$, $z$) that we will use to denote particular states.
A \emph{relational atom} is an expression of the form \mbox{$x \relatedBy{a} y$}, where $x$ and $y$ are labels and $a$ is an atomic program.
A \emph{labelled formula} is an expression of the form $x : \varphi$, where $x$ is a label and $\varphi$ is a formula.
We define a label substitution operation by $z \subst{x}{y} = y$ when $x = z$, and $z \subst{x}{y} = z$ otherwise.
We lift this to relational atoms and labelled formulas by: $({z \relatedBy{a} z'}) \subst{z}{y} = {x \subst{z}{y}} \relatedBy{a} {z' \subst{x}{y}}$ and $({z : \varphi}) \subst{x}{y} = {{z \subst{x}{y}} : \varphi}$.

Sequents are expressions of the form $\sequent{\Gamma}{\Delta}$, where $\Gamma$ and $\Delta$ are finite sets of relational atoms and labelled formulas.
We denote an arbitrary member of such a set using $A$, $B$, etc.
As usual, $\Gamma, A$ and $A, \Gamma$ both denote the set $\{ A \} \cup \Gamma$, and $\Gamma \subst{z}{y}$ denotes the application of the (label) substitution $\subst{x}{y}$ to all the elements in $\Gamma$.
We denote by $\nec{\alpha}{\Gamma}$ the set of formulas obtained from $\Gamma$ by prepending the modality $\nec{\alpha}{}$ to every labelled formula.
That is, we define $\nec{\alpha}{\Gamma} = \{ x \relatedBy{a} y \mid x \relatedBy{a} y \in \Gamma \} \cup \{ x : \nec{\alpha}{\varphi} \mid x : \varphi \in \Gamma \}$.
$\labs{\Gamma}$ denotes the set of all labels ocurring in the relational atoms and labelled formulas in $\Gamma$.

We interpret sequents with respect to PDL models using label \emph{valuations} $\valuation$, which are functions from labels to states.
We write $\model, \valuation \models x \relatedBy{a} y$ to mean that $(\valuation(x), \valuation(y)) \in \interp{a}$.
We write $\model, \valuation \models x : \varphi$ to mean $\model, \valuation(x) \models \varphi$.
For a sequent $\sequent{\Gamma}{\Delta}$, denoted by $S$, we write $\model, \valuation \models S$ to mean that $\model, \valuation \models B$ for \emph{some} $B \in \Delta$ whenever $\model, \valuation \models A$ for \emph{all} $A \in \Gamma$.
We write $\model, \valuation \not\models S$ whenever this is not the case, i.e.~when $\model, \valuation \models A$ for all $A \in \Gamma$ and $\model, \valuation \not\models B$ for all $B \in \Delta$.
We say $S$ is \emph{valid}, and write $\models S$, when $\model, \valuation \models S$ for all models $\model$ and valuations $\valuation$ that map each label to some state of $\model$.

\begin{figure}[t!]
  {\smaller
  \begin{gather*}
    \text{(Ax):} \quad
    \begin{prooftree}
      \phantom{\sequent{A}{A}}
        \justifies
      \sequent{A}{A}
    \end{prooftree}
      \qquad
    \text{($\bot$):} \quad
    \begin{prooftree}
      \phantom{\sequent{x : \bot}{}}
        \justifies
      \sequent{x : \bot}{}
    \end{prooftree}
      \qquad
    \text{(WL):} \quad
    \begin{prooftree}
      \sequent{\Gamma}{\Delta}
        \justifies
      \sequent{A, \Gamma}{\Delta}
    \end{prooftree}
      \qquad
    \text{(WR):} \quad
    \begin{prooftree}
      \sequent{\Gamma}{\Delta}
        \justifies
      \sequent{\Gamma}{\Delta, A}
    \end{prooftree}
      \\[0.75em]
    \text{($\wedge$L):} \quad
    \begin{prooftree}
      \sequent{{x : \varphi}, {x : \psi}, \Gamma}{\Delta}
        \justifies
      \sequent{{x : \varphi \wedge \psi}, \Gamma}{\Delta}
    \end{prooftree}
      \qquad
    \text{($\wedge$R):} \quad
    \begin{prooftree}
      \sequent{\Gamma}{\Delta, {x : \varphi}}
        \quad
      \sequent{\Gamma}{\Delta, {x : \psi}}
        \justifies
      \sequent{\Gamma}{\Delta, {x : \varphi \wedge \psi}}
    \end{prooftree}
      \\[0.75em]
    \text{($\vee$L):} \quad
    \begin{prooftree}
      \sequent{{x : \varphi}, \Gamma}{\Delta}
        \quad
      \sequent{{x : \psi}, \Gamma}{\Delta}
        \justifies
      \sequent{{x : \varphi \vee \psi}, \Gamma}{\Delta}
    \end{prooftree}
      \qquad
    \text{($\vee$R):} \quad
    \begin{prooftree}
      \sequent{\Gamma}{\Delta, {x : \varphi}, {x : \psi}}
        \justifies
      \sequent{\Gamma}{\Delta, {x : \varphi \vee \psi}}
    \end{prooftree}
      \\[0.75em]
    \text{($\rightarrow$L):} \quad
    \begin{prooftree}
      \sequent{\Gamma}{\Delta, {x : \varphi}}
        \quad
      \sequent{{x : \psi}, \Gamma}{\Delta}
        \justifies
      \sequent{{x : \varphi \rightarrow \psi}, \Gamma}{\Delta}
    \end{prooftree}
      \qquad
    \text{($\rightarrow$R):} \quad
    \begin{prooftree}
      \sequent{{x : \varphi}, \Gamma}{\Delta, {x : \psi}}
        \justifies
      \sequent{\Gamma}{\Delta, {x : \varphi \rightarrow \psi}}
    \end{prooftree}
      \\[0.75em]
    \text{($\square$L):} \quad
    \begin{prooftree}
      \sequent{y : \varphi, \Gamma}{\Delta}
        \justifies
      \sequent{{x : \nec{a}{\varphi}}, {x \relatedBy{a} y}, \Gamma}{\Delta}
    \end{prooftree}
      \qquad
    \text{($\square$R):} \quad
    \begin{prooftree}
      \sequent{{x \relatedBy{a} y}, \Gamma}{\Delta, {y : \varphi}}
        \justifies
      \sequent{\Gamma}{\Delta, {x : \nec{a}{\varphi}}}
        \using {\text{\smaller{($y$ fresh)}}}
    \end{prooftree}
      \\[0.75em]
    \text{($\compSymb$L):} \quad
    \begin{prooftree}
      \sequent{{x : \nec{\alpha}{\nec{\beta}{\varphi}}}, \Gamma}{\Delta}
        \justifies
      \sequent{{x : \nec{\comp{\alpha}{\beta}}{\varphi}}, \Gamma}{\Delta}
    \end{prooftree}
      \qquad
    \text{($\compSymb$R):} \quad
    \begin{prooftree}
      \sequent{\Gamma}{\Delta, {x : \nec{\alpha}{\nec{\beta}{\varphi}}}}
        \justifies
      \sequent{\Gamma}{\Delta, {x : \nec{\comp{\alpha}{\beta}}{\varphi}}}
    \end{prooftree}
      \\[0.75em]
    \text{($\choiceSymb$L):} \quad
    \begin{prooftree}
      \sequent{{x : \nec{\alpha}{\varphi}}, {x : \nec{\beta}{\varphi}}, \Gamma}{\Delta}
        \justifies
      \sequent{{x : \nec{\choice{\alpha}{\beta}}{\varphi}}, \Gamma}{\Delta}
    \end{prooftree}
      \qquad
    \text{($\choiceSymb$R):} \quad
    \begin{prooftree}
      \sequent{\Gamma}{\Delta, {x : \nec{\alpha}{\varphi}}}
        \quad
      \sequent{\Gamma}{\Delta, {x : \nec{\beta}{\varphi}}}
        \justifies
      \sequent{\Gamma}{\Delta, {x : \nec{\choice{\alpha}{\beta}}{\varphi}}}
    \end{prooftree}
      \\[0.75em]
    \text{($?$L):} \quad
    \begin{prooftree}
      \sequent{\Gamma}{\Delta, {x : \varphi}}
        \quad
      \sequent{{x : \psi}, \Gamma}{\Delta}
        \justifies
      \sequent{{x : \nec{\test{\varphi}}{\psi}}, \Gamma}{\Delta}
    \end{prooftree}
      \qquad
    \text{($?$R):} \quad
    \begin{prooftree}
      \sequent{{x : \varphi}, \Gamma}{\Delta, {x : \psi}}
        \justifies
      \sequent{\Gamma}{\Delta, {x : \nec{\test{\varphi}}{\psi}}}
    \end{prooftree}
      \\[0.75em]
    \text{($\ast$L):} \quad
    \begin{prooftree}
      \sequent{{x : \varphi}, {x : \nec{\alpha}{\nec{\iter{\alpha}}{\varphi}}}, \Gamma}{\Delta}
        \justifies
      \sequent{{x : \nec{\iter{\alpha}}{\varphi}}, \Gamma}{\Delta}
    \end{prooftree}
      \qquad
    \text{($\ast$R):} \quad
    \begin{prooftree}
      \sequent{\Gamma}{\Delta, {x : \varphi}}
        \quad
      \sequent{\Gamma}{\Delta, {x : \nec{\alpha}{\nec{\iter{\alpha}}{\varphi}}}}
        \justifies
      \sequent{\Gamma}{\Delta, {x : \nec{\iter{\alpha}}{\varphi}}}
    \end{prooftree}
      \\[0.75em]
    \text{(Subst):} \quad
    \begin{prooftree}
      \sequent{\Gamma}{\Delta}
        \justifies
      \sequent{\Gamma \subst{x}{y}}{\Delta \subst{x}{y}}
    \end{prooftree}
      \qquad
    \text{(Cut):} \quad
    \begin{prooftree}
      \sequent{\Gamma}{\Delta, A}
        \quad
      \sequent{A, \Sigma}{\Pi}
        \justifies
      \sequent{\Gamma, \Sigma}{\Delta, \Pi}
    \end{prooftree}
  \end{gather*}}
  \caption{Inference rules of {\InfinitarySystem}}
  \label{fig:PDLRules}
\end{figure}

The sequent calculus {\InfinitarySystem} is defined by the inference rules in \cref{fig:PDLRules}.
A \emph{pre-proof} is a possibly infinite derivation tree built from these inference rules.

\begin{definition}[Pre-proof]
  A pre-proof is a possibly infinite (i.e.~non-well-found\-ed) derivation tree formed from inference rules. A path in a pre-proof is a possibly infinite sequence of sequents $s_0, s_1, \ldots (, s_n)$ such that $s_0$ is the root sequent of the proof, and $s_{i+1}$ is a premise of $s_i$
  for each $i < n$.
\end{definition}

Not all pre-proofs derive sound judgements.

\begin{example}
\label{eg:InvalidPreproof}
  The following pre-proof derives an invalid sequent.
  \begin{gather*}
    \begin{prooftree}
      \prooftree
        \prooftree
            \leadsto
          \sequent{}{x : \nec{\iter{a}}{p}}
        \endprooftree
          \using {\text{\smaller{(WR)}}}
          \justifies
        \sequent{}{x : \nec{\iter{a}}{p}, x : p}
      \endprooftree
      \prooftree
        \prooftree
            \leadsto
          \sequent{}{x : \nec{\iter{a}}{p}}
        \endprooftree
          \using {\text{\smaller{(WR)}}}
          \justifies
        \sequent{}{x : \nec{\iter{a}}{p}, x : \nec{a}{\nec{\iter{a}}{p}}}
      \endprooftree
        \using {\text{\smaller{($\ast$R)}}}
        \justifies
      \sequent{}{x : \nec{\iter{a}}{p}}
    \end{prooftree}
  \end{gather*}
  Note that, since our sequents consist of \emph{sets} of formulas, each instance of the ($\ast$R) rule incorporates a contraction
\end{example}

To distinguish pre-proofs deriving valid sequents, we define the notion of a trace through a pre-proof.
Traces consist of trace values, which (uniquely) identify particular modalities within labelled formulas.
$\vec{\alpha}_n$ denotes a sequence $\alpha_1, \ldots, \alpha_n$, and $\emptysequence$ denotes the empty sequence.
We sometimes omit the subscript indicating length, writing $\vec{\alpha}$, when irrelevant or evident from the context.

\begin{definition}[Trace Value]
  A \emph{trace value} $\tau$ is a tuple $(x, \vec{\alpha}_{n}, \beta, \varphi)$ consisting of a label $x$, a (possibly empty) sequence $\vec{\alpha}_{n}$ of $n$ programs, a program $\beta$, and a formula $\varphi$.
  We call $\vec{\alpha}$ the \emph{spine} of $\tau$, and $\beta$ the \emph{focus} of $\tau$.
  We write $\nec{\gamma}{\tau}$ for the trace value $(x, \gamma \cdot \vec{\alpha}_{n}, \beta, \varphi)$, and $y : \tau$ for the trace value $(y, \vec{\alpha}_{n}, \beta, \varphi)$.
  In an abuse of notation we also use $\tau$ to denote the corresponding labelled formula $x : \nec{\alpha_1}{\ldots \nec{\alpha_n}{\nec{\iter{\beta}}{\varphi}}}$.

\end{definition}

Trace values in the conclusion of an inference rule are related to trace values in its premises as follows.

\begin{definition}[Trace Pairs]
  Let $\tau$ and $\tau'$ be trace values, with sequents $\sequent{\Gamma}{\Delta}$ and $\sequent{\Gamma'}{\Delta'}$ (respectively denoted by $s$ and $s'$) the conclusion and a premise, respectively, of an inference rule $r$; we say that $(\tau, \tau')$ is a trace pair for $(s, s')$ when $\tau \in \Delta$ and $\tau' \in \Delta'$ and the following conditions hold.
  \begin{enumerate}[label={(\arabic*)}]
    \item
    If $\tau$ is the principal formula of the rule instance, then $\tau'$ is its immediate ancestor and moreover if the rule is an instance of:
    \begin{description}
      \item[{\normalfont{($\square$R)}}]
      then $\tau = x : \nec{a}{\tau'}$, where $x$ is the label of the principal formula;
      \item[{\normalfont{($?$R)}}]
      then $\tau = \nec{\test{\varphi}}{\tau'}$;
      \item[{\normalfont{($;$R)}}]
      then $\tau = \nec{\comp{\alpha}{\beta}}{\tau''}$ and $\tau' = \nec{\alpha}{\nec{\beta}{\tau''}}$ for some trace value $\tau''$;
      \item[{\normalfont{($\cup$R)}}]
      then there is some $\tau''$ such that:
        $\tau = \nec{\choice{\alpha}{\beta}}{\tau''}$;
        $\tau' = \nec{\alpha}{\tau''}$ if $s'$ is the left-hand premise; and
        $\tau' = \nec{\beta}{\tau''}$ if $s'$ is the right-hand premise;
      \item[{\normalfont{($\ast$R)}}]
      then $\tau = \nec{\iter{\alpha}}{\tau'}$ if $s'$ is the left-hand premise, and $\tau' = \nec{\alpha}{\tau}$ if $s'$ is the right-hand premise. 
    \end{description}
    \item
    If $\tau$ is not the principal formula of the rule then $\tau = x : \tau'$ if the rule is an instance of (Subst) and $x$ is the label substituted, and $\tau = \tau$ otherwise.
  \end{enumerate}
  If $\tau$ is the principal formula of the rule instance and the spine of $\tau$ is empty, then we say that the trace pair is \emph{progressing}.
\end{definition}

\noindent
Notice that when a trace pair is progressing for $(s, s')$, it is necessarily the case that the corresponding rule is an instance of ($\ast$R) and that $s'$ is the right-hand premise (although, not necessarily vice versa).

Traces along paths in a pre-proof consist of consecutive pairs of trace values for each corresponding step of the path.

\begin{definition}[Trace]
  \label{def:Trace}
  A \emph{trace} is a (possibly infinite) sequence of trace values.
  We say that a trace $\tau_1, \tau_2, \ldots (, \tau_n)$ \emph{follows} a path $s_1, s_2, \ldots (, s_m)$ in a pre-proof when there exists some $k \geq 0$ such that each consecutive pair of trace values $(\tau_{i}, \tau_{i+1})$ is a trace pair for $(s_{i+k}, s_{i+k+1})$; when $k = 0$, we say that the trace \emph{covers} the path.
  We say that the trace \emph{progresses} at $i$ if $(\tau_{i}, \tau_{i+1})$ is progressing, and say the trace is infinitely progressing if it progresses at infinitely many points.
\end{definition}

Proofs are pre-proofs that satisfy a well-formedness condition, called the global trace condition.

\begin{definition}[Infinite Proof]
  A {\InfinitarySystem} \emph{proof} is a pre-proof in which every infinite path is followed by some infinitely progressing trace.
\end{definition}

\begin{figure}[t!]
  \vspace{-2em}
  {\smaller\begin{gather*}
    \begin{prooftree}
      \prooftree
        \prooftree
          \prooftree
              \justifies
            \sequent{x : \nec{\iter{a}}{\varphi}}{x : \nec{\iter{a}}{\varphi}}
              \using {\text{\smaller{(Ax)}}}
          \endprooftree
          \prooftree
            \prooftree
              \prooftree
                \prooftree
                  \prooftree
                    \begin{gathered}
                      \tikz \coordinate (bud) at (0,0.5em) ;
                        \\[-1em]
                      \sequent{x : \nec{\iter{a}}{\varphi}}{x : \nec{\iter{{\color{blue}{\underline{a}}}}}{\nec{\iter{a}}{\varphi}}}
                    \end{gathered}
                      \justifies
                    \sequent{y : \nec{\iter{a}}{\varphi}}{y : \nec{\iter{{\color{blue}{\underline{a}}}}}{\nec{\iter{a}}{\varphi}}}
                      \using {\text{\smaller{(Subst)}}}
                  \endprooftree
                    \justifies
                  \sequent{x : \varphi, y : \nec{\iter{a}}{\varphi}}{y : \nec{\iter{{\color{blue}{\underline{a}}}}}{\nec{\iter{a}}{\varphi}}}
                    \using {\text{\smaller{(WL)}}}
                \endprooftree
                  \justifies
                \sequent{x \relatedBy{a} y, x : \varphi, x : \nec{a}{\nec{\iter{a}}{\varphi}}}{y : \nec{\iter{{\color{blue}{\underline{a}}}}}{\nec{\iter{a}}{\varphi}}}
                  \using {\text{\smaller{($\square$L)}}}
              \endprooftree
                \justifies
              \sequent{x : \varphi, x : \nec{a}{\nec{\iter{a}}{\varphi}}}{x : \nec{a}{\nec{\iter{{\color{blue}{\underline{a}}}}}{\nec{\iter{a}}{\varphi}}}}
                \using {\text{\smaller{($\square$R)}} \tikz \coordinate (right-edge) at (1em,0);}
            \endprooftree
              \justifies
            \sequent{x : \nec{\iter{a}}{\varphi}}{x : \nec{a}{\nec{\iter{{\color{blue}{\underline{a}}}}}{\nec{\iter{a}}{\varphi}}}}
              \using{\text{\smaller{($\ast$L)}}}
          \endprooftree
            \justifies
              \sequent{x : \nec{\iter{a}}{\varphi}}{x : \nec{{\color{blue}{\iter{\underline{a}}}}}{\nec{\iter{a}}{\varphi}}}
              \tikz \coordinate (companion) at (0.5em,0.25em) ;
            \using {\text{\smaller{($\ast$R)}}}
          \endprooftree
            \justifies
          \sequent{x : \nec{\iter{a}}{\varphi}}{x : \nec{\comp{\iter{a}}{\iter{a}}}{\varphi}}
            \using {\text{\smaller{($\compSymb$R)}}}
        \endprooftree
          \justifies
        \sequent{}{x : \nec{\iter{a}}{\varphi} \rightarrow \nec{\comp{\iter{a}}{\iter{a}}}{\varphi}}
          \using {\text{\smaller{($\rightarrow$R)}}}
    \end{prooftree}
    \begin{tikzpicture}[overlay]
      \draw[arrow,dashed] (bud) -- ++(0,1em) -| (right-edge) |- (companion) [->];
    \end{tikzpicture}
  \end{gather*}}
  \vspace{-1em}
  \caption{Representation of a {\InfinitarySystem} proof of $\nec{\iter{a}}{\varphi} \rightarrow \nec{\comp{\iter{a}}{\iter{a}}}{\varphi}$.}
  \label{fig:Example1}
\end{figure}

\begin{figure}[t!]
  \vspace{-1em}
  \scalebox{0.875}{\begin{minipage}{\textwidth}
  {\smaller\begin{gather*}
    \kern 1em
    \begin{prooftree}
      \prooftree
        \prooftree
          \prooftree
              \justifies
            \sequent{x : \varphi}{x : \varphi}
              \using {\text{\smaller{(Ax)}}}
          \endprooftree
            \justifies
          \sequent{x : \varphi, x : \nec{\iter{a}}{\nec{\iter{a}}{\varphi}}}{x : \varphi}
            \using {\text{\smaller{(WL)}}}
        \endprooftree
          \justifies
        \sequent{x : \nec{\iter{a}}{\varphi}}{x : \varphi}
          \using {\text{\smaller{($\ast$L)}}}
      \endprooftree
      \prooftree
        \begin{gathered}[b]
          \tikz \coordinate (bud1) at (0,-0.5em) ; \\
          \sequent{x : \nec{\iter{a}}{\varphi}}{x : \nec{\iter{\color{blue}{\underline{(\iter{a})}}}}{\varphi}}
        \end{gathered}
        \prooftree
          \prooftree
            \prooftree
              \prooftree
                \prooftree
                  \begin{gathered}[b]
                    \tikz \coordinate (bud2) at (0,-0.5em) ; \\
                    \sequent{x : \nec{\iter{a}}{\varphi}}{x : \nec{\iter{\color{red}{\overline{a}}}}{\nec{\iter{\color{blue}{\underline{(\iter{a})}}}}{\varphi}}}
                  \end{gathered}
                    \justifies
                  \sequent{y : \nec{\iter{a}}{\varphi}}{y : \nec{\iter{\color{red}{\overline{a}}}}{\nec{\iter{\color{blue}{\underline{(\iter{a})}}}}{\varphi}}}
                    \using {\text{\smaller{(Subst)}}}
                \endprooftree
                  \justifies
                \sequent{x : \varphi, y : \nec{\iter{a}}{\varphi}}{y : \nec{\iter{\color{red}{\overline{a}}}}{\nec{\iter{\color{blue}{\underline{(\iter{a})}}}}{\varphi}}}
                  \using {\text{\smaller{(WL)}}}
              \endprooftree
                \justifies
              \sequent{x \relatedBy{a} y, x : \varphi, x : \nec{a}{\nec{\iter{a}}{\varphi}}}{y : \nec{\iter{\color{red}{\overline{a}}}}{\nec{\iter{\color{blue}{\underline{(\iter{a})}}}}{\varphi}}}
                \using {\text{\smaller{($\square$L)}}}
            \endprooftree
              \justifies
            \sequent{x : \varphi, x : \nec{a}{\nec{\iter{a}}{\varphi}}}{x : \nec{a}{\nec{\iter{\color{red}{\overline{a}}}}{\nec{\iter{\color{blue}{\underline{(\iter{a})}}}}{\varphi}}}}
              \using {\text{\smaller{($\square$R)}}}
          \endprooftree
            \justifies
          \sequent{x : \nec{\iter{a}}{\varphi}}{x : \nec{a}{\nec{\iter{\color{red}{\overline{a}}}}{\nec{\iter{\color{blue}{\underline{(\iter{a})}}}}{\varphi}}}}
            \using {\text{\smaller{($\ast$L)}}}
        \endprooftree
          \justifies
        \sequent{x : \nec{\iter{a}}{\varphi}}{x : \nec{\iter{\color{red}{\overline{a}}}}{\nec{\iter{\color{blue}{\underline{(\iter{a})}}}}{\varphi}}}
        \tikz \coordinate (companion2) at (0.25em,0.25em) ;
          \using {\text{\smaller{($\ast$R)}} \; \ddag}
      \endprooftree
        \justifies
      \tikz \coordinate (companion1) at (-0.25em,0.25em) ;
      \sequent{x : \nec{\iter{a}}{\varphi}}{x : \nec{\iter{\color{blue}{\underline{(\iter{a})}}}}{\varphi}}
        \using {\text{\smaller{($\ast$R)}} \; \dag}
    \end{prooftree}
    \begin{tikzpicture}[overlay]
      \draw[arrow,dashed] (bud1) -- ++(0,4em) -- ++(-21em,0) |- (companion1) [->];
      \draw[arrow,dashed] (bud2) -- ++(0,1em) -- ++(12.5em,0) |- (companion2) [->];
    \end{tikzpicture}
  \end{gather*}}
\end{minipage}}
  \vspace{-1em}
  \caption{Representation of a {\InfinitarySystem} proof of $\sequent{x : \nec{\iter{a}}{\varphi}}{x : \nec{\iter{(\iter{a})}}{\varphi}}$.}
  \label{fig:Example2}
\end{figure}

\begin{example}
  \Cref{fig:Example1} shows a finite representation of a {\InfinitarySystem} proof of the formula $\nec{\iter{a}}{\varphi} \rightarrow \nec{\comp{\iter{a}}{\iter{a}}}{\varphi}$.
  The full infinite proof can be obtain by unfolding the cycle an infinite number of times.
  An infinitely progressing trace following the (unique) infinite path in this proof is indicated by the underlined programs highlighted in blue, which denote the focus of the trace value in each sequent.
  The progression point is the (only) instance of the ($\ast$R) rule.

  \Cref{fig:Example2} shows a finite representation of a {\InfinitarySystem} proof of the sequent $\sequent{x : \nec{\iter{a}}{\varphi}}{x : \nec{\iter{(\iter{a})}}{\varphi}}$.
  This proof is more complex than that of \cref{fig:Example1}, and involves two overlapping cycles.
  This proof contains more than one infinite path (in fact, it contains an infinite number of infinite paths).
  However, they fall into three categories: 
  \begin{enumerate*}[label={(\arabic*)}]
    \item
    those that eventually traverse only the upper cycle;
    \item
    those that eventually traverse only the lower cycle; and
    \item
    those that traverse both cycles infinitely often.
  \end{enumerate*}
  Infinite paths of the first variety have an infinitely progressing trace indicated by the overlined programs highlighted in red.
  The progression point is the upper instance of ($\ast$R) rule, marked by $(\ddag)$.
  The remaining infinite paths have a trace 
  indicated by the underlined programs highlighted in blue.
  This trace does not progress around the upper cycle (for those paths that traverse it), but does progress once around each lower cycle at the instance of the ($\ast$R) rule marked by $(\dag)$.
  Since these paths traverse this lower cycle infinitely often, the trace is infinitely progressing.
\end{example}

\begin{remark}
  The notion of trace in the system for Kleene Algebra of Das and Pous \cite{DasPous17,DasPous18} appears simpler than ours: a sequence of formulas (on the left) connected by ancestry, with such a trace being valid if it is principal for a (left) unfolding rule infinitely often.
  In fact, we can show that our definition of trace is equivalent to an analogous formulation of this notion for our system.
  However, our definition allows for a direct, semantic proof of soundness via infinite descent.
  In contrast, the soundness proof in \cite{DasPous18} relies on cut-admissibility and an inductive proof-theoretic argument for the soundness of the cut-free fragment.
  It is unclear that a similar technique can be used to show soundness of the cut-free fragment of our system.  
  Furthermore, the cut-free fragment of the system of Das and Pous is notable in that it admits a simpler trace condition than the full system: namely, that every infinite path is fair for the (left) unfolding rule \cite[prop.~8]{DasPous18}.
  Our system does not satisfy this property, due to the ability to perform contraction and weakening, as demonstrated in \cref{eg:InvalidPreproof}.
\end{remark}

The proof system is sound since, for invalid sequents, we can map traces to decreasing sets of counter-examples in (finite) models.

A path in a model $\model$ is a sequence of states $s_1, \ldots, s_n$ in $\model$ such that each successive pair of states satisfies $(s_i, s_{i+1}) \in \interp[\model]{a}$ for some $a$.
A path in $\model$ is called \emph{loop-free} if it does not contain any repeated states.
If $\vec{s}$ and $\vec{s'}$ are paths in $\model$, we write $\vec{s} \prefix \vec{s'}$ to denote that $\vec{s}$ is a prefix of $\vec{s'}$.

An $m$-partition of a path $\vec{s}_n$ is a sequence of $m$ increasing indices $k_1 \leq \ldots \leq k_m \leq n$.
A path in $\model$ for a trace value $\tau = (x, \vec{\alpha}_{n}, \beta, \varphi)$ with respect to a valuation $\valuation$ is a path $\vec{s}_m$ in $\model$ with $s_1 = \valuation(x)$ having an $n$-partition $k_1, \ldots, k_{n}$ satisfying $(s_{k_i}, s_{k+1}) \in \interp[\model]{\alpha_{i+1}}$ for each $0 \leq i < n$ and $(s_{k_n}, s_m) \in \interp[\model]{\iter{\beta}}$, where we take $k_0 = 1$ (i.e.~$s_{k_0} = s_1$).
The $n$-partition $k_1, \ldots, k_{n}$ is called a partition of $\vec{s}_m$ for $\tau$.
A counter-example in $\model$ for a trace value $\tau$ at $\valuation$ is simply a path $\vec{s}_m$ in $\model$ for $\tau$ w.r.t.~$\valuation$ such that $\model, s_m \not\models \varphi$.

A given path in $\model$ for $\tau$ at $\valuation$ can, in general, have many different partitions.
A partition $\vec{k}_n$ of a path $\vec{s}_m$ for $\tau$ at $\valuation$ is called \emph{maximal} if the length of its final segment $s_{k_n}, \ldots, s_m$ is maximal among all such partitions.
We define the \emph{weight} of a path $\vec{s}$ in $\model$ for $\tau$ at $\valuation$ to be the length of the final segments of its maximal partition(s).
We denote this by $\sizeOf[(\model, \valuation)]{\vec{s}}{\tau}$.
If $\Pi$ is a set of paths in $\model$ for $\tau$ at $\valuation$, we define the \emph{measure} of $\Pi$, denoted $\pathsetMeasure{\Pi}$, to be the multiset of weights of the paths it contains; that is $\pathsetMeasure{\Pi} = \{ \sizeOf[(\model, \valuation)]{\vec{s}}{\tau} \mid \vec{s} \in \Pi \}$.

The measure for trace values in a model $\model$ at a valuation $\valuation$, then, is simply the measure of the set of all of its `nearest' counter-examples.

\begin{definition}[Trace Value Measure]
\label{def:TraceValueMeasure}
  Let $\counterEgs{\model}{\valuation}{\tau}$ denote the set of all loop-free counter-examples $\vec{s}$ in $\model$ for $\tau$ at $\valuation$ such that there is no counter-example $\vec{s'}$ in $\model$ for $\tau$ at $\valuation$ with $\vec{s'} \prefix \vec{s}$.
  The measure of $\tau$ in $\model$ at $\valuation$ is defined as $\measureOf{\model}{\valuation}{\tau} = \pathsetMeasure{\counterEgs{\model}{\valuation}{\tau}}$.
\end{definition}

For finitely branching models $\model$, it is clear that trace value measures are always finite.
Note that finite multisets $M$ of elements of a well-ordering can be well-ordered using, e.g., the Dershowitz-Manna ordering $<_{\text{DM}}$ \cite{DershowitzManna79}.
This means that we have the following property.

\begin{restatable}[Descending Counter-models]{lemma}{DescendingCountermodels}
\label{lem:DescendingCountermodels}
  Let $\sequent{\Gamma}{\Delta}$, denoted $S$, be the conclusion of an instance of an inference rule, and suppose there is a finitely branching model $\model$ and  valuation $\valuation$ such that $\model, \valuation \not\models S$, then there is a premise $\sequent{\Gamma'}{\Delta'}$ of the rule instance, denoted $S'$, and a valuation $\valuation'$ such that $\model, \valuation' \not\models S'$ and for each trace pair $(\tau, \tau')$ for $(S, S')$, $\measureOf{\model}{\valuation'}{\tau'} \leq_{\text{DM}} \measureOf{\model}{\valuation}{\tau}$ and also $\measureOf{\model}{\valuation'}{\tau'} <_{\text{DM}} \measureOf{\model}{\valuation}{\tau}$ if $(\tau, \tau')$ is progressing.
\end{restatable}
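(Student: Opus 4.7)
The plan is case analysis on the rule instance with conclusion $S$. For each rule, given a finitely branching $\model$ and valuation $\valuation$ with $\model, \valuation \not\models S$, we exhibit a premise $S'$ and valuation $\valuation'$ with $\model, \valuation' \not\models S'$, then verify for every trace pair $(\tau, \tau')$ of $(S, S')$ the required comparison $\measureOf{\model}{\valuation'}{\tau'} \leq_{\text{DM}} \measureOf{\model}{\valuation}{\tau}$, strict when the pair is progressing.

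Premise and valuation are chosen by the semantic inversion of the rule. For single-premise rules, take the unique premise; for branching rules, pick a premise whose right-hand side is falsified by the relevant clause of \cref{def:Semantics}. In particular, failure of $x : \nec{\iter{\alpha}}{\varphi}$ at $\valuation$ for ($\ast$R) means either $\model, \valuation(x) \not\models \varphi$ (left premise fails) or there is an $\alpha$-successor of $\valuation(x)$ at which $\nec{\iter{\alpha}}{\varphi}$ fails (right premise fails); we go right exactly when left is not falsified, which also ensures $\model, \valuation(x) \models \varphi$ and rules out the degenerate weight-$1$ counter-example that would otherwise obstruct strict decrease. The valuation $\valuation'$ coincides with $\valuation$ except for ($\square$R), where $\valuation$ is extended at the fresh label to a witnessing $a$-successor, and for (Subst), where it is reindexed. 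Cut is handled by selecting the premise falsified by the truth-value of the cut formula at $\valuation$.

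The measure comparison splits by the nature of the trace pair. For non-principal $\tau$, either $\tau = \tau'$ with $\valuation'$ agreeing with $\valuation$ on $\labs{\tau}$, or the rule is (Subst); in both sub-cases the sets of loop-free counter-examples and their weights are preserved, so the measures coincide. For principal but non-progressing trace pairs, as arising in ($\square$R), ($?$R), ($;$R), ($\cup$R), and in ($\ast$R) when the principal $\tau$ has non-empty spine, the mutation of $\tau$ into $\tau'$ transfers an initial program from the principal connective into the enlarged spine; one verifies that this induces a weight-preserving bijection between $\counterEgs{\model}{\valuation}{\tau}$ and $\counterEgs{\model}{\valuation'}{\tau'}$, extending the maximal partition by a single entry and leaving the final $\iter{\beta}$-segment intact.

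The only case producing a strict decrease is the progressing trace pair at the right premise of ($\ast$R), where $\tau = (x, \emptysequence, \alpha, \varphi)$, $\tau' = (x, \alpha, \alpha, \varphi)$, and $\model, \valuation(x) \models \varphi$ by our choice of premise. Every loop-free counter-example for $\tau$ has length $m \geq 2$ and weight $m$ (the partition is trivially $k_0 = 1$), while the same path viewed as a counter-example for $\tau'$ has weight $m - k_1 + 1 < m$ under its maximal partition, where $k_1 \geq 2$ is the least index with $(s_1, s_{k_1}) \in \interp{\alpha}$. The main obstacle is to make this correspondence precise: one must verify that $\vec{s} \mapsto \vec{s}$ determines a bijection between $\counterEgs{\model}{\valuation}{\tau}$ and $\counterEgs{\model}{\valuation}{\tau'}$ respecting loop-freeness and the prefix-minimal condition on both sides, and then conclude the strict Dershowitz--Manna comparison from the fact that each weight in the former multiset is replaced by a strictly smaller weight in the latter.
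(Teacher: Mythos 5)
Your proposal follows the paper's proof in all structural respects: the same case analysis, the same selection of premise and of $\valuation'$ (extended at the fresh label for ($\square$R), reindexed for (Subst), and choosing the right premise of ($\ast$R) exactly when the left one is not falsified, so that $\model, \valuation(x) \models \varphi$), the same trivial treatment of non-principal trace pairs, and the same localisation of the strict decrease at the progressing right premise of ($\ast$R), where $m \geq 2$ and the maximal partition for $\tau'$ forces a shorter final segment. Your closing remark that $k_1 \geq 2$ is asserted with essentially the same (terse) justification the paper itself gives, so I will not press on it here.

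The step that would fail as written is your claim that the principal, non-progressing cases are handled by a \emph{weight-preserving bijection} between $\counterEgs{\model}{\valuation}{\tau}$ and $\counterEgs{\model}{\valuation'}{\tau'}$, and likewise that the progressing case is handled by a bijection. Neither bijectivity nor weight preservation holds in general. For ($\cup$R), with $\tau = \nec{\choice{\alpha}{\beta}}{\tau''}$ and $\tau' = \nec{\alpha}{\tau''}$, a minimal counter-example for $\tau$ whose only witnessing partitions go through $\beta$ admits no partition for $\tau'$ at all, so the correspondence is not surjective onto $\counterEgs{\model}{\valuation}{\tau}$; and a path may admit a partition for $\tau$ with a strictly longer final segment than any partition for $\tau'$, so weights can strictly \emph{increase} when passing from premise to conclusion. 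For ($\square$R), counter-examples for $\tau$ that proceed through an $a$-successor other than the chosen witness $s$ are not of the form $s \cdot \vec{s}$, so again surjectivity fails. What is true, and is all that the Dershowitz--Manna comparison requires, is an \emph{injection} from $\counterEgs{\model}{\valuation'}{\tau'}$ into $\counterEgs{\model}{\valuation}{\tau}$ under which weights do not decrease (and strictly increase in the progressing case). The paper isolates exactly this as a separate proposition about $\leq_{\text{DM}}$ and $<_{\text{DM}}$ on finite multisets, and then exhibits the injection case by case (the identity, or prefixing by $s$), checking only the one-directional inclusion of counter-example sets and the one-directional weight inequality. You should weaken your principal cases to this form; once you do, your argument coincides with the paper's, modulo also verifying that the maps respect the prefix-minimality condition built into $\counterEgs{\model}{\valuation}{\tau}$.
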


This entails the soundness of our proof system, since PDL has the finite model property \cite[Thm.~3.2]{FischerLadner1979}.
This property states that, if a PDL formula is satisfiable, then it is satisfiable in a finite (and thus finitely branching) model.
Thus, if a sequent is not valid then there is a finitely branching model that falsifies it.
If a {\InfinitarySystem} proof $\mathcal{P}$ were to derive an invalid sequent, then by \cref{lem:DescendingCountermodels} it would contain an infinite path $\sequent{\Gamma_1}{\Delta_1}, \sequent{\Gamma_2}{\Delta_2}, \ldots$ for which there exists a finite model $\model$ and a matching sequence of valuations $\valuation_1, \valuation_2, \ldots$ that invalidate each sequent in the path.
Moreover, these invalidating valuations ensure that the measures of the trace values in any trace pair along the path is decreasing, and strictly so for progressing trace pairs.
However, since $\mathcal{P}$ is a proof, it satisfies the global trace condition.
This means that there would be an infinitely progressing trace following the path $\sequent{\Gamma_1}{\Delta_1}, \sequent{\Gamma_2}{\Delta_2}, \ldots$ and thus we would be able to construct an infinitely descending chain of (finite) trace value measures.
Because the set of finite trace value measures is well-founded, this is impossible and so the derived sequent must in fact be valid.

\begin{restatable}[Soundness]{theorem}{Soundness}
  \label{thm:Soundness:Inf}
  {\InfinitarySystem} derives only valid sequents.
\end{restatable}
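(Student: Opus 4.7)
The plan is to argue by contradiction via infinite descent, formalising the intuition already sketched in the paragraph preceding the theorem. Suppose $\mathcal{P}$ is a {\InfinitarySystem} proof whose conclusion $S_0$ is invalid. By the finite model property of PDL there exist a \emph{finite} (hence finitely branching) model $\model$ and a valuation $\valuation_0$ with $\model, \valuation_0 \not\models S_0$, so that the hypotheses of \cref{lem:DescendingCountermodels} are applicable at the root of $\mathcal{P}$.

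Next, I would use \cref{lem:DescendingCountermodels} together with dependent choice to build simultaneously an infinite path $S_0, S_1, S_2, \ldots$ through $\mathcal{P}$ and a matching sequence of valuations $\valuation_0, \valuation_1, \valuation_2, \ldots$ such that (i)~$\model, \valuation_i \not\models S_i$ for every $i$, and (ii)~for every trace pair $(\tau, \tau')$ for $(S_i, S_{i+1})$ we have $\measureOf{\model}{\valuation_{i+1}}{\tau'} \leq_{\text{DM}} \measureOf{\model}{\valuation_{i}}{\tau}$, with strict inequality whenever the pair is progressing. The lemma produces, at each step, \emph{some} premise together with a new valuation; since such a premise and valuation always exist when the current sequent is falsified, iterating the choice infinitely yields the desired path.

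Because $\mathcal{P}$ is a proof (not merely a pre-proof), the global trace condition supplies an infinitely progressing trace $\tau_1, \tau_2, \ldots$ that follows this path, starting at some offset $k \geq 0$. The sequence $\measureOf{\model}{\valuation_{k+i}}{\tau_i}$ is then, by clause~(ii) above, non-increasing in the Dershowitz--Manna order and strictly decreasing at each progression point; moreover each member of this sequence is a \emph{finite} multiset of natural numbers, because $\model$ is finitely branching (as observed immediately after \cref{def:TraceValueMeasure}). The existence of infinitely many progression points therefore yields an infinite strictly descending chain in $<_{\text{DM}}$, contradicting its well-foundedness on finite multisets over $\mathbb{N}$.

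The main subtlety I expect is in the construction step: one must take care that the premise selected at stage $i$ witnesses the measure-decrease condition uniformly for \emph{every} trace pair at $(S_i, S_{i+1})$, since at construction time we do not yet know which infinitely progressing trace the global trace condition will hand us. Fortunately \cref{lem:DescendingCountermodels} is stated in exactly this uniform form, so any trace following the constructed path — in particular the one eventually supplied by the global trace condition — automatically gives the descending chain needed to close the argument.
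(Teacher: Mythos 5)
Your proposal is correct and follows essentially the same route as the paper's own proof: finite model property to obtain a finitely branching counter-model, repeated application of \cref{lem:DescendingCountermodels} to build an infinite falsified path with matching valuations, then the global trace condition plus well-foundedness of $<_{\text{DM}}$ on finite multisets to derive the contradiction. Your closing remark about the uniformity of the measure-decrease over all trace pairs is exactly the point that makes the paper's argument go through, so nothing is missing.
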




The \emph{cyclic} system {\CyclicSystem} is obtained by restricting consideration to only those proofs of {\InfinitarySystem} that are \emph{regular}, i.e.~have only a finite number of distinct subtrees.

\begin{definition}[Cyclic Pre-proof]
  A \emph{cyclic} pre-proof is a pair $(P, f)$ consisting of a finite derivation tree $P$ possibly containing open leaves 
  called \emph{buds}, and a function $f$ assigning to each bud an internal node of the tree, called its \emph{companion}, with a syntactically identical sequent.
\end{definition}

We usually represent a cyclic pre-proof as the graph induced by identifying each bud with its companion (as in \cref{fig:Example1,fig:Example2}).
The infinite unfolding of a cyclic pre-proof is the {\InfinitarySystem} pre-proof obtained as the limit of the operation that replaces each bud with a copy of the subderivation concluding with its companion an infinite number of times.
A cyclic proof is a cyclic pre-proof whose infinite unfolding satisfies the global trace condition.
As in other cyclic systems (e.g.~\cite{BrotherstonSimpson11,CohenRowe18,RoweBrotherston17,TellezBrotherston17}) it is decidable whether or not this is the case via a construction involving complementation of B\"{u}chi automata.
This means that decidability of the global trace condition for {\CyclicSystem} pre-proofs is \textsf{PSPACE}-complete.

Since every {\CyclicSystem} is also a {\InfinitarySystem} proof, soundness of the cyclic system is an immediate corollary of \cref{thm:Soundness:Inf}.

\begin{corollary}
  \label{cor:Soundness:Cyclic}
  If $\sequent{\Gamma}{\Delta}$ is derivable in {\CyclicSystem}, then $\sequent{\Gamma}{\Delta}$ is valid.
\end{corollary}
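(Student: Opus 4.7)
The plan is to derive this directly from \cref{thm:Soundness:Inf} by observing that the cyclic system is defined precisely so that cyclic proofs unfold into infinitary proofs of the same root sequent. I would carry out the argument in three short steps.

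First, I would recall the definition: a {\CyclicSystem} proof is a cyclic pre-proof $(P, f)$ whose infinite unfolding, obtained by iteratively replacing each bud with a copy of the subderivation concluding with its companion, is a {\InfinitarySystem} pre-proof that satisfies the global trace condition. By the very definition of cyclic proof, this unfolding is therefore a {\InfinitarySystem} proof.

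Second, I would observe that the root sequent of the unfolding is the root sequent of $P$, since the unfolding procedure only replaces open bud leaves with derivations; the conclusion at the root is never modified. Hence, if $\sequent{\Gamma}{\Delta}$ is the conclusion of the cyclic proof $(P, f)$, then it is also the conclusion of its unfolding.

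Finally, I would invoke \cref{thm:Soundness:Inf} on the unfolding to conclude that $\sequent{\Gamma}{\Delta}$ is valid. There is no real obstacle here, as all the work has been done in establishing soundness of {\InfinitarySystem}; the corollary is just a transfer of that soundness result along the unfolding map, which by construction preserves both the root sequent and the global trace condition.
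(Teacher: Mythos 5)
Your argument is correct and is exactly the paper's route: the paper notes that every {\CyclicSystem} proof unfolds to a {\InfinitarySystem} proof of the same root sequent (the global trace condition being part of the definition of cyclic proof), and then applies \cref{thm:Soundness:Inf}. Your three steps just spell out this one-line observation in more detail.
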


\section{Completeness}

In this section, we give completeness results for our systems.
We show that the full system, {\InfinitarySystem}, is cut-free complete.
On the other hand, if we allow instances of the (Cut) rule, then every valid theorem of PDL has a proof in the cyclic subsystem {\CyclicSystem}.

\subsection{Cut-free Completeness of {\InfinitarySystem}}

We use a standard technique of defining a pre-proof that encodes an exhaustive search for a cut-free proof (as used in, e.g., \cite{BrotherstonSimpson11,CohenRowe18}).
For invalid sequents, this results in a pre-proof from which we can construct a counter-model, using the formulas that occur along a particular path.

A \emph{schedule} $\sigma$ is an enumeration of labelled non-atomic formulas in which each labelled formula occurs infinitely often.
The $i$\textsuperscript{th} element of $\sigma$ is written $\sigma_i$.

\begin{definition}[Search Tree]
\label{def:SearchTree}
  Given a sequent $\sequent{\Gamma}{\Delta}$ and a schedule $\sigma$, we can define an infinite sequence $\vec{\mathcal{D}}$ of open derivations inductively.
  Taking $\mathcal{D}_{0} = \sequent{\Gamma}{\Delta}$, we construct each $\mathcal{D}_{i+1}$ from its predecessor $\mathcal{D}_{i}$ by:
  \begin{enumerate}[wide,noitemsep]
    \item
    firstly closing any open leaves $\sequent{\Gamma'}{\Delta'}$ for which $x : \falsum \in \Gamma$ for some $x$ or $\Gamma \cap \Delta \neq \emptyset$ by applying weakening rules leading to an instance of {\normalfont{($\falsum$)}} or an axiom $\sequent{A}{A}$ for some $A \in \Gamma \cap \Delta$ (thus the antecedent of each remaining open node is disjoint from its consequent);
    \item
    then replacing each remaining open node $\sequent{\Gamma'}{\Delta'}$ in which $\sigma_{i}$ occurs with applications of the rule for which $\sigma_i$ is principal in the following way.
    \begin{itemize}[nosep]
      \item
      If $\sigma_i = x : \nec{a}{\varphi} \in \Delta'$, then we pick a label $y$ not ocurring in $\sequent{\Gamma'}{\Delta'}$, and replace the open node with the following derivation.
      \begin{gather*}
        \begin{prooftree}
          \sequent{x \relatedBy{a} y, \Gamma'}{\Delta', y : \varphi}
            \using {\text{\normalfont{\smaller{($\square$R)}}}}
            \justifies
          \sequent{\Gamma'}{\Delta', x : \nec{a}{\varphi}}
        \end{prooftree}
      \end{gather*}
      \item
      If $\sigma_i = x : \nec{a}{\varphi} \in \Gamma'$ then, letting $\{ y_1, \ldots, y_n \}$ be the set of all $y_i$ such that $x \relatedBy{a} y_i \in \Gamma'$, we replace the open node with the following derivation.
      \begin{gather*}
        \begin{prooftree}
          \prooftree
            \prooftree
              \sequent{x : \nec{a}{\varphi}, \{ y_1 : \varphi, \ldots, y_n : \varphi \}, \Gamma'}{\Delta'}
                \leadsto
              \sequent{x : \nec{a}{\varphi}, \{ y_1 : \varphi, y_2 : \varphi \}, \Gamma'}{\Delta'}
            \endprooftree
              \using {\text{\normalfont{\smaller{($\square$L)}}}}
              \justifies
            \sequent{x : \nec{a}{\varphi}, \{ y_1 : \varphi \}, \Gamma'}{\Delta'}
          \endprooftree
            \using {\text{\normalfont{\smaller{($\square$L)}}}}
            \justifies
          \sequent{x : \nec{a}{\varphi}, \Gamma'}{\Delta'}
        \end{prooftree}
      \end{gather*}
      \item
      In all other cases, we replace the open node with an application of the appropriate rule {\normalfont{($r$)}} as follows, where $\Gamma'_i$ and $\Delta'_i$, $i \in \{1, 2\}$, are the sets of left and right immediate ancestors of $\sigma_i$, respectively, for the appropriate premise.
      \begin{gather*}
        \begin{prooftree}
          \sequent{\Gamma'_1, \Gamma'}{\Delta', \Delta'_1}
            \quad
          (\sequent{\Gamma'_2, \Gamma'}{\Delta', \Delta'_2})
            \using {\text{\normalfont{\smaller{($r$)}}}}
            \justifies          
          \sequent{\Gamma'}{\Delta'}
        \end{prooftree}
      \end{gather*}
    \end{itemize}
  \end{enumerate}
  Since each $\mathcal{D}_{i}$ is a prefix of $\mathcal{D}_{i+1}$, there is a smallest derivation containing each $\mathcal{D}_{i}$ as a prefix.
  We call this derivation a \emph{search tree} for $\sequent{\Gamma}{\Delta}$ (w.r.t.~$\sigma$).
\end{definition}

Notice that search trees do not contain instances of the (Cut) or (Subst) rules.
Moreover, when a given search tree $\mathcal{D}$ is not a valid proof, we may extract from it two sets of labelled formulas and relational atoms that we can use to construct a countermodel.
If $\mathcal{D}$ is not a valid proof, then either it contains an open node to which no schedule element applies or it contains an infinite path that does not satisfy the global trace condition (an \emph{untraceable} branch).
For a search tree $\mathcal{D}$, we say that a pair $(\Gamma, \Delta)$ is a \emph{template induced by $\mathcal{D}$} when either:
\begin{enumerate*}[label={(\roman*)}]
  \item
  $\sequent{\Gamma}{\Delta}$ is an open node of $\mathcal{D}$; or
  \item
  $\Gamma = \bigcup_{i > 0} \Gamma_i$ and $\Delta = \bigcup_{i > 0} \Delta_i$, where $\sequent{\Gamma_1}{\Delta_1}, \sequent{\Gamma_2}{\Delta_2}, \ldots$ is an untraceable branch in $\mathcal{D}$.
\end{enumerate*}
Notice that, due to the construction of search trees, the component sets of a template are necessarily disjoint.
Given a template, we construct a PDL model as follows.

\begin{definition}[Countermodel Construction]
\label{def:CanonicalModel}
  Let $P = (\Gamma, \Delta)$ be a template induced by a search tree.
  The PDL model determined by the template P is given by $\model_{P} = (\Labels, \mathcal{I}_{P})$, where $\mathcal{I}_{P}$ is the following interpretation function:
  \begin{enumerate}
    \item
    $\mathcal{I}_{P}(p) = \{ x \mid x : p \in \Gamma \}$ for each atomic proposition $p$; and
    \item
    $\mathcal{I}_{P}(a) = \{ (x, y) \mid x \relatedBy{a} y \in \Gamma \}$ for each atomic program $a$.
  \end{enumerate}
  We write $\idValuation$ for the valuation defined by $\idValuation(x) = x$ for each label $x$.
\end{definition}

PDL models determined by templates have the following property.

\begin{restatable}{lemma}{CanonicalModels}
\label{lem:CanonicalModel}
  Let $P = (\Gamma, \Delta)$ be a template induced by a search tree.
  Then we have $\model_{P}, \idValuation \models A$ for all $A \in \Gamma$ and $\model_{P}, \idValuation \not\models B$ for all $B \in \Delta$.
\end{restatable}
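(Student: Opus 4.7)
The plan is to dispatch the two kinds of template separately. For an open-node template $(\Gamma, \Delta)$, the search-tree construction guarantees that no schedule element applies at that node, so every labelled formula in $\Gamma \cup \Delta$ is either a relational atom or an atomic proposition; the axiom-closing phase moreover yields $\Gamma \cap \Delta = \emptyset$ and $x : \falsum \notin \Gamma$. Both halves of the claim then follow immediately from the definition of $\mathcal{I}_P$. For an untraceable-branch template, the argument proceeds by induction on formula complexity, proving both halves simultaneously.

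The propositional cases and the decomposition rules for compound programs without iteration ($\compSymb$, $\choiceSymb$, $\testSymb$) are handled routinely: schedule fairness places us at a sequent on the branch where the formula is processed, and the induction hypothesis applies to the proper subformulas appearing in the premise the branch follows. The atomic-box cases hook into the canonical model: for ($\square$R), the freshness of the witness $y$ makes it precisely an $a$-successor of $x$ in $\model_P$; for ($\square$L), every pair $(x,y) \in \mathcal{I}_P(a)$ arises from some $x \relatedBy{a} y \in \Gamma$, which is processed against each such $y$ by repeated scheduling. For $x : \nec{\iter{\alpha}}{\varphi} \in \Gamma$, a secondary induction on the minimum number of $\alpha$-steps from $x$ to a target successor suffices: the base case uses ($\ast$L) to put $x : \varphi$ into $\Gamma$ and appeals to the outer IH on $\varphi$, while the step case combines ($\ast$L) with the atomic-box argument to transport $\nec{\iter{\alpha}}{\varphi}$ to an intermediate successor before invoking the secondary IH.

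The genuinely hard case is $x : \nec{\iter{\alpha}}{\varphi} \in \Delta$, and this is where the untraceability of the branch is essential. Starting at the first sequent on the branch in which this labelled formula occurs, I form the trace beginning with the value $\tau_0 = (x, \emptysequence, \alpha, \varphi)$ and follow it forward: it survives all non-principal rule applications, propagates through ($\compSymb$R), ($\choiceSymb$R), and ($?$R) decompositions of the programs in its spine, and advances to a freshly introduced successor label at each ($\square$R) applied to the corresponding descendant principal formula. Whenever an ($\ast$R) step is applied with the current trace value having empty spine, the right premise is a progression point of the trace, while the left premise terminates it at a sequent containing $y : \varphi$ on the right-hand side. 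If the branch took the right premise infinitely often, the resulting trace would be infinitely progressing, contradicting untraceability. Hence some left-premise choice must occur after finitely many iterations; at that sequent, the chain $x = x_0, x_1, \ldots, x_k$ of fresh successor labels accumulated from intervening ($\square$R) steps satisfies $(x, x_k) \in \interp[\model_P]{\iter{\alpha}}$, and the formula $x_k : \varphi$ is in $\Delta$. The outer IH on $\varphi$ then gives $\model_P, x_k \not\models \varphi$, witnessing $\model_P, x \not\models \nec{\iter{\alpha}}{\varphi}$. The main obstacle is the bookkeeping in this final case: one must carefully justify the coherent evolution of the trace value through the intervening program-decomposition rules between successive ($\ast$R) and ($\square$R) applications, and verify that the accumulated fresh labels genuinely compose into an $\alpha^k$-path in $\model_P$; once this is in place, the argument parallels the counter-model construction of \cite{BrotherstonSimpson11}.
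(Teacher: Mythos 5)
Your overall strategy is the paper's: read the atomic facts off the template to define $\model_{P}$, use fairness of the schedule to saturate the branch, handle $x : \nec{\iter{\alpha}}{\varphi} \in \Gamma$ by a secondary induction on the length of the $\alpha$-path, and use untraceability to force the right-hand unfolding of $x : \nec{\iter{\alpha}}{\varphi} \in \Delta$ to terminate at some $x_k : \varphi \in \Delta$ with $(x, x_k) \in \interp[\model_{P}]{\iter{\alpha}}$. All of the key ideas are present, including the one genuinely non-trivial one (the trace-based termination argument for the right-star case).

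There is, however, one structural gap in how you set up the induction. You run a single induction ``on formula complexity'', but the program-decomposition steps do not decrease it: the premise of ($\compSymb$R) applied to $x : \nec{\comp{\alpha}{\beta}}{\varphi}$ contains $x : \nec{\alpha}{\nec{\beta}{\varphi}}$, which is not a proper subformula of the conclusion formula (and is no smaller under any naive size measure), so the induction hypothesis you invoke for these cases is not available. The paper repairs this with a nested induction: the outer induction is on formulas, and the modal case is reduced to two auxiliary properties --- one for antecedent and one for consequent occurrences of $x : \nec{\beta}{\varphi}$, quantified over all subprograms $\beta$ of $\alpha$ --- proved by an inner induction on program structure. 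The same repair is needed in your hard case: when $\alpha$ itself contains iteration, the branch can diverge inside the decomposition of the spine, taking the right premise of an ($\ast$R) for a star nested in $\alpha$ infinitely often without ever returning to an empty-spine ($\ast$R) for the outer star; this third possibility escapes your dichotomy of ``right premise infinitely often at empty-spine steps, or a left-premise choice occurs''. It is still excluded by untraceability, but via a different trace, one focused on the nested star --- and in the paper this is exactly what the inner induction hypothesis delivers, since the consequent property already packages, for every subprogram $\beta$, the existence of a later index $j$, a successor $y$ with $(x,y)\in\interp[\model_{P}]{\beta}$, and a covering trace. The analogous point applies to your left-star step case, where transporting $\nec{\iter{\alpha}}{\varphi}$ across one $\alpha$-step needs the inner hypothesis for $\alpha$, not just the atomic-box argument. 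Once the single induction is replaced by this nested one, your argument coincides with the paper's.
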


\Cref{lem:CanonicalModel} entails the cut-free completeness of {\InfinitarySystem}.

\begin{restatable}[Completeness of {\InfinitarySystem}]{theorem}{CutfreeCompleteness}
\label{thm:Cut-freeCompleteness}
  If $\sequent{\Gamma}{\Delta}$ is valid, then it has a cut-free {\InfinitarySystem} proof.
\end{restatable}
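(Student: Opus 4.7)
The plan is to show that every valid sequent admits a cut-free proof by extracting one from the exhaustive proof-search procedure of \cref{def:SearchTree}. Given a valid sequent $\sequent{\Gamma}{\Delta}$, I would fix a schedule $\sigma$ and construct the corresponding search tree $\mathcal{D}$. By construction $\mathcal{D}$ contains neither instances of (Cut) nor of (Subst), so it suffices to show that $\mathcal{D}$ is in fact a {\InfinitarySystem} proof of $\sequent{\Gamma}{\Delta}$.

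I would then proceed by contradiction: suppose $\mathcal{D}$ fails to be a proof. Then either it contains an open leaf at which no schedule element is applicable (so that no further rule applications close the branch), or it contains an infinite branch not followed by any infinitely progressing trace (an untraceable branch). In either case, a template $P = (\Gamma^*, \Delta^*)$ is induced. Invoking \cref{lem:CanonicalModel}, we obtain the canonical PDL model $\model_{P}$ satisfying $\model_P, \idValuation \models A$ for every $A \in \Gamma^*$ and $\model_P, \idValuation \not\models B$ for every $B \in \Delta^*$.

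It remains to conclude that $(\model_P, \idValuation)$ refutes the original sequent $\sequent{\Gamma}{\Delta}$. For the untraceable branch case this is immediate, since by definition $\Gamma \subseteq \Gamma^*$ and $\Delta \subseteq \Delta^*$, the root being the first element of the branch. For the open leaf case, the root formulas may have been consumed during proof search, but the logical rules applied in the search tree are downward-sound (a refuting valuation of any premise refutes the conclusion); tracing the path from the open leaf back to the root and applying this observation at each step again yields $\model_P, \idValuation \not\models \sequent{\Gamma}{\Delta}$. In either case the hypothesised validity of $\sequent{\Gamma}{\Delta}$ is contradicted, so $\mathcal{D}$ must be a proof, giving the desired cut-free derivation.

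The crux of the argument, and its main obstacle, lies in \cref{lem:CanonicalModel} itself, which is simply invoked above. Its proof proceeds by structural induction on formulas occurring in the template, with the genuinely difficult case being labelled formulas of the form $x : \nec{\iter{\alpha}}{\varphi}$ appearing in $\Delta^*$: refuting such a formula in $\model_P$ requires producing a finite $\alpha$-path from $x$ to some state refuting $\varphi$, and the existence of this witness is precisely where the failure of the global trace condition for the untraceable branch is exploited, since any obstruction to finding such a path would yield a persistent starred modality and hence an infinitely progressing trace, contradicting untraceability.
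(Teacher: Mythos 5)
Your proposal matches the paper's proof: both construct the search tree for the given sequent, observe that it is cut- and substitution-free, and derive a contradiction with validity by extracting a template from a failed search and applying \cref{lem:CanonicalModel} to obtain a countermodel. The only cosmetic difference is in the open-leaf case, where the paper appeals to the cumulativity of the search-tree construction (so that $\Gamma \subseteq \Gamma'$ and $\Delta \subseteq \Delta'$ hold at every node, open leaves included) rather than to downward soundness of the rules along the branch; both observations are valid.
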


\subsection{Completeness of {\CyclicSystem} for PDL}

We show that the cyclic system {\CyclicSystem} can derive all theorems of PDL by demonstrating that it can derive each of the axiom schemas and inference rules in \cref{fig:PDLAxiomatisation}, which (along with the axiom schemas of classical propositional logic) constitute a complete axiomatisation of PDL \cite[\textsection{7.1}]{HKT2000}.

\begin{figure}[t]
{%
\setlength{\abovedisplayskip}{-1em}%
\setlength{\belowdisplayskip}{-1em}%
\setlength{\abovedisplayshortskip}{-1em}%
\setlength{\belowdisplayshortskip}{-1em}%
\begin{tabularx}{\textwidth}{@{}XX@{}}
  \begin{equation}
    \parbox[c][1em]{0.395\textwidth}{\centering$
      \nec{\alpha}{(\varphi \rightarrow \psi)} \rightarrow (\nec{\alpha}{\varphi} \rightarrow \nec{\alpha}{\psi})
    $}
    \label[axiom]{ax:PDL:Distribution:Implication}
  \end{equation}
    &
  \begin{equation}
    \parbox[c][1em]{0.395\textwidth}{\centering$
      \nec{\alpha}{(\varphi \wedge \psi)} \rightarrow (\nec{\alpha}{\varphi} \wedge \nec{\alpha}{\psi})
    $}
    \label[axiom]{ax:PDL:Distribution:Conjunction}
  \end{equation}
    \\
  \begin{equation}
    \parbox[c][1em]{0.395\textwidth}{\centering$
      \nec{\choice{\alpha}{\beta}}{\varphi} \leftrightarrow \nec{\alpha}{\varphi} \wedge \nec{\beta}{\varphi}
    $}
    \label[axiom]{ax:PDL:Choice}
  \end{equation}
    &
  \begin{equation}
    \parbox[c][1em]{0.395\textwidth}{\centering$
      \nec{\comp{\alpha}{\beta}}{\varphi} \leftrightarrow \nec{\alpha}{\nec{\beta}{\varphi}}
    $}
    \label[axiom]{ax:PDL:Sequence}
  \end{equation}
    \\
  \begin{equation}
    \parbox[c][1em]{0.395\textwidth}{\centering$
      \nec{\test{\psi}}{\varphi} \leftrightarrow (\psi \rightarrow \varphi)
    $}
    \label[axiom]{ax:PDL:Test}
  \end{equation}
    &
  \begin{equation}
    \parbox[c][1em]{0.395\textwidth}{\centering$
      \varphi \wedge \nec{\alpha}{\nec{\iter{\alpha}}{\varphi}} \leftrightarrow \nec{\iter{\alpha}}{\varphi}
    $}
    \label[axiom]{ax:PDL:Mix}
  \end{equation}
    \\
  \begin{equation}
    \parbox[c][1em]{0.395\textwidth}{\centering$
      \varphi \wedge \nec{\iter{\alpha}}{(\varphi \rightarrow \nec{\alpha}{\varphi})} \rightarrow \nec{\iter{\alpha}}{\varphi}
    $}
    \label[axiom]{ax:PDL:Induction}
  \end{equation}
    \\[0.5em]
  \begin{equation}
    \begin{prooftree}
      \varphi
        \quad
      \varphi \rightarrow \psi
        \justifies
      \psi
    \end{prooftree}
    \tag{\textsf{MP}}
    \label[axiom]{ax:ModusPonens}
  \end{equation}
    &
  \begin{equation}
    \begin{prooftree}
      \varphi
        \justifies
      \nec{\alpha}{\varphi}
    \end{prooftree}
    \tag{\textsf{Nec}}
    \label[axiom]{ax:PDL:Necessitation}
  \end{equation}
\end{tabularx}}
  \caption{Axiomatisation of PDL.}
  \label{fig:PDLAxiomatisation}
\end{figure}

The derivation of the axioms of classical propositional logic is standard, and \crefrange{ax:PDL:Choice}{ax:PDL:Mix} are immediately derivable via the left and right proof rules for their corresponding syntactic constructors.
Each such derivation is finite, and thus trivially a {\CyclicSystem} proof.
\Cref{ax:PDL:Distribution:Implication,ax:PDL:Distribution:Conjunction,ax:PDL:Induction,ax:PDL:Necessitation} require the following lemma showing that a general form of necessitation is derivable.

\begin{restatable}[Necessitation]{lemma}{Necessitation}
  \label{lem:Necessitation}
  For any labelled formula $x : \varphi$, program $\alpha$, and finite set $\Gamma$ of labelled formulas such that $\labs{\Gamma} = \{ x \}$, there exists a $\CyclicSystem$ derivation concluding with the sequent $\sequent{\nec{\alpha}{\Gamma}}{x : \nec{\alpha}{\varphi}}$ and containing open leaves of the form $\sequent{\Gamma}{x : \varphi}$ such that:
  \begin{conditionenum}[nosep,wide,label={(\roman*)},ref={\roman*}]
    \item \label{cond:paths:finite}
    for each trace value $\tau = x : \varphi$, every path from the conclusion to an open leaf is covered by a trace $\nec{\alpha}{\tau}, \ldots, \tau$; and
    \item \label{cond:paths:infinite}
    every infinite path is followed by an infinitely progressing trace.
  \end{conditionenum}
\end{restatable}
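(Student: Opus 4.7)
The plan is to proceed by structural induction on the program $\alpha$. For the base case $\alpha = a$ (atomic), construct the derivation as follows: starting from the conclusion $\sequent{\nec{a}{\Gamma}}{x : \nec{a}{\varphi}}$, apply ($\square$R) introducing a fresh label $y$ and the relational atom $x \relatedBy{a} y$; then apply ($\square$L) against $x \relatedBy{a} y$ for each $x : \nec{a}{\theta} \in \nec{a}{\Gamma}$ to place the corresponding $y : \theta$ into the left context; weaken away the relational atom and the $\nec{a}{\Gamma}$ formulas; and finally apply (Subst) to rename $y$ back to $x$, reaching the open leaf $\sequent{\Gamma}{x : \varphi}$. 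The unique (finite) path from conclusion to leaf carries the trace $x : \nec{a}{\varphi}, y : \varphi, \ldots, y : \varphi, x : \varphi$, which matches $\nec{a}{\tau}, \ldots, \tau$ with $\tau = x : \varphi$ via the ($\square$R) and (Subst) trace-pair clauses; condition (ii) holds vacuously since the derivation is finite.

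For the non-iteration inductive cases, the strategy is to apply the appropriate right rule, apply the corresponding left rule(s) to each formula of $\nec{\alpha}{\Gamma}$, weaken away auxiliary material, and invoke the induction hypothesis on the subprograms. For $\alpha = \comp{\beta}{\gamma}$, ($\compSymb$R) and ($\compSymb$L)s reduce the goal to $\sequent{\nec{\beta}{\nec{\gamma}{\Gamma}}}{x : \nec{\beta}{\nec{\gamma}{\varphi}}}$; the IH on $\beta$ (with parameters $\nec{\gamma}{\Gamma}$ and $\nec{\gamma}{\varphi}$) reduces this to open leaves $\sequent{\nec{\gamma}{\Gamma}}{x : \nec{\gamma}{\varphi}}$, and the IH on $\gamma$ reduces those to $\sequent{\Gamma}{x : \varphi}$. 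For $\alpha = \choice{\beta}{\gamma}$, ($\choiceSymb$R) splits the derivation into two branches, each handled by ($\choiceSymb$L)s, weakening, and the IH on the corresponding subprogram. For $\alpha = \test{\psi}$, only ($?$R), ($?$L) (with each test-succeeds premise closed via weakening and (Ax) against the $x : \psi$ introduced by ($?$R)), and (WL) are needed, with no IH required. All such derivations are finite so condition (ii) is vacuous, while condition (i) follows by composition because the right-rule trace-pair clauses directly strip off the outermost program constructor at each step.

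The main obstacle is the case $\alpha = \iter{\beta}$, which requires a cyclic backlink. Apply ($\ast$R) to the goal $\sequent{\nec{\iter{\beta}}{\Gamma}}{x : \nec{\iter{\beta}}{\varphi}}$. On the left premise $\sequent{\nec{\iter{\beta}}{\Gamma}}{x : \varphi}$, apply ($\ast$L) to each context formula and weaken to reach the open leaf $\sequent{\Gamma}{x : \varphi}$. On the right premise $\sequent{\nec{\iter{\beta}}{\Gamma}}{x : \nec{\beta}{\nec{\iter{\beta}}{\varphi}}}$, similarly apply ($\ast$L)s and weaken to obtain $\sequent{\nec{\beta}{\nec{\iter{\beta}}{\Gamma}}}{x : \nec{\beta}{\nec{\iter{\beta}}{\varphi}}}$, then invoke the IH on $\beta$ with parameters $\nec{\iter{\beta}}{\Gamma}$ and $\nec{\iter{\beta}}{\varphi}$; the open leaves $\sequent{\nec{\iter{\beta}}{\Gamma}}{x : \nec{\iter{\beta}}{\varphi}}$ of the resulting subderivation are then identified with the original conclusion via a backlink. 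Condition (i) is verified since the left-premise path carries the trace $x : \nec{\iter{\beta}}{\varphi}, x : \varphi, \ldots, x : \varphi$ (the ($\ast$R) left-premise trace pair strips the outer $\nec{\iter{\beta}}{}$, and subsequent left-rule and weakening steps leave the right-hand trace value unchanged). For condition (ii), any infinite path either stays within a subderivation produced by an IH invocation (handled by that IH's own condition (ii)) or traverses the root's backlink infinitely often; in the latter case, condition (i) of the IH on $\beta$ ensures the trace value $x : \nec{\iter{\beta}}{\varphi}$ is carried from each bud back to the companion on every pass, while the ($\ast$R) step at the root supplies a progressing trace pair on each cycle.
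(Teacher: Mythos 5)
Your proposal follows essentially the same induction on $\alpha$ as the paper, with the same construction in every case (the only cosmetic differences being where you place the (Subst) step in the atomic case and the order of ($\ast$L) versus ($\ast$R) in the iteration case) and the same trace analysis, including the key use of condition (i) of the induction hypothesis to carry traces around the back-linked cycles. One small slip: the derivations for composition and choice are \emph{not} finite in general, since a subprogram may itself contain an iteration and the corresponding induction-hypothesis subderivation then contains cycles; condition (ii) there is discharged not vacuously but by observing---exactly as you do for the iteration case---that any such infinite path must eventually remain inside an induction-hypothesis subderivation and is therefore followed by an infinitely progressing trace by that hypothesis.
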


\begin{figure}[t]
  \subfloat[Derivation schema for \Cref{ax:PDL:Distribution:Implication}]{
    \label{fig:Distribution:Implication}
    \scalebox{0.85}{\begin{minipage}[b]{0.5\textwidth}
      \begin{gather*}
        \begin{prooftree}
          \prooftree
            \prooftree
              \prooftree
                \prooftree
                    \justifies
                  \sequent{x : \varphi}{x : \varphi}
                    \using {\text{\smaller{(Ax)}}}
                \endprooftree
                \prooftree
                    \justifies
                  \sequent{x : \psi}{x : \psi}
                    \using {\text{\smaller{(Ax)}}}
                \endprooftree
                  \justifies
                \sequent{x : \varphi \rightarrow \psi, x : \varphi}{x : \psi}
                  \using {\text{\smaller{($\rightarrow$L)}}}
              \endprooftree
                \leadsto\proofdotnumber=6
              {\sequent{x : \nec{\alpha}{\varphi \rightarrow \psi}, x : \nec{\alpha}{\varphi}}{x : \nec{\alpha}{\psi}}}
                \using {\hbox to -2pt {\text{\smaller{\cref{lem:Necessitation}}}}}
            \endprooftree
              \justifies
            \sequent{x : \nec{\alpha}{\varphi \rightarrow \psi}}{x : \nec{\alpha}{\varphi} \rightarrow \nec{\alpha}{\psi}}
              \using {\text{\smaller{($\rightarrow$R)}}}
          \endprooftree
            \justifies
          \sequent{}{x : \nec{\alpha}{(\varphi \rightarrow \psi)} \rightarrow (\nec{\alpha}{\varphi} \rightarrow \nec{\alpha}{\psi})}
            \using {\text{\smaller{($\rightarrow$R)}}}
        \end{prooftree}
      \end{gather*}
      \vspace{-0.5em}
    \end{minipage}}}
    \hspace{1em}
    \subfloat[Derivation schema for \Cref{ax:PDL:Distribution:Conjunction}]{
      \label{fig:Distribution:Conjunction}
      \scalebox{0.76}{\begin{minipage}[b]{0.7\textwidth}
        \begin{gather*}
          \begin{prooftree}
            \prooftree
              \prooftree
                \prooftree
                  \prooftree
                    \prooftree
                        \justifies
                      \sequent{x : \varphi}{x : \varphi}
                        \using {\text{\smaller{(Ax)}}}
                    \endprooftree
                      \justifies
                    \sequent{x : \varphi, x : \psi}{x : \varphi}
                      \using {\text{\smaller{(WL)}}}
                  \endprooftree
                    \justifies
                  \sequent{x : \varphi \wedge \psi}{x : \varphi}
                    \using {\text{\smaller{($\wedge$L)}}}
                \endprooftree
                  \leadsto\proofdotnumber=6
                \sequent{x : \nec{\alpha}{(\varphi \wedge \psi)}}{x : \nec{\alpha}{\varphi}}
                  \using {\hbox to -2pt {\text{\smaller{\cref{lem:Necessitation}}}}}
              \endprooftree
              \prooftree
                \prooftree
                  \prooftree
                    \prooftree
                        \justifies
                      \sequent{x : \psi}{x : \psi}
                        \using {\text{\smaller{(Ax)}}}
                    \endprooftree
                      \justifies
                    \sequent{x : \varphi, x : \psi}{x : \psi}
                      \using {\text{\smaller{(WL)}}}
                  \endprooftree
                    \justifies
                  \sequent{x : \varphi \wedge \psi}{x : \psi}
                    \using {\text{\smaller{($\wedge$L)}}}
                \endprooftree
                  \leadsto\proofdotnumber=6
                \sequent{x : \nec{\alpha}{(\varphi \wedge \psi)}}{x : \nec{\alpha}{\varphi}}
                  \using {\hbox to -2pt {\text{\smaller{\cref{lem:Necessitation}}}}}
              \endprooftree
                \justifies
              \sequent{x : \nec{\alpha}{(\varphi \wedge \psi)}}{x : \nec{\alpha}{\varphi} \wedge \nec{\alpha}{\psi}}
                \using {\text{\smaller{($\wedge$R)}}}
            \endprooftree
              \justifies
            \sequent{}{x : \nec{\alpha}{(\varphi \wedge \psi)} \rightarrow (\nec{\alpha}{\varphi} \wedge \nec{\alpha}{\psi})}
              \using {\text{\smaller{($\rightarrow$R)}}}
          \end{prooftree}
        \end{gather*}
        \vspace{-0.5em}
      \end{minipage}}}
        \\[-1em]
    \subfloat[Derivation schema for \Cref{ax:PDL:Induction}]{
      \label{fig:Induction}
      \scalebox{0.75}{\begin{minipage}[b]{1.25\textwidth}
        \begin{gather*}
          \begin{prooftree}
            \prooftree
              \prooftree
                \prooftree
                  \prooftree
                      \justifies
                    \sequent{x : \varphi}{x : \varphi}
                      \using {\text{\smaller{(Ax)}}}
                  \endprooftree
                    \justifies
                  \sequent{x : \varphi, x : \nec{\iter{\alpha}}{\varphi \rightarrow \nec{\alpha}{\varphi}}}{x : \varphi}
                    \using {\text{\smaller{(WL)}}}
                \endprooftree
                \prooftree
                  \prooftree
                    \prooftree
                        \justifies
                      \sequent{x : \varphi}{x : \varphi}
                        \using {\text{\smaller{(Ax)}}}
                    \endprooftree
                    \prooftree
                      \prooftree
                        \begin{gathered}[b]
                          \tikz \coordinate (bud) at (0em,-0.5em) ;
                            \\
                            \sequent{x : \varphi, x : \nec{\iter{\alpha}}{\varphi \rightarrow \nec{\alpha}{\varphi}}}{x : \nec{\iter{{\color{blue}{\underline{\alpha}}}}}{\varphi}}%
                        \end{gathered}
                          \leadsto\proofdotnumber=6
                        \sequent{x : \nec{\alpha}{\varphi}, x : \nec{\alpha}{\nec{\iter{\alpha}}{\varphi \rightarrow \nec{\alpha}{\varphi}}}}{x : \nec{\alpha}{\nec{\iter{{\color{blue}{\underline{\alpha}}}}}{\varphi}}}
                          \using {\hbox to -2pt {\text{\smaller{\cref{lem:Necessitation}}}}}
                      \endprooftree
                        \justifies
                      \sequent{x : \varphi, x : \nec{\alpha}{\varphi}, x : \nec{\alpha}{\nec{\iter{\alpha}}{\varphi \rightarrow \nec{\alpha}{\varphi}}}}{x : \nec{\alpha}{\nec{\iter{{\color{blue}{\underline{\alpha}}}}}{\varphi}}}
                        \using {\text{\smaller{(WL)}}}
                    \endprooftree
                      \justifies
                    \sequent{x : \varphi, x : \varphi \rightarrow \nec{\alpha}{\varphi}, x : \nec{\alpha}{\nec{\iter{\alpha}}{\varphi \rightarrow \nec{\alpha}{\varphi}}}}{x : \nec{\alpha}{\nec{\iter{{\color{blue}{\underline{\alpha}}}}}{\varphi}}}
                      \using {\text{\smaller{($\rightarrow$L)}}}
                  \endprooftree
                    \justifies
                  \sequent{x : \varphi, x : \nec{\iter{\alpha}}{\varphi \rightarrow \nec{\alpha}{\varphi}}}{x : \nec{\alpha}{\nec{\iter{{\color{blue}{\underline{\alpha}}}}}{\varphi}}}
                    \using {\text{\smaller{($\ast$L)}}}
                \endprooftree
                  \justifies
                {\sequent{x : \varphi, x : \nec{\iter{\alpha}}{\varphi \rightarrow \nec{\alpha}{\varphi}}}{x : \nec{\iter{{\color{blue}{\underline{\alpha}}}}}{\varphi}}
                \tikz \coordinate (companion) at (0.5em,0.25em) ; }
                  \using {\text{\smaller{($\ast$R)}}}
              \endprooftree
                \justifies
              \sequent{x : \varphi \wedge \nec{\iter{\alpha}}{\varphi \rightarrow \nec{\alpha}{\varphi}}}{x : \nec{\iter{\alpha}}{\varphi}}
                \using {\text{\smaller{($\wedge$L)}}}
            \endprooftree
              \justifies
            \sequent{}{x : \varphi \wedge \nec{\iter{\alpha}}{\varphi \rightarrow \nec{\alpha}{\varphi}} \rightarrow \nec{\iter{\alpha}}{\varphi}}
              \using {\text{\smaller{($\rightarrow$R)}}}
          \end{prooftree}
          \begin{tikzpicture}[overlay]
            \draw[arrow,dashed] (bud) -- ++(0,1em) -- ++(13.5em,0) |- (companion) [->];
          \end{tikzpicture}
        \end{gather*}
        \vspace{-0.5em}
      \end{minipage}}}
  \caption{$\CyclicSystem$ derivation schemata for the distribution and induction axioms.}
  \label{fig:DistributionInduction}
\end{figure}

Schemas for deriving \cref{ax:PDL:Distribution:Implication,ax:PDL:Distribution:Conjunction,ax:PDL:Induction} are shown in \cref{fig:DistributionInduction}.
Any infinite paths which exist in the schemas for deriving \cref{ax:PDL:Distribution:Implication,ax:PDL:Distribution:Conjunction} are followed by infinitely progressing traces by \cref{lem:Necessitation}.
Thus, they are $\CyclicSystem$ proofs.
In the schema for \cref{ax:PDL:Induction}, the open leaves of the subderivation constructed via \cref{lem:Necessitation} are converted into buds, the companion of each of which is the conclusion of the instance of the ($\ast$R) rule.
\Cref{cond:paths:finite} of \cref{lem:Necessitation} guarantees that each infinite path along these cycles has an infinitely progressing trace.
We thus have the following completeness result.

\begin{restatable}{theorem}{RegularCompleteness}
\label{thm:Completeness:Regular}
  If $\varphi$ is valid then $\sequent{}{x : \varphi}$ is derivable in {\CyclicSystem}.
\end{restatable}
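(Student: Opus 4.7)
The plan is to appeal to the completeness of the standard Hilbert-style axiomatisation of PDL (\cite[\textsection{7.1}]{HKT2000}): if $\varphi$ is valid, there is a finite Hilbert derivation of $\varphi$ using the axioms and rules of \cref{fig:PDLAxiomatisation} together with the classical propositional tautologies. It therefore suffices to show that every axiom instance has a {\CyclicSystem} derivation of the corresponding sequent $\sequent{}{x : \vartheta}$ and that both \cref{ax:ModusPonens} and \cref{ax:PDL:Necessitation} are {\CyclicSystem}-admissible. A straightforward induction on the length of the Hilbert proof then assembles these ingredients into a {\CyclicSystem} derivation of $\sequent{}{x : \varphi}$.

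For the individual axioms I would proceed as follows. Classical propositional tautologies admit standard finite cut-free sequent derivations, which are trivially {\CyclicSystem} proofs. \Cref{ax:PDL:Choice,ax:PDL:Sequence,ax:PDL:Test,ax:PDL:Mix} are immediate finite consequences of the left and right rules for the corresponding program constructors, so they contain no infinite paths and satisfy the global trace condition vacuously. \Cref{ax:PDL:Distribution:Implication,ax:PDL:Distribution:Conjunction} and \cref{ax:PDL:Induction} are handled by the derivation schemata in \cref{fig:DistributionInduction}, each of which invokes \cref{lem:Necessitation} to push the leading $\nec{\alpha}{}$ modality through. For the inference rules, \cref{ax:ModusPonens} is simulated by a single instance of (Cut) on $x : \varphi$, and \cref{ax:PDL:Necessitation} is precisely the special case of \cref{lem:Necessitation} with $\Gamma = \emptyset$.

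The main obstacle is verifying the global trace condition on the assembled derivation, in particular for the induction axiom and whenever Necessitation is applied in the inductive step. In the schema of \cref{fig:Induction}, each open leaf produced by the sub-derivation from \cref{lem:Necessitation} is closed by forming a bud whose companion is the conclusion of the outer ($\ast$R) instance. Any infinite path cycling through such a bud--companion pair is covered, by \cref{lem:Necessitation}\,(\ref{cond:paths:finite}), by a trace whose focus is $\iter{\alpha}$; this trace progresses once per traversal at the outer ($\ast$R) instance, so it is infinitely progressing. Infinite paths that remain internal to the \cref{lem:Necessitation} sub-derivation are handled by \cref{lem:Necessitation}\,(\ref{cond:paths:infinite}).

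Finally, the inductive combination of sub-derivations via (Cut) does not introduce new infinite paths that cross between the two premises, so each infinite path of the assembled proof lies entirely inside one of the sub-derivations already shown to satisfy the global trace condition. This yields the required {\CyclicSystem} proof of $\sequent{}{x : \varphi}$.
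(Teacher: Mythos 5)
Your proposal is correct and follows essentially the same route as the paper: induction on a Hilbert deduction, using the finite derivations for \crefrange{ax:PDL:Choice}{ax:PDL:Mix}, the schemata of \cref{fig:DistributionInduction} together with \cref{lem:Necessitation} for the distribution and induction axioms and for \eqref{ax:PDL:Necessitation}, and (Cut) for \eqref{ax:ModusPonens} (the paper first forms the conjunction $x : \psi_j \wedge (\psi_j \rightarrow \psi_i)$ via ($\wedge$R) and performs a single cut on it, a cosmetic variation on your step). Your treatment of the global trace condition for the assembled proof matches the paper's discussion surrounding \cref{fig:Induction}.
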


It should be noted that \cref{thm:Completeness:Regular} is not a deductive completeness result, i.e.~it does not say that any sequent $\sequent{\Gamma}{\Delta}$ is only valid if there is a {\CyclicSystem} proof for it. This is no major restriction, as a finitary syntactic consequence relation 
cannot capture semantic consequence in PDL: due to the presence of iteration, PDL is not compact.
This can only be rectified by allowing infinite sequents in the proof system, which is undesirable for present purposes.

\section{Proof Search for Test-free, Acyclic Sequents}

In this section, we describe a cut-free proof-search procedure for sequents containing formulas without tests (i.e.~programs of the form $\test{\varphi}$), and for which the relational atoms in the antecedents do not entail cyclic models.
We give a proof sketch that it is complete for this class of sequents.

Our approach relies on the following notion of normal form for sequents.
For a set of relational atoms and labelled formulas, we write $\starredlabs{\Gamma}$ for the set $\{ x \mid x : \nec{\iter{\alpha}}{\varphi} \in \Gamma \}$.
We call formulas of the form $\nec{a}{}\varphi$ \emph{basic}, those of the form $\nec{\iter{\alpha}}{\varphi}$ \emph{iterated}, and the remaining non-atomic formulas \emph{composite}.

\begin{definition}[Normal Sequents]
\label{def:NormalSequents}
  A sequent $\sequent{\Gamma}{\Delta}$ is called \emph{normal} when:
  \begin{enumerate*}[label={(\arabic*)}]
    \item
    $\Gamma \cap \Delta = \emptyset$;
    \item
    $\Delta$ contains only labelled atomic and iterated formulas; and
    \item
    $\Gamma$ contains only relational atoms, labelled atomic formulas, and labelled basic formulas $x : \nec{a}{\varphi}$ for which there is no $y$ such that also $x \relatedBy{a} y \in \Gamma$.
  \end{enumerate*}
\end{definition}

We say that $x$ \emph{reaches} $y$ (or $y$ is \emph{reachable} from $x$) in $\Gamma$ when there are labels $z_1, \ldots, z_n$ and atomic programs $a_1, \ldots, a_{n-1}$ such that $x = z_1$ and $y = z_n$ with $z_i \relatedBy{a_i} z_{n+1} \in \Gamma$ for each $i < n$.
We say that a sequent $\sequent{\Gamma}{\Delta}$ is \emph{cyclic} if there is some $x \in \labs{\Gamma}$ such that $x$ reaches itself in $\Gamma$; otherwise it is called \emph{acyclic}.

Crucially, the following forms of weakening are validity-preserving.

\begin{restatable}[Validity-preserving Weakenings]{lemma}{ValidWeakening}
\label{lem:ValidWeakening}
  The following hold.
  \begin{lemmaenum}[nosep,ref={{\thelemma}({\arabic*})}]
    \item
    \label{lem:ValidWeakening:RelationalAtoms:Right}
    If $\sequent{\Gamma}{\Delta, x \relatedBy{a} z}$ is valid and $x \relatedBy{a} z \not\in \Gamma$, then $\sequent{\Gamma}{\Delta}$ is valid.
    \item
    \label{lem:ValidWeakening:Formulas:Right}
    If normal $\sequent{\Gamma}{\Delta, x : p}$ is valid with $x \not\in \starredlabs{\Delta}$, then $\sequent{\Gamma}{\Delta}$ is valid.
    \item
    \label{lem:ValidWeakening:Formulas:Left}
    If normal $\sequent{\Gamma, x : \varphi}{\Delta}$ is valid with $x \not\in \labs{\Delta}$, then $\sequent{\Gamma}{\Delta}$ is valid.
    \item
    \label{lem:ValidWeakening:RelationalAtoms:Left:FreeUnreachable}
    If normal $\sequent{\Gamma, x \relatedBy{a} y}{\Delta}$ is valid, $z \in \labs{\Delta}$ for all $z : \varphi \in \Gamma$, $x \not\in \labs{\Delta}$ and $x$ not reachable in $\Gamma$ from any $z \in \labs{\Delta}$, then $\sequent{\Gamma}{\Delta}$ is valid.
  \end{lemmaenum}
\end{restatable}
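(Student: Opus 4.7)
The plan is to prove each part by a semantic argument in the contrapositive: given a counter-model $(\model, \valuation)$ witnessing the invalidity of the smaller sequent (the conclusion of the implication), construct a counter-model for the larger sequent (the hypothesis) by a local modification exploiting the structural conditions of each part. The counter-model construction of \cref{def:CanonicalModel} provides a useful starting point in the normal cases, producing canonical counter-models built directly from the atoms of $\Gamma$ and $\Delta$.

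For part (4), since $x$ is not reachable in $\Gamma$ from any $z \in \labs{\Delta}$, I would pass to (or construct from scratch) a counter-model $(\model, \valuation)$ for $\sequent{\Gamma}{\Delta}$ in which this unreachability is reflected in the accessibility structure itself, and then simply add the pair $(\valuation(x), \valuation(y))$ to $\interp{a}$. Because PDL-satisfaction at a state depends only on the sub-model reachable from it, adding edges out of an unreachable $\valuation(x)$ preserves the truth values of all $\Delta$-formulas. The conjunction of the conditions $z \in \labs{\Delta}$ for every $z : \varphi \in \Gamma$ and $x \not\in \labs{\Delta}$ additionally guarantees that no labelled $\Gamma$-formula is stationed at $x$, so no basic modality in $\Gamma$ is endangered by the new $a$-edge.

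Parts (2) and (3) follow a similar local-modification pattern, exploiting normality of the larger sequent. Normality ensures $\Gamma$ imposes only tame constraints on $\valuation(x)$: atomic propositions it must satisfy, relational atoms, and basic modalities $x : \nec{a}{\psi}$ with no forced $a$-successors (since no $x \relatedBy{a} y' \in \Gamma$). I would produce $(\model', \valuation')$ by creating a fresh copy $s^*$ of $\valuation(x)$ satisfying all $\Gamma$-constraints at $x$, redirecting $\valuation'(x) := s^*$, and adjusting $s^*$'s valuation so that $x : p$ is falsified (in part (2)) or $x : \varphi$ is satisfied (in part (3); here $\varphi$ is atomic or basic by normality of $\sequent{\Gamma, x : \varphi}{\Delta}$). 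The condition $x \not\in \starredlabs{\Delta}$ in part (2) ensures no iterated $\Delta$-formula at $x$ is affected by changing $p$'s valuation at $s^*$; the atomic $x : p$ is precisely the formula we want to falsify. In part (3), $x \not\in \labs{\Delta}$ means no $\Delta$-formula is stationed at $x$ at all, so local changes at $s^*$ do not disturb $\Delta$.

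Part (1), which does not assume normality, is the most delicate: given a counter-model $(\model, \valuation)$ for $\sequent{\Gamma}{\Delta}$, I would pass to a tree unravelling $\model^u$ of $\model$ and select $\valuation'$ so that $\valuation'(x)$ and $\valuation'(z)$ land on distinct tree-states not related by $a$. The hypothesis $x \relatedBy{a} z \not\in \Gamma$ ensures that no relational atom obstructs this choice, since the only forced parent/child relationships in the unravelling come from $\Gamma$-atoms. Bisimulation-invariance of PDL then carries $\Gamma$-satisfaction and $\Delta$-falsification from $\model$ to $\model^u$. The main obstacle, across all four parts, is ensuring that these local model modifications really preserve the falsification of iterated $\Delta$-formulas, whose semantic sensitivity to global graph structure is precisely the reason these weakenings are nontrivial; the reachability and freshness hypotheses of each part are tuned exactly to neutralise this sensitivity.
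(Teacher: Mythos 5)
Your strategy is genuinely different from the paper's, which does not argue semantically at all: the paper invokes cut-free completeness (\cref{thm:Cut-freeCompleteness}) to obtain a cut-free {\InfinitarySystem} proof of the \emph{larger} sequent, shows by propagating a handful of invariants from conclusion to premise through every rule that the weakened formula or relational atom is never principal anywhere in that proof, excises it from every sequent in which it occurs to obtain a proof of the smaller sequent, and then concludes validity from soundness (\cref{thm:Soundness:Inf}). A contrapositive model-surgery argument is a legitimate alternative in principle, but as sketched it has genuine gaps. The most serious is part (1), which carries no normality or acyclicity hypothesis: a tree unravelling cannot host a valuation satisfying $\Gamma$ when the relational atoms of $\Gamma$ form a cycle on labels (e.g.\ $x \relatedBy{b} y, y \relatedBy{b} x \in \Gamma$), and even in the acyclic case the claim that no relational atom obstructs the choice fails when several atoms constrain the same pair --- if $x \relatedBy{b} z, x \relatedBy{c} z \in \Gamma$ with $b, c \neq a$, a single tree edge cannot carry both labels while omitting $a$. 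Your proposed mechanism for (1) therefore does not go through as stated.

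The second, pervasive issue is the conflation of syntactic conditions on \emph{labels} with semantic conditions on \emph{states}. In part (4), ``$x$ not reachable in $\Gamma$ from $\labs{\Delta}$'' constrains only the relational atoms of $\Gamma$; in an arbitrary counter-model $(\model, \valuation)$ of $\sequent{\Gamma}{\Delta}$ the state $\valuation(x)$ may be reachable from, or even identical to, $\valuation(z)$ for some $z \in \labs{\Delta}$ with $z : \nec{a}{\psi} \in \Gamma$, and then adding $(\valuation(x), \valuation(y))$ to $\interp{a}$ falsifies that basic antecedent formula --- your observation that no $\Gamma$-formula is ``stationed at $x$'' is about the label, not the state. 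Similarly, in parts (2) and (3) the fresh copy $s^{*}$ must receive the incoming edges demanded by relational atoms of $\Gamma$ targeting $x$, and you must then check that no basic formula $y : \nec{b}{\psi} \in \Gamma$ at \emph{another} label acquires $s^{*}$ as a new $b$-successor falsifying $\psi$; this is precisely where clause (3) of normality (no $y \relatedBy{b} y' \in \Gamma$ alongside $y : \nec{b}{\psi}$) must be invoked, and your sketch does not use it. Your fallback of constructing ``from scratch'' a counter-model in which syntactic and semantic reachability coincide amounts to redoing the canonical counter-model construction for an arbitrary invalid normal sequent, which is essentially the completeness machinery that the paper's proof-theoretic route simply reuses wholesale.
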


An \emph{unwinding} of a sequent $\sequent{\Gamma}{\Delta}$ is a possibly open derivation of $\sequent{\Gamma}{\Delta}$ obtained by applying axioms and left and right logical rules as much as possible, and satisfying the properties that: no trace progresses more than once; and all rule instances consume the active labelled formula of their conclusion, but preserve in the premise any relational atoms.
A \emph{capped} unwinding is an unwinding for which:
\begin{enumerate*}[label={(\alph*)}]
  \item
  weakening rules and (Ax) and ($\falsum$) have been applied to all open leaves $\sequent{\Gamma}{\Delta}$ with $\falsum \in \Gamma$ or $\Gamma \cap \Delta \neq \emptyset$; and
  \item
  the sequence of weakenings in \cref{lem:ValidWeakening} have been exhaustively applied to all other open leaves.
\end{enumerate*}

\begin{restatable}{lemma}{UnwindingProperties}
  Let $\mathcal{D}$ be a capped unwinding for $\sequent{\Gamma}{\Delta}$ (denoted $S$) and $\sequent{\Gamma'}{\Delta'}$ an open leaf (denoted $S'$) of $\mathcal{D}$.
  The following hold:
  \begin{lemmaenum*}[nosep,label={({\arabic*})},ref={{\thelemma}(\arabic*)}]
    \item
    \label{lem:UnwindingProperties:NormalLeaves}
    $\sequent{\Gamma'}{\Delta'}$ is normal;
    \item
    \label{lem:UnwindingProperties:ValidLeaves}
    if $\sequent{\Gamma}{\Delta}$ is valid, then so are all the open leaves of $\mathcal{D}$; and
    \item
    \label{lem:UnwindingProperties:Traces}
    For every trace $\vec{\tau}_n$ covering the path from $S$ to $S'$, if $\tau_1 = (x, \emptysequence, \beta, \varphi)$ is a sub-formula of $\tau_n$, then the trace is progressing.
  \end{lemmaenum*}
\end{restatable}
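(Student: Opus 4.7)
The plan is to prove (1), (2), and (3) in sequence, each by a case analysis on the derivation structure determined by the capped unwinding.

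For (1), I would argue by contradiction: if $\sequent{\Gamma'}{\Delta'}$ is an open leaf violating any clause of \cref{def:NormalSequents}, then either the capping step would have closed it or the unwinding could have been extended further. Specifically, $\Gamma' \cap \Delta' \neq \emptyset$ or $x : \falsum \in \Gamma'$ would have been disposed of by capping clause (a); a non-atomic, non-iterated formula on the right would admit one of ($\wedge$R), ($\vee$R), ($\rightarrow$R), ($\square$R), ($\compSymb$R), ($\choiceSymb$R), ($?$R); a composite formula on the left would admit the corresponding left rule; an iterated formula on the left would admit ($\ast$L) (which does not progress any trace); and a basic formula $x : \nec{a}{\varphi}$ on the left together with some $x \relatedBy{a} y \in \Gamma'$ would admit ($\square$L). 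None of these rule applications progresses a trace more than once, so each would extend the unwinding, contradicting its maximality.

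For (2), validity is transmitted upwards along the (finite) path from $S$ to $S'$ by combining two ingredients. The labelled logical rules used in the unwinding are invertible: for the propositional connectives this is routine; for the ($\compSymb$), ($\choiceSymb$), ($?$) and ($\ast$) rules invertibility follows from the fact that they mirror semantic identities (e.g.~$\nec{\comp{\alpha}{\beta}}{\varphi} \equiv \nec{\alpha}{\nec{\beta}{\varphi}}$ and $\nec{\iter{\alpha}}{\varphi} \equiv \varphi \wedge \nec{\alpha}{\nec{\iter{\alpha}}{\varphi}}$); ($\square$R) is invertible by the freshness of its label; and ($\square$L) is invertible because the unwinding preserves the triggering relational atom in the premise and applies ($\square$L) for every matching atom before the principal formula is dropped, so no information about $a$-successors of $x$ is discarded. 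The capping step then consists either of closing a leaf via axioms/$(\falsum)$ preceded by weakenings (trivially preserving validity) or of iterated applications of the reverse weakenings of \cref{lem:ValidWeakening}, each of which preserves validity by the corresponding clause of that lemma.

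For (3), the key observation is that every form of trace pair leaves the focus program $\beta$ and the formula $\varphi$ of a trace value unchanged, since (Subst) only relabels (and is not used by the unwinding), non-principal steps copy the value, and the modal and iteration rules manipulate only the spine. Consequently every $\tau_i$ along the trace has the shape $x : \nec{\alpha^{i}_{1}}{\ldots\nec{\alpha^{i}_{m_i}}{\nec{\iter{\beta}}{\varphi}}}$ with the same label as $\tau_1$. A case analysis of the spine-modifying trace pairs then shows that ($\square$R), ($?$R), ($\compSymb$R), ($\choiceSymb$R), and ($\ast$R) on the left premise all require a non-empty spine in the conclusion, whereas ($\ast$R) on the right premise is the unique way to \emph{extend} the spine and is progressing precisely when the spine in the conclusion is empty. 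Under the hypothesis that $\tau_1$ is a proper sub-formula of $\tau_n$, the spine of $\tau_n$ must be non-empty, so some intermediate trace pair must have grown the spine from empty to non-empty via an instance of ($\ast$R) on the right, yielding the required progression.

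The main obstacle I anticipate lies in part (2), specifically in pinning down the invertibility argument for ($\square$L) under the unwinding's ``consume the principal formula'' convention: one must verify that applying ($\square$L) exhaustively against all matching relational atoms before consuming $x : \nec{a}{\varphi}$ retains all the information needed to recover validity of the conclusion from that of the resulting premise.
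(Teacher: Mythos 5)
Your decomposition and the key ideas for all three parts coincide with the paper's own proof: (1) is by maximality of the unwinding together with the capping clauses, (2) combines invertibility of the logical rules with \cref{lem:ValidWeakening}, and (3) rests on the fact that only the unfolding rules can make a trace value grow. For (1) and (2) your account is, if anything, more careful than the paper's (which disposes of invertibility in one sentence); the difficulty you flag for ($\square$L) --- that consuming $x : \nec{a}{\varphi}$ is only harmless if no further $a$-successors of $x$ can appear afterwards --- is a genuine subtlety that the paper's proof does not address explicitly either, so you are right to single it out.

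Part (3) contains the one real gap. You establish progression only ``under the hypothesis that $\tau_1$ is a \emph{proper} sub-formula of $\tau_n$'', i.e.\ only when the spine of $\tau_n$ is non-empty. But the sub-formula relation in the statement is reflexive, and the case where $\tau_n$ has empty spine is exactly the one the lemma is invoked for later: in the proof-search argument the trace segments run between two occurrences of the \emph{same} iterated formula $x : \nec{\iter{\alpha}}{\psi}$. Your spine analysis does close this case, but only with one further observation: if the spine never becomes non-empty then the trace value is never principal (an empty-spine value cannot form a trace pair through the left premise of ($\ast$R), and no other right rule has a principal formula of the form $\nec{\iter{\beta}}{\varphi}$), so it is copied unchanged to the leaf and $\tau_n$ is an iterated consequent formula that was never unfolded --- contradicting the maximality of the unwinding that you already use in part (1). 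Separately, your claim that every $\tau_i$ carries the same label as $\tau_1$ is false --- a ($\square$R) trace pair has $\tau = x : \nec{a}{\tau'}$ with $\tau'$ carrying the fresh label --- but this is harmless, since your argument only needs that the focus and the formula component are invariant and that only the spine changes.
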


We call a sequent \emph{test-free} if it does not contain any programs of the form $\test{\varphi}$.
A crucial property for termination of the proof-search is the following.

\begin{restatable}{lemma}{FiniteUnwindings}
\label{lem:FiniteUnwindings}
  Let $\mathcal{D}$ be a capped unwinding for a test-free, acyclic sequent; then $\mathcal{D}$ is finite, and $\labs{\Gamma'} \subseteq \labs{\Delta'} \subseteq \starredlabs{\Delta'}$ for all open leaves $\sequent{\Gamma'}{\Delta'}$ of $\mathcal{D}$.
\end{restatable}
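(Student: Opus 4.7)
The plan is to prove the two claims in sequence: first that $\mathcal{D}$ is a finite tree, and then that the label-containment inclusions hold at every open leaf. For the finiteness half, I would set up a well-founded lexicographic measure on the sequents appearing along a branch, and argue that it strictly decreases with each rule application that can occur in an unwinding. Test-freeness is critical here: since $\test{\varphi}$ cannot occur, every rule firing either decomposes a composite program ($\comp{\alpha}{\beta}$ or $\choice{\alpha}{\beta}$), unfolds an iterate ($\ast$L or $\ast$R), or resolves a propositional/atomic-modal formula. As a consequence, the set of labelled formulas occurring anywhere in $\mathcal{D}$ is drawn from a finite Fischer-Ladner-style closure of the root sequent (modulo the labels in play), so complexity-based components of the measure are well-defined.

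The measure I have in mind is a lexicographic tuple whose components capture: (i) the remaining "progress budget" for each trace emanating from an iterated right-hand formula, which is at most one because of the unwinding restriction that no trace progresses more than once, bounding applications of $\ast$R on that trace; (ii) the total programmatic complexity of composite non-iterated labelled formulas remaining in the sequent, which strictly decreases under each logical rule instance including $\ast$L (whose premise replaces $\nec{\iter{\alpha}}{\varphi}$ by $\varphi$ together with an $\nec{\alpha}{}$-guarded copy already anticipated in the closure); and (iii) the maximal depth of a label in the DAG of relational atoms, which is finite by the acyclicity hypothesis and bounds chains of $\square$L applications. Each $\square$R instance consumes an active right formula and extends the label DAG by one fresh leaf, but because $\square$R strictly reduces component (ii) and the resulting fresh labels contribute finitely many additional $\square$R opportunities, the whole measure strictly decreases. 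Since the proof tree is finitely branching and every branch is finite by well-foundedness, König's lemma gives finiteness of $\mathcal{D}$.

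For the label-containment claim at an open leaf $\sequent{\Gamma'}{\Delta'}$, I would rely on the fact that the capping step has exhaustively applied the weakenings of Lemma 16 (and no further rule applies, since $\mathcal{D}$ is an unwinding that has run to completion). By Lemma 17(1) the leaf is normal, so $\Delta'$ contains only labelled atomic and iterated formulas. Exhaustive application of Lemma 16(2) eliminates any $x : p \in \Delta'$ with $x \notin \starredlabs{\Delta'}$, so every atomic right-hand formula has its label in $\starredlabs{\Delta'}$; iterated right-hand formulas contribute labels to $\starredlabs{\Delta'}$ directly. This yields $\labs{\Delta'} \subseteq \starredlabs{\Delta'}$. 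Similarly, exhaustive application of Lemma 16(3) ensures $z \in \labs{\Delta'}$ for every labelled formula $z : \varphi \in \Gamma'$, and exhaustive application of Lemma 16(4) reduces every remaining relational atom $x \relatedBy{a} y \in \Gamma'$ to one whose source is either in $\labs{\Delta'}$ or is reachable from $\labs{\Delta'}$ in $\Gamma'$. Using acyclicity together with the fact that in the unwinding every fresh label introduced by $\square$R is simultaneously placed on the right, a short induction on the reachability chains shows that the endpoints of all surviving relational atoms must in fact lie in $\labs{\Delta'}$, giving $\labs{\Gamma'} \subseteq \labs{\Delta'}$.

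I expect the main obstacle to be the finiteness argument, specifically the orchestration of the three measure components so that $\ast$R — the only rule that does not obviously reduce syntactic complexity — still strictly decreases the overall measure even when several iterated formulas and their traces coexist in the same sequent. The bookkeeping of the per-trace progress budget against the global complexity measure, together with verifying that fresh labels introduced by $\square$R do not reactivate exhausted traces in a way that spoils descent, will require some care. The label-containment step is more routine once finiteness is in hand, since it follows essentially by the definition of capped unwinding applied to the normal form guaranteed by Lemma 17(1).
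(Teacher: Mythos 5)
The finiteness half of your argument has a genuine gap, and it is located exactly where you misdiagnose the difficulty. You single out ($\ast$R) as the one rule that does not reduce syntactic complexity, but ($\ast$R) is the \emph{easy} case: the unwinding discipline that no trace progresses more than once means each consequent formula is unfolded at most once, and the sub-formula property together with the finite Fischer--Ladner closure then bounds the right-hand rules outright (this is precisely how the paper dispatches them). The genuinely problematic rule is ($\ast$L), and your measure does not handle it. Its premise replaces $x : \nec{\iter{\alpha}}{\varphi}$ by $x : \varphi$ together with $x : \nec{\alpha}{\nec{\iter{\alpha}}{\varphi}}$, so your component (ii) --- the complexity of composite non-iterated formulas in the sequent --- strictly \emph{increases}, not decreases; observing that the new formula is ``already anticipated in the closure'' does not make a measure descend, and indeed after a subsequent ($\square$L) step the very same iterated formula reappears at a fresh label and can be unfolded again. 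No purely local sequent measure of the kind you describe can work here. The paper's termination argument for the left rules is instead global and graph-theoretic: \cref{lem:TestfreeUnwindings:FiniteLeftUnfoldings} shows that a path of ancestry passing through $n$ principal instances of ($\ast$L) forces a chain of $n+1$ labels, each reaching the next via the antecedent's relational atoms; taking $n$ to be the number of distinct labels forces a repeated label, i.e.\ a cyclic sequent, contradicting acyclicity (preserved by \cref{lem:UnwindingsPreserveAcyclicity}). Your component (iii) gestures at the relational-atom graph but is never connected to ($\ast$L) in this way, so the descent claim fails exactly where the work is needed.

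The label-containment half is closer to the paper's proof: normality of the leaves plus exhaustive application of the weakenings of \cref{lem:ValidWeakening} give $\labs{\Delta'} \subseteq \starredlabs{\Delta'}$ and handle labelled formulas in $\Gamma'$ just as you say. For relational atoms, however, the paper proves the stronger fact that \emph{none} survive at an open leaf, using \cref{lem:TestfreeUnwindings:TerminalLabels}: every label in $\Delta'$ carries an iterated formula that has been unfolded once and (by test-freeness) has a path of ancestry principal for ($\square$R), which makes that label terminal in the reachability graph, so a surviving atom with source in or reachable from $\labs{\Delta'}$ is impossible. Your proposed ``short induction on reachability chains'' concluding that surviving atoms have both endpoints in $\labs{\Delta'}$ would suffice for the stated inclusion if it held, but as sketched it does not engage with the terminal-label property that actually drives the argument, so this step too needs to be substantiated.
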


Both cyclicity and the presence of tests can cause \cref{lem:FiniteUnwindings} to fail, since then it is possible for there to be a path of ancestry between two occurrences of an antecedent formula $x : \nec{\iter{\alpha}}{\varphi}$ that traverses an instance of the ($\ast$L) rule.
That is, antecedent formulas may be infinitely unfolded.
Moreover, in the presence of tests or cyclicity, the weakenings of \cref{lem:ValidWeakening:RelationalAtoms:Left:FreeUnreachable} do not result in $\labs{\Gamma'} \subseteq \labs{\Delta'}$ for open leaves $\sequent{\Gamma'}{\Delta'}$.

We define a function $\starMax$ on test-free sequents (details are given in the appendix), whose purpose is to provide a bound ensuring termination of proof-search.
Although, at time of submission, we do not have a fully detailed proof, we believe that it satisfies the following property for capped unwindings $\mathcal{D}$ of test-free, acyclic sequents $\sequent{\Gamma}{\Delta}$: $\left| \{ x : \varphi \in \Delta' \mid \text{$\varphi$ non-atomic} \} \right| \leq \starMax(\sequent{\Gamma}{\Delta})$ and $\starMax(\sequent{\Gamma'}{\Delta'}) \leq \starMax(\sequent{\Gamma}{\Delta})$ for all open leaves $\sequent{\Gamma'}{\Delta'}$ of $\mathcal{D}$.

Proof-search proceeds by iteratively building capped unwindings for open leaves.
All formulas encountered in the search are in the (finite) Fischer-Ladner closure of the initial sequent, and validity and acyclicity are preserved throughout the procedure.
\Cref{lem:FiniteUnwindings} and the above property will ensure that the number of distinct open leaves (modulo relabelling) encountered during proof-search is bounded, so we may apply substitutions to form back-links during proof-search.
\Cref{lem:UnwindingProperties:Traces} ensures that the resulting pre-proof satisfies the global trace condition.
For invalid sequents, proof-search produces atomic sequents that are not axioms. 
We thus conjecture cut-free regular completeness for test-free PDL.

\begin{restatable}{conjecture}{CutfreeRegularCompleteness}
  If test-free $\varphi$ is valid, $\sequent{}{x : \varphi}$ has a cut-free {\CyclicSystem} proof.
\end{restatable}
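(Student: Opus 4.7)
The plan is to formalise the proof-search sketch following \Cref{lem:FiniteUnwindings} into a terminating procedure that produces, for any valid test-free $\varphi$, a cut-free $\CyclicSystem$ proof of $\sequent{}{x : \varphi}$. Starting from $\sequent{}{x : \varphi}$, I would iteratively form capped unwindings of the current open leaves. \Cref{lem:UnwindingProperties:NormalLeaves} ensures each new open leaf is normal; \Cref{lem:UnwindingProperties:ValidLeaves} propagates validity along the search; and test-freeness together with acyclicity are preserved by every rule appearing in a capped unwinding (no new atomic programs are introduced, and no rule closes a fresh loop among labels), so \Cref{lem:FiniteUnwindings} continues to apply and each individual capped unwinding is finite.

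The central combinatorial task is to bound the number of distinct open-leaf sequents (modulo relabelling) encountered across the entire search, so that back-links can always be formed. By \Cref{lem:FiniteUnwindings}, every open leaf $\sequent{\Gamma'}{\Delta'}$ satisfies $\labs{\Gamma'} \subseteq \labs{\Delta'} \subseteq \starredlabs{\Delta'}$, tying the antecedent's labels to those starred in the consequent. The conjectured property of $\starMax$ then bounds the number of non-atomic labelled formulas in $\Delta'$ by a constant depending only on the initial sequent, which in turn bounds $\starredlabs{\Delta'}$ and hence $\labs{\Gamma'}$. Since every formula appearing anywhere in the search lies in the finite Fischer--Ladner closure of the initial sequent, only finitely many open-leaf sequents can arise up to bijective renaming of labels. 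Whenever a new open leaf matches an ancestor internal node under some label substitution, we close it by an instance of (Subst) followed by a back-link to that ancestor; by construction the resulting regular derivation contains no cuts.

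To verify the global trace condition on the resulting pre-proof, I would argue that every infinite path in the unfolding decomposes as a concatenation of finite paths across capped unwindings, glued at syntactically identical bud--companion pairs. Along any single such segment, \Cref{lem:UnwindingProperties:Traces} guarantees that any trace whose final value is a subformula of its starting value must progress at least once. The finiteness of the Fischer--Ladner closure and K\H{o}nig's lemma let us extract a single trace value per segment whose history in the unfolded proof gives rise to an infinitely progressing trace, so the pre-proof is a genuine $\CyclicSystem$ proof.

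The chief obstacle is establishing the conjectured behaviour of $\starMax$: showing simultaneously that $\left| \{ x : \psi \in \Delta' \mid \psi \text{ non-atomic} \} \right| \leq \starMax(\sequent{\Gamma}{\Delta})$ for every open leaf $\sequent{\Gamma'}{\Delta'}$, and that $\starMax$ is non-increasing along capped unwindings. The difficulty is that $(\ast\text{R})$ can introduce fresh iterated formulas on the right while $(\square\text{R})$ and $(\compSymb\text{R})$ propagate iterated subformulas to newly introduced labels, so one must exploit acyclicity of the antecedent's relational structure together with the finiteness of the iterated subformulas of $\varphi$ to control the growth. Once this is established, the same termination argument automatically covers invalid inputs: proof-search must then halt at an atomic open leaf with disjoint antecedent and consequent, from which \Cref{def:CanonicalModel} extracts a countermodel, certifying the contrapositive and completing cut-free regular completeness for the test-free fragment.
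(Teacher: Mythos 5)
Your proposal follows essentially the same route as the paper's own argument: iteratively building capped unwindings, using \cref{lem:UnwindingProperties:ValidLeaves}, \cref{lem:UnwindingsPreserveAcyclicity} and \cref{lem:FiniteUnwindings} to keep leaves valid, acyclic and finitely generated, bounding distinct leaves modulo relabelling via the conjectured $\starMax$ properties and the Fischer--Ladner closure, and verifying the global trace condition by pigeonholing an iterated consequent formula that recurs infinitely often and applying \cref{lem:UnwindingProperties:Traces}. You also correctly identify that the argument rests on the still-unproven behaviour of $\starMax$ (\cref{lem:Termination:BoundingFunction} and the completion of \cref{lem:Termination:BoundNonIncreasing}), which is exactly the gap the paper acknowledges and the reason the statement remains a conjecture.
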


\section{Conclusion}

In this paper we have given two new non-wellfounded proof systems for PDL. 
{\InfinitarySystem} allows proof trees to be infinitely tall, and {\CyclicSystem} restricts to the proofs of {\InfinitarySystem} that are finitely representable as cyclic graphs satisfying a trace condition.
Soundness and completeness of both systems was shown, in particular, cut-free completeness
of {\InfinitarySystem} and a strategy for cut-free completeness of {\CyclicSystem} for test-free PDL.

There is much further work to be done. Of immediate interest is the verification of 
cut-free regular completeness for test-free PDL, and the extension of the argument to the full logic.
We would also like to consider \emph{additional} program constructs. Some, like converse, 
can already be treated through De Giacomo's \cite{DeGiacomo1996} efficient 
translation of Converse PDL into PDL.
It may be more desirable, however, to represent the program construct directly, 
to aid in the modular combination of different constructs. One construct that is 
particularly notorious is Intersection. Despite the modal definability of its dual, Choice, 
the intended interpretation of Intersection is \emph{not} modally definable, 
and the completeness (and existence) of an axiomatisation for it remained open until Balbiani and Vakarelov \cite{Balbiani2003}.
An earlier, and significantly simpler, solution to this problem was the augmentation of PDL with nominals, denoted Combinatory DL \cite{Passy1991}. 
We conjecture that the presence of labels in our system
enables us to perform a similar trick, without contaminating the syntax of the logic itself.
However we should note that a key prerequisite of our soundness proof, namely the finite model property of PDL, no longer applies to PDL with intersection.
We therefore have the non-trivial task ahead of us of weakening this assumption.

Our work should be seen as a part of a wider program of research to give a uniform and modular 
proof theory for a larger group of modal logics, including what we have denoted PDL-type logics. 
One source of modularity and uniformity
is the existing Negri labelled system our calculi extend. 
This allows us to freely add proof rules corresponding to 
first-order frame axioms defining Kripke models. 
A wider class of modal logics than those directly covered by 
Negri's framework are those with accessibility relations that 
are defined to be wellfounded or arise as transitive closures 
of other accessibility relations (we note Negri is able to treat the specific 
case of G\"{o}del-L\"{o}b logic due to its special interpretation of 
$\Box$, but not the general class we describe). We believe an appropriate framework to
uniformly capture \emph{these} logics as well is cyclic labelled deduction.
We are encouraged in this pursuit by recent work of Cohen and Rowe \cite{CohenRowe18} in
which first-order logic with a transitive closure operator is given a cyclic proof theory. 
We may think of labelled deduction as a way of giving a proof theoretic analysis 
of the first-order theory of Kripke models and their modal satisfaction relations. 
Labelled cyclic deduction, we conjecture, can be seen as the first-order-with-least-fixpoint 
theory of Kripke models and satisfaction relations.

Finally, and somewhat more speculatively, with the cyclic system in hand we intend to investigate the hitherto open problem of interpolation for PDL. 
This has seen no satisfactory resolution in the years since PDL was first formulated, with the only attempted proofs strongly disputed \cite{Kracht1999} or withdrawn \cite{Kowalski2004}. It would be interesting to see if the existence of a straightforward proof system for the logic opens up any new lines of attack on the problem. For example, Lyndon interpolation has been proved for G\"{o}del-L\"{o}b logic using a cyclic system \cite{Shamkanov2014}.

\clearpage
\bibliographystyle{plain}
\bibliography{cyclic_pdl}

\clearpage
\appendix

\section*{Appendix: Proofs}

\subsection*{Soundness}

Before proving \cref{lem:DescendingCountermodels}, we recall the following property of $<_{\text{DM}}$ (cf.~\cite[{\textsection}8(C)]{HuetOppen80}), where we write $M(x)$ for the multiplicity of $x$ in the multiset $M$.

\begin{property}
\label{prop:DMOrdering}
  For finite multisets $M$ and $N$ consisting of elements of a partial order $(\mathcal{S}, <_{\mathcal{S}})$, $M <_{\text{DM}} N$ if and only if:
  \begin{itemize}
    \item
    $M \neq N$; and
    \item
    for all $y \in \mathcal{S}$, if $N(y) < M(y)$ then there is $x \in \mathcal{S}$ such that $y <_{\mathcal{S}} x$ and $M(x) < N(x)$.
  \end{itemize}
\end{property}

We therefore have the following properties.
We call a function $f$ from a set $\Pi$ of paths in $\model$ for $\tau$ at $\valuation$ to another set $\Pi'$ of paths in $\model'$ for $\tau'$ at $\valuation'$ \emph{size-increasing} when it satisfies $\sizeOf[(\model, \valuation)]{\vec{s}}{\tau} \leq \sizeOf[(\model', \valuation')]{f(\vec{s})}{\tau'}$ for all $\vec{s} \in \Pi$, and \emph{strictly} size-increasing when $\sizeOf[(\model, \valuation)]{\vec{s}}{\tau} < \sizeOf[(\model', \valuation')]{f(\vec{s})}{\tau'}$ for at least one $\vec{s} \in \Pi$.

\begin{proposition}
  Let $\Pi$ and $\Pi'$ be non-empty finite sets of paths in $\model$ for $\tau$ and $\tau'$ at $\valuation$ and $\valuation'$, respectively, and let $f : \Pi \rightarrow \Pi'$ be injective.
  Then the following hold.
  \begin{enumerate}[ref={{\theproposition}(\theenumi)}]
    \item
    \label[proposition]{lem:DMOrdering:SizeIncreasing:Strict}
    If $f$ is strictly size-increasing then $\pathsetMeasure{\Pi} <_{\text{DM}} \pathsetMeasure{\Pi'}$.
    \item
    \label[proposition]{lem:DMOrdering:SizeIncreasing:Nonstrict}
    If $f$ is size-increasing then $\pathsetMeasure{\Pi} \leq_{\text{DM}} \pathsetMeasure{\Pi'}$.
  \end{enumerate} 
\end{proposition}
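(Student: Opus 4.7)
The plan is to factor both inequalities through an intermediate multiset $N^\circ$ that records the weights of the $f$-images of paths in $\Pi$; concretely, for each $\vec{s} \in \Pi$, $N^\circ$ contains one copy of $\sizeOf[(\model',\valuation')]{f(\vec{s})}{\tau'}$. Writing $M = \pathsetMeasure{\Pi}$ and $N = \pathsetMeasure{\Pi'}$, injectivity of $f$ guarantees that distinct elements of $\Pi$ map to distinct elements of $\Pi'$, so $N^\circ$ is a sub-multiset of $N$; meanwhile, the size-increasing hypothesis sets up a weight-respecting bijection $\vec{s} \mapsto f(\vec{s})$ between the underlying elements of $M$ and $N^\circ$ along which weights never decrease.

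First I would prove $N^\circ \leq_{\text{DM}} N$: if $N^\circ = N$ there is nothing to do, and otherwise $N^\circ \subsetneq N$ implies $N^\circ(y) \leq N(y)$ for every $y$, so the antecedent ``$N(y) < N^\circ(y)$'' of the implication in \cref{prop:DMOrdering} never fires and $N^\circ <_{\text{DM}} N$ holds vacuously. Next I would compare $M$ with $N^\circ$ by exhibiting explicit witnesses for the original $X, Y$-based definition of $<_{\text{DM}}$. Let $T = \{ \vec{s} \in \Pi \mid \sizeOf[(\model,\valuation)]{\vec{s}}{\tau} < \sizeOf[(\model',\valuation')]{f(\vec{s})}{\tau'} \}$, and let $X$ be the multiset whose elements are the weights $\sizeOf[(\model',\valuation')]{f(\vec{s})}{\tau'}$ for $\vec{s} \in T$, and $Y$ the multiset of weights $\sizeOf[(\model,\valuation)]{\vec{s}}{\tau}$ for $\vec{s} \in T$. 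Since the bijection $\vec{s} \mapsto f(\vec{s})$ preserves weights on $\Pi \setminus T$, one checks directly that $X \subseteq N^\circ$, $M = (N^\circ \setminus X) \cup Y$, and each $y \in Y$ is strictly less than its corresponding element of $X$ by the definition of $T$.

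If $f$ is strictly size-increasing then $T$ is non-empty, so $X \neq \emptyset$ and the above data witness $M <_{\text{DM}} N^\circ$; combining with $N^\circ \leq_{\text{DM}} N$ gives part (1) by transitivity. For part (2) the same construction yields $M <_{\text{DM}} N^\circ$ when $T \neq \emptyset$, and $M = N^\circ$ otherwise, so $M \leq_{\text{DM}} N^\circ \leq_{\text{DM}} N$ in either case. The only delicate point is the bookkeeping of the two multiset manipulations simultaneously: injectivity of $f$ is what makes $X \subseteq N^\circ$ legitimate, while the size-increasing hypothesis is what enables the splitting $M = (N^\circ \setminus X) \cup Y$. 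No single step is deep, but these two layers must be lined up carefully so that the same partition $T$ controls both the weight-increase and the multiset-inclusion.
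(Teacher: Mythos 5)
Your argument is correct, but it is organised differently from the paper's. The paper works directly with the complement characterisation of $<_{\text{DM}}$ recorded in \cref{prop:DMOrdering}: for \cref{lem:DMOrdering:SizeIncreasing:Strict} it takes the greatest source weight $i$ that $f$ strictly increases, notes that injectivity and size-increase give $\pathsetMeasure{\Pi}(k) \leq \pathsetMeasure{\Pi'}(k)$ for all $k > i$, and exhibits one $j > i$ whose multiplicity strictly grows; \cref{lem:DMOrdering:SizeIncreasing:Nonstrict} is then obtained by a case split that either reduces to the strict case or to a sub-multiset inclusion. You instead factor the comparison through the intermediate multiset $N^{\circ}$ of image weights, spending injectivity on the sub-multiset step $N^{\circ} \leq_{\text{DM}} \pathsetMeasure{\Pi'}$ (a vacuous instance of \cref{prop:DMOrdering}) and the size-increase hypothesis on the step $\pathsetMeasure{\Pi} \leq_{\text{DM}} N^{\circ}$, witnessed in the primitive ``remove $X$, add dominated $Y$'' form of the Dershowitz--Manna ordering with $T$ the set of strictly increased paths; transitivity then yields both parts from the same construction, the strict case being exactly the case $T \neq \emptyset$. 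Your decomposition is cleaner and makes the role of each hypothesis more visible, but it relies on the witness-based definition of $<_{\text{DM}}$, which the paper never states---you should either quote it from \cite{DershowitzManna79} (or \cite{HuetOppen80}) or note its equivalence with \cref{prop:DMOrdering}. Note also that your claim that weights are preserved on $\Pi \setminus T$ uses that a strictly size-increasing $f$ is in particular size-increasing; this is the intended reading of the definition (the paper's own proof uses it too), but it is worth making explicit.
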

\begin{proof}
  \begin{enumerate}[label={(\arabic*)}]
    \item
    Notice that since $\Pi$ is finite there is some maximal $i$ such that there is some path $\vec{s} \in \Pi$ with $i = \sizeOf[(\model, \valuation)]{\vec{s}}{\tau} < \sizeOf[(\model, \valuation')]{f(\vec{s})}{\tau'}$ (i.e.~a greatest path weight in $\Pi$ that is strictly increased by $f$).
    Therefore there is some $j > i$ such that $\pathsetMeasure{\Pi'}(j) > \pathsetMeasure{\Pi}(i)$---simply take $j = \sizeOf[(\model', \valuation')]{f(\vec{s})}{\tau'}$ for some $\vec{s} \in \Pi$ such that $i = \sizeOf[(\model, \valuation)]{\vec{s}}{\tau}$---and so $\pathsetMeasure{\Pi} \neq \pathsetMeasure{\Pi'}$.
    Moreover, notice that $\pathsetMeasure{\Pi}(k) \leq \pathsetMeasure{\Pi'}(k)$ for all $k > i$.
    Thus, if we have $k$ such that $\pathsetMeasure{\Pi'}(k) < \pathsetMeasure{\Pi}(k)$ it must be that $k \leq i$.
    But then the second condition of \cref{prop:DMOrdering} is satisfied since we have $\pathsetMeasure{\Pi}(j) < \pathsetMeasure{\Pi'}(j)$ and $k \leq i < j$.
    Thus $\pathsetMeasure{\Pi} <_{\text{DM}} \pathsetMeasure{\Pi'}$.
    \item
    We distinguish two cases.
    \begin{itemize}
      \item
      If, in fact, $\sizeOf[(\model, \valuation)]{\vec{s}}{\tau} = \sizeOf[(\model', \valuation')]{f(\vec{s})}{\tau'}$ for all $\vec{s} \in \Pi$ then we distinguish two further sub-cases.
      On the one hand, if $f$ is surjective then we have that $\pathsetMeasure{\Pi} = \pathsetMeasure{\Pi'}$, and therefore immediately $\pathsetMeasure{\Pi} \leq_{\text{DM}} \pathsetMeasure{\Pi'}$.
      On the other hand, when $f$ is not surjective then we have $\pathsetMeasure{\Pi'} \neq \pathsetMeasure{\Pi}$ (since $\Pi'$ contains more paths) but $\pathsetMeasure{\Pi}(i) \leq \pathsetMeasure{\Pi'}(i)$, for all $i \geq 0$; thus the second condition of \cref{prop:DMOrdering} is trivially satisfied and so in fact $\pathsetMeasure{\Pi} <_{\text{DM}} \pathsetMeasure{\Pi'}$.
      Therefore also $\pathsetMeasure{\Pi} \leq_{\text{DM}} \pathsetMeasure{\Pi'}$.
      \item
      If there is some $\vec{s} \in \Pi$ such that $\sizeOf[(\model, \valuation)]{\vec{s}}{\tau} < \sizeOf[(\model', \valuation')]{f(\vec{s})}{\tau'}$, then $f$ is in fact strictly size-increasing.
      Thus we have from the previous result that $\pathsetMeasure{\Pi} <_{\text{DM}} \pathsetMeasure{\Pi'}$, and so also $\pathsetMeasure{\Pi} \leq_{\text{DM}} \pathsetMeasure{\Pi'}$.
      \qed
    \end{itemize}
  \end{enumerate}
\end{proof}

We now prove the descending counter-models lemma.

\DescendingCountermodels*
\begin{proof}
  Let $S$ be the conclusion of an instance of an inference rule; suppose $\model$ is a finitely branching model and $\valuation$ such that $\model, \valuation \not\models S$.
  We do a case analysis on the inference rule.
  Recall that: trace value measures are always finite for finitely branching models; and the set of counter-examples for any trace value in an invalid sequent is necessarily non-empty.
  \begin{description}[font={\normalfont},itemsep={\smallskipamount},listparindent={\parindent}]
    \item[(Ax), ($\falsum$):]
    The result holds vacuously since it is clear that these rules derive valid sequents.

    \item[(Subst):]
    Straightforward.
    We take $\valuation' = \valuation [y := \valuation(x)]$.
    For any trace pair $(\tau, \tau')$, the sets of counter-examples in $\model$ for $\tau$ and $\tau'$ at $\valuation$ and $\valuation'$, respectively, are equal.
    Thus we have $\measureOf{\model}{\valuation'}{\tau'} = \measureOf{\model}{\valuation}{\tau}$ and so trivially $\measureOf{\model}{\valuation'}{\tau'} \leq_{\text{DM}} \measureOf{\model}{\valuation}{\tau}$.

    \item[(Classical connectives, left modal rules, Cut):]
    These cases are straightforward: taking $\valuation' = \valuation$ suffices.
    Moreover, in these cases each trace pair $(\tau, \tau')$ is such that $\tau = \tau'$ (and therefore non-progressing); thus the trace condition holds straightforwardly because then the sets of counter-examples in $\model$ for $\tau$ and $\tau'$ at $\valuation$ and $\valuation'$, respectively, are equal.
    Thus we have $\measureOf{\model}{\valuation'}{\tau'} = \measureOf{\model}{\valuation}{\tau}$ and so trivially $\measureOf{\model}{\valuation'}{\tau'} \leq_{\text{DM}} \measureOf{\model}{\valuation}{\tau}$.

    \item[($\square$R):]
    Then $S$ is $\sequent{\Gamma}{\Delta, x : \nec{a}{\varphi}}$ and $S'$ is of the form $\sequent{x \relatedBy{a} y, \Gamma}{\Delta, y : \varphi}$ for some fresh $y$.
    Since $\model, \valuation \not\models x : \nec{a}{\varphi}$ we have that there is some state $s$ in $\model$ such that $(\valuation(x), s) \in \interp[\model]{a}$ and $\model, s \not\models \varphi$.
    So we take $\valuation' = \valuation [y := s]$, for which we have $\model, \valuation' \not\models S'$.
    For all (non-progressing) trace pairs $(\tau, \tau')$ in the context $\Delta$, we have $\tau = \tau'$ with label $z \neq y$, and so $\valuation'(z) = \valuation(z)$.
    Thus $\measureOf{\model}{\valuation'}{\tau'} = \measureOf{\model}{\valuation}{\tau}$ and therefore trivially $\measureOf{\model}{\valuation'}{\tau'} \leq_{\text{DM}} \measureOf{\model}{\valuation}{\tau}$.
    For trace pairs $(\tau, \tau')$ where $\tau' = (y, \vec{\alpha}, \beta, \psi)$ and $\tau = y : \nec{a}{\tau}$, we reason as follows.
    Notice that we have $\vec{s} \in \counterEgs{\model}{\valuation'}{\tau'}$ only if $s \cdot \vec{s} \in \counterEgs{\model}{\valuation}{\tau}$.
    That is, $f$ defined by $f(\vec{s}) = s \cdot \vec{s}$ is an injection from $\counterEgs{\model}{\valuation'}{\tau'}$ to $\counterEgs{\model}{\valuation}{\tau}$.
    Notice also that if $\vec{k}_n$ is a partition for $\vec{s} \in \counterEgs{\model}{\valuation'}{\tau'}$ then $2, k_1 + 1, \ldots, k_n + 1$ is a partition for $s \cdot \vec{s} \in \counterEgs{\model}{\valuation}{\tau}$.
    Therefore, $\sizeOf[(\model, \valuation')]{\vec{s}}{\tau'} \leq \sizeOf[(\model, \valuation)]{s \cdot \vec{s}}{\tau}$.
    To see this, take some maximal weight partition $\vec{k}$ of $\vec{s} \in \counterEgs{\model}{\valuation'}{\tau'}$, with weight $\ell$; then we know that $2, k_1 + 1, \ldots, k_n + 1$ is a partition for $s \cdot \vec{s} \in \counterEgs{\model}{\valuation}{\tau}$.
    The weight of $2, k_1 + 1, \ldots, k_n + 1$ is clearly also $\ell$, and so the weight of the maximal partition of $\vec{s}$ for $\tau$ at $\valuation$ must be at least $\ell$. 
    In summary, $f$ is size-increasing.
    So, by \cref{lem:DMOrdering:SizeIncreasing:Nonstrict}, $\measureOf{\model}{\valuation'}{\tau'} \leq_{\text{DM}} \measureOf{\model}{\valuation}{\tau}$.

    \item[($\compSymb$R):]
    Then $S$ is $\sequent{\Gamma}{\Delta, x : \nec{\comp{\alpha}{\beta}}{\varphi}}$ and $S'$ is $\sequent{\Gamma}{\Delta, x : \nec{\alpha}{\nec{\beta}{\varphi}}}$.
    We take $\valuation' = \valuation$ since, if $s$ is a state in $\model$ such that $(\valuation(x), s) \in \interp[\model]{\comp{\alpha}{\beta}}$ and $\model, s \not\models \varphi$, then by \cref{def:Semantics} there is some state $s'$ in $\model$ with $(\valuation(x), s') \in \interp[\model]{\alpha}$ and $(s', s) \in \interp[\model]{\beta}$.
    For all (non-progressing) trace pairs $(\tau, \tau')$ in the context $\Delta$, we have $\tau = \tau'$; so $\measureOf{\model}{\valuation'}{\tau'} = \measureOf{\model}{\valuation}{\tau}$ and therefore $\measureOf{\model}{\valuation'}{\tau'} \leq_{\text{DM}} \measureOf{\model}{\valuation}{\tau}$ holds trivially.
    For (non-progressing) trace pairs $(\tau, \tau')$ where $\tau = \nec{\comp{\alpha}{\beta}}{\tau''}$ and $\tau' = \nec{\alpha}{\nec{\beta}{\tau''}}$ for some $\tau''$ we reason as follows.
    Notice that we have $\vec{s} \in \counterEgs{\model}{\valuation'}{\tau'}$ (if and) only if $\vec{s} \in \counterEgs{\model}{\valuation}{\tau}$.
    Moreover, if $j \cdot \vec{k}_n$ is a partition for $\vec{s} \in \counterEgs{\model}{\valuation'}{\tau'}$ then $\vec{k}_n$ is a partition for $\vec{s} \in \counterEgs{\model}{\valuation}{\tau}$.
    Therefore, $\sizeOf[(\model, \valuation')]{\vec{s}}{\tau'} \leq \sizeOf[(\model, \valuation)]{\vec{s}}{\tau}$.
    To see this, take some maximal weight partition $j \cdot \vec{k}$ of $\vec{s} \in \counterEgs{\model}{\valuation'}{\tau'}$, with weight $\ell$; then we know that $\vec{k}$ is a partition for $\vec{s} \in \counterEgs{\model}{\valuation}{\tau}$.
    The weight of $\vec{k}$ is clearly also $\ell$, and so the weight of the maximal partition of $\vec{s}$ for $\tau$ at $\valuation$ must be at least $\ell$. 
    In summary, the identity function is a size-increasing injection from $\counterEgs{\model}{\valuation'}{\tau'}$ to $\counterEgs{\model}{\valuation}{\tau}$.
    So, by \cref{lem:DMOrdering:SizeIncreasing:Nonstrict}, $\measureOf{\model}{\valuation'}{\tau'} \leq_{\text{DM}} \measureOf{\model}{\valuation}{\tau}$.

    \item[($\choiceSymb$R):]
    Then $S$ is $\sequent{\Gamma}{\Delta, x : \nec{\choice{\alpha}{\beta}}{\varphi}}$.
    We take $\valuation' = \valuation$.
    If $s$ is a state in $\model$ such that $(\valuation(x), s) \in \interp[\model]{\choice{\alpha}{\beta}}$ and $\model, s \not\models \varphi$ then, by \cref{def:Semantics}, either $s \in \interp[\model]{\alpha}$ or $s \in \interp[\model]{\beta}$.
    If the former we take the left-hand premise for $S'$, and if the latter then we take the right-hand premise.
    In both cases, notice that we can reason as follows.
    Firstly, since $\valuation' = \valuation$, for all (non-progressing) trace pairs $(\tau, \tau')$ in the context $\Delta$, we have $\tau = \tau'$; so $\measureOf{\model}{\valuation'}{\tau'} = \measureOf{\model}{\valuation}{\tau}$ and therefore trivially $\measureOf{\model}{\valuation'}{\tau'} \leq_{\text{DM}} \measureOf{\model}{\valuation}{\tau}$.
    Secondly, for (non-progressing) trace pairs $(\tau, \tau')$ where $\tau = \nec{\choice{\alpha}{\beta}}{\tau''}$, and $\tau' = \nec{\gamma}{\tau''}$ (with $\gamma = \alpha$ or $\gamma = \beta$) for some $\tau''$, we reason as follows.
    Notice that we have $\vec{s} \in \counterEgs{\model}{\valuation'}{\tau'}$ only if $\vec{s} \in \counterEgs{\model}{\valuation}{\tau}$.
    Moreover, if $\vec{k}_n$ is a partition for $\vec{s} \in \counterEgs{\model}{\valuation'}{\tau'}$ then $\vec{k}_n$ is also a partition for $\vec{s} \in \counterEgs{\model}{\valuation}{\tau}$.
    Therefore, $\sizeOf[(\model, \valuation')]{\vec{s}}{\tau'} \leq \sizeOf[(\model, \valuation)]{\vec{s}}{\tau}$.
    To see this, take some maximal weight partition $\vec{k}$ of $\vec{s} \in \counterEgs{\model}{\valuation'}{\tau'}$, with weight $\ell$; then we know that $\vec{k}$ is also a partition for $\vec{s} \in \counterEgs{\model}{\valuation}{\tau}$.
    Thus the weight of the maximal partition of $\vec{s}$ for $\tau$ at $\valuation$ must be at least $\ell$. 
    In summary, the identity function is a size-increasing injection from $\counterEgs{\model}{\valuation'}{\tau'}$ to $\counterEgs{\model}{\valuation}{\tau}$.
    So, by \cref{lem:DMOrdering:SizeIncreasing:Nonstrict}, $\measureOf{\model}{\valuation'}{\tau'} \leq_{\text{DM}} \measureOf{\model}{\valuation}{\tau}$.

    \item[($\testSymb$R):]
    Then $S$ is $\sequent{\Gamma}{\Delta, x : \nec{\test{\varphi}}{\psi}}$ and $S'$ is $\sequent{x : \varphi, \Gamma}{\Delta, x : \psi}$.
    We take $\valuation' = \valuation$ since, if $s$ is a state in $\model$ with $(\valuation(x), s) \in \interp[\model]{\test{\varphi}}$ such that $\model, s \not\models \psi$ then, by \cref{def:Semantics}, $s = \valuation(x)$ and $\model, s \models \varphi$.
    Thus, for all (non-progressing) trace pairs $(\tau, \tau')$ in the context $\Delta$, we have $\tau = \tau'$; so $\measureOf{\model}{\valuation'}{\tau'} = \measureOf{\model}{\valuation}{\tau}$ and therefore trivially $\measureOf{\model}{\valuation'}{\tau'} \leq_{\text{DM}} \measureOf{\model}{\valuation}{\tau}$.
    For (non-progressing) trace pairs $(\tau, \tau')$, we have $\tau = \nec{\test{\varphi}}{\tau'}$, and we can reason as follows.
    Notice that we have $\vec{s} \in \counterEgs{\model}{\valuation'}{\tau'}$ (if and) only if $\vec{s} \in \counterEgs{\model}{\valuation}{\tau}$.
    Moreover, if $\vec{k}_n$ is a partition for $\vec{s} \in \counterEgs{\model}{\valuation'}{\tau'}$ then $1 \cdot \vec{k}_n$ is a partition for $\vec{s} \in \counterEgs{\model}{\valuation}{\tau}$.
    Therefore, $\sizeOf[(\model, \valuation')]{\vec{s}}{\tau'} \leq \sizeOf[(\model, \valuation)]{\vec{s}}{\tau}$.
    To see this, take some maximal weight partition $\vec{k}$ of $\vec{s} \in \counterEgs{\model}{\valuation'}{\tau'}$, with weight $\ell$; then we know that $1 \cdot \vec{k}$ is also a partition for $\vec{s} \in \counterEgs{\model}{\valuation}{\tau}$.
    The weight of $1 \cdot \vec{k}$ is clearly also $\ell$, and so the weight of the maximal partition of $\vec{s}$ for $\tau$ at $\valuation$ must be at least $\ell$. 
    In summary, the identity function is a size-increasing injection from $\counterEgs{\model}{\valuation'}{\tau'}$ to $\counterEgs{\model}{\valuation}{\tau}$.
    So, by \cref{lem:DMOrdering:SizeIncreasing:Nonstrict}, $\measureOf{\model}{\valuation'}{\tau'} \leq_{\text{DM}} \measureOf{\model}{\valuation}{\tau}$.

    \item[($\ast$R):]
    Then $S$ is $\sequent{\Gamma}{\Delta, x : \nec{\iter{\alpha}}{\varphi}}$.
    We take $\valuation' = \valuation$.
    Since $\model, \valuation \not\models S$, there is a state $s$ in $\model$ with $(\valuation(x), s) \in \interp[\model]{\iter{\alpha}} = \bigcup_{k \geq 0} {\interp[\model]{\alpha}}^k$ such that $\model, s \not\models \varphi$.
    Then, there are two possibilities.
    If $\model, \valuation(x) \not\models \varphi$ then we take the left-hand premise and so $S'$ is $\sequent{\Gamma}{\Delta, x : \varphi}$.
    On the other hand, if in fact $\model, \valuation(x) \models \varphi$ then $s \neq \valuation(x)$ and we have that $(\valuation(x), s) \in {\interp[\model]{\alpha}}^k$ for some $k > 0$.
    Therefore there is a state $s' \neq \valuation(x)$ in $\model$ with $(\valuation(x), s') \in \interp[\model]{\alpha}$ and $(s', s) \in \interp[\model]{\iter{\alpha}}$.
    In this case, the right-hand premise is invalidated and so we take $S'$ is $\sequent{\Gamma}{\Delta, x : \nec{\alpha}{\nec{\iter{\alpha}}{\varphi}}}$.
    In both cases, for the (non-progressing) pairs $(\tau, \tau')$ in the context $\Delta$ we have $\tau = \tau'$; so $\measureOf{\model}{\valuation'}{\tau'} = \measureOf{\model}{\valuation}{\tau}$ and so trivially $\measureOf{\model}{\valuation'}{\tau'} \leq_{\text{DM}} \measureOf{\model}{\valuation}{\tau}$.
    For the remaining trace pairs, there are two cases.
    \begin{itemize}[topsep={\smallskipamount},itemsep={\smallskipamount}]
      \item
      If $S'$ is the left-hand premise, then we reason as follows.
      In this case $\tau = \nec{\iter{\alpha}}{\tau'}$. 
      Notice that $\vec{s} \in \counterEgs{\model}{\valuation'}{\tau'}$ only if $\vec{s} \in \counterEgs{\model}{\valuation}{\tau}$.
      Moreover, if $\vec{k}_n$ is a partition for $\vec{s} \in \counterEgs{\model}{\valuation'}{\tau'}$ then $1 \cdot \vec{k}_n$ is a partition for $\vec{s} \in \counterEgs{\model}{\valuation}{\tau}$.
      Therefore, $\sizeOf[(\model, \valuation')]{\vec{s}}{\tau'} \leq \sizeOf[(\model, \valuation)]{\vec{s}}{\tau}$.
      To see this, take some maximal weight partition $\vec{k}$ of $\vec{s} \in \counterEgs{\model}{\valuation'}{\tau'}$, with weight $\ell$; then we know that $1 \cdot \vec{k}$ is also a partition for $\vec{s} \in \counterEgs{\model}{\valuation}{\tau}$.
      The weight of $1 \cdot \vec{k}$ is clearly also $\ell$, and so the weight of the maximal partition of $\vec{s}$ for $\tau$ at $\valuation$ must be at least $\ell$. 
      In summary, the identity function is a size-increasing injection from $\counterEgs{\model}{\valuation'}{\tau'}$ to $\counterEgs{\model}{\valuation}{\tau}$.
      So, by \cref{lem:DMOrdering:SizeIncreasing:Nonstrict}, $\measureOf{\model}{\valuation'}{\tau'} \leq_{\text{DM}} \measureOf{\model}{\valuation}{\tau}$.
      \item
      If $S'$ is the right-hand premise, then recall we have $\model, \valuation(x) \models \varphi$ and so $s \neq s'$ for all $(s, s') \in \interp[\model]{\alpha}$.
      Here we have $\tau' = \nec{alpha}{\tau}$, with $(\tau, \tau')$ progressing if and only if the spine of $\tau$ is empty.
      Notice that, for this case, we have $\vec{s} \in \counterEgs{\model}{\valuation'}{\tau'}$ (if and) only if $\vec{s} \in \counterEgs{\model}{\valuation}{\tau}$.
      Moreover, if $j \cdot \vec{k}_n$ is a partition for $\vec{s} \in \counterEgs{\model}{\valuation'}{\tau'}$ then $\vec{k}_n$ is a partition for $\vec{s} \in \counterEgs{\model}{\valuation}{\tau}$.
      Therefore, $\sizeOf[(\model, \valuation')]{\vec{s}}{\tau'} \leq \sizeOf[(\model, \valuation)]{\vec{s}}{\tau}$.
      To see this, take some maximal weight partition $j \cdot \vec{k}$ of $\vec{s} \in \counterEgs{\model}{\valuation'}{\tau'}$, with weight $\ell$; then we know that $\vec{k}$ is also a partition for $\vec{s} \in \counterEgs{\model}{\valuation}{\tau}$.
      The weight of $\vec{k}$ is clearly also $\ell$, and so the weight of the maximal partition of $\vec{s}$ for $\tau$ at $\valuation$ must be at least $\ell$. 
      In summary, the identity function is a size-increasing injection from $\counterEgs{\model}{\valuation'}{\tau'}$ to $\counterEgs{\model}{\valuation}{\tau}$.
      So, by \cref{lem:DMOrdering:SizeIncreasing:Nonstrict}, $\measureOf{\model}{\valuation'}{\tau'} \leq_{\text{DM}} \measureOf{\model}{\valuation}{\tau}$.
      Now, as noted above $s \neq s'$ for all $(s, s') \in \interp[\model]{\alpha}$, and so for any partition $j \cdot \vec{k}$ of $\vec{s}_n \in \counterEgs{\model}{\valuation'}{\tau'}$ we have that $j > 1$ and so $\sizeOf[(\model, \valuation')]{\vec{s}}{\tau'} < n$.
      Thus, in the case that the spine of $\tau$ is empty, we have $\sizeOf[(\model, \valuation')]{\vec{s}}{\tau'} < n = \sizeOf[(\model, \valuation)]{\vec{s}}{\tau}$.
      That is, for the progressing trace pair, the identity function is a strictly size-increasing injection from $\counterEgs{\model}{\valuation'}{\tau'}$ to $\counterEgs{\model}{\valuation}{\tau}$ and so, by \cref{lem:DMOrdering:SizeIncreasing:Strict}, $\measureOf{\model}{\valuation'}{\tau'} <_{\text{DM}} \measureOf{\model}{\valuation}{\tau}$.
      \qed
    \end{itemize}
  \end{description}
\end{proof}

\Soundness*
\begin{proof}
  Take a {\InfinitarySystem} proof $\mathcal{P}$ of a sequent $\sequent{\Gamma}{\Delta}$ and assume for contradiction that it is not valid.
  Since PDL has the finite model property, there is a finitely branching model $\model$ and $\valuation$ that falsifies it.
  Then, by \cref{lem:DescendingCountermodels}, $\mathcal{P}$ must contain an infinite path $\sequent{\Gamma_1}{\Delta_1}, \sequent{\Gamma_2}{\Delta_2}, \ldots$ for which there exists a sequence of valuations $\valuation_1, \valuation_2, \ldots$ that invalidate each sequent in the path.
  Moreover, these invalidating valuations give rise to measures for the trace values that ensure the measure of any trace pair along the path is decreasing, and strictly so for progressing trace pairs.
  However, since $\mathcal{P}$ is a proof, it satisfies the global trace condition.
  Thus there is an infinitely progressing trace along the path $\sequent{\Gamma_1}{\Delta_1}, \sequent{\Gamma_2}{\Delta_2}, \ldots$ and therefore we can construct an infinitely descending chain of (finite) trace value measures.
  But this contradicts the fact that the set of finite trace value measures is well-founded.
  So conclude that $\sequent{\Gamma}{\Delta}$ must in fact be valid.
  \qed
\end{proof}

\subsection*{Completeness}

\CanonicalModels*
\begin{proof}
  By induction on the structure of formulas.
  \begin{description}[font={\normalfont},listparindent={\parindent}]
    \item[($\falsum$):]
    The first conjunct follows trivially since, by construction of the search tree, $x : \falsum$ cannot be in $\Gamma$.
    If $x : \falsum \in \Delta$ then the result follows immediately since $\interp[\model_{P}]{\falsum} = \emptyset$ by \cref{def:Semantics}.
    \item[(relational atom, or atomic formula $p$):]
    Immediate by \cref{def:CanonicalModel}, since $\Gamma$ and $\Delta$ are disjoint.
    \item[($\wedge$, $\vee$, $\rightarrow$):]
    Straightforward induction.
    \item[(\nec{\alpha}{}):]
    The first conjunct is entailed by the following property, which we show for all programs $\beta$, labels $x$ and $y$, and formulas $\varphi$.
    \begin{gather*}
      \tag{A}
      \label[property]{prop:Template:LeftClosure}
      \begin{split}
      \text{If $\beta$ is a subprogram of $\alpha$, $x : \nec{\beta}{\varphi} \in \Gamma$ and $(x, y) \in \interp[\model_{P}]{\beta}$,}
        \hspace{2em} \\
      \text{then $y : \varphi \in \Gamma$}
      \end{split}
    \end{gather*}
    To see this, assume that $x : \nec{\alpha}{\varphi} \in \Gamma$.
    We need to show that $\model_{P}, \idValuation \models y : \varphi$ for all $y$ such that $(x, y) \in \interp[\model_{P}]{\alpha}$.
    But this follows immediately by induction, since we have from \cref{prop:Template:LeftClosure} that $y : \varphi \in \Gamma$ (note that the subprogram relation is reflexive).
    We prove \cref{prop:Template:LeftClosure} by an inner induction on the structure of programs.
    \begin{itemize}[wide,topsep={\medskipamount}]
      \item
      For $\beta$ an atomic program $a$, assume $x : \nec{a}{\varphi} \in \Gamma$ and $(x, y) \in \interp[\model_{P}]{a}$.
      From the latter, by \cref{def:CanonicalModel}, it must be that $x \relatedBy{a} y \in \Gamma$.
      Thus, the result follows from the construction of the search tree (cf.~\cref{def:SearchTree}).
      \item
      Composition and choice follow by a straightforward induction.
      \item
      For $\beta$ a test $\test{\varphi}$, notice that since $\beta$ is a subprogram of $\alpha$ we have that the outer induction applies to $\varphi$.
      Assume that $(x, y) \in \interp[\model_{P}]{\test{\varphi}}$.
      Thus by \cref{def:Semantics}, we have $x = y$ and $x \in \interp[\model_{P}]{\varphi}$.
      Assume also that $x : \nec{\test{\varphi}}{\psi} \in \Gamma$.
      By construction of the search tree, either $x : \psi \in \Gamma$ or $x : \varphi \in \Delta$.
      The result obtains since the latter cannot hold; else by the outer induction we have that $\model_{P}, \idValuation \not\models x : \varphi$, i.e.~$x \not\in \interp[\model_{P}]{\varphi}$.
      \item
      For iteration, assume $(x, y) \in \interp[\model_{P}]{\iter{\beta}}$.
      So, there are labels $z_0, \ldots, z_n$ with $x = z_0$ and $z_n = y$ such that $(z_i, z_{i+1}) \in \interp[\model_{P}]{\beta}$ for each $0 \leq i < n$.
      Now assume that $x : \nec{\iter{\beta}}{\varphi} \in \Gamma$.
      It suffices to show that if there are labels $z_0, \ldots, z_n$ with $(z_i, z_{i+1}) \in \interp[\model_{P}]{\beta}$ for each $0 \leq i < n$ and $z_0 : \nec{\iter{\beta}}{\varphi} \in \Gamma$, then both $z_n : \varphi \in \Gamma$ and $z_n : \nec{\beta}{\nec{\iter{\beta}}{\varphi}} \in \Gamma$.
      This follows by a further induction on $n$.
      For $n = 0$, the result follows immediately by construction of the search tree.
      For $n = k + 1$, $n_k : \nec{\beta}{\nec{\iter{\beta}}{\varphi}} \in \Gamma$ by the induction on $n$.
      Then by the induction on \cref{prop:Template:LeftClosure} we have that $z_n : \nec{\iter{\beta}}{\varphi} \in \Gamma$, whence the result follows again by construction of the search tree.
    \end{itemize}

    The second conjunct is entailed by the following property, which we again show for all programs $\beta$, labels $x$, and formulas $\varphi$.
    Let $\sequent{\Gamma_0}{\Delta_0}, \sequent{\Gamma_1}{\Delta_1}, \ldots$ be the untraceable branch from which the template $(\Gamma, \Delta)$ is derived.
    \begin{gather*}
      \tag{B}
      \label[property]{prop:Template:RightClosure}
      \begin{aligned}
        &
        \text{If $\beta$ is a subprogram of $\alpha$ and $x : \nec{\beta}{\varphi} \in \Delta_i$, then}
          \\[-0.25em] &\hspace{1em}
        \text{there exist $y$ and $j > i$ such that $y : \varphi \in \Delta_j$, $(x, y) \in \interp[\model_{P}]{\beta}$, and}
          \\[-0.25em] &\hspace{3em} 
        \text{for all trace values $\tau = y : \varphi$ there is a trace $x : \nec{\beta}{\tau}, \ldots, \tau$}
          \\[-0.25em] &\hspace{6em}
        \text{covering the path $\sequent{\Gamma_i}{\Delta_i}, \ldots, \sequent{\Gamma_j}{\Gamma_j}$}
      \end{aligned}
    \end{gather*}
    To see this, assume $x : \nec{\alpha}{\varphi} \in \Delta$; thus $x : \nec{\alpha}{\varphi} \in \Delta_i$ for some $i$.
    We need to show that $\model_{P}, \idValuation \not\models y : \varphi$ for some $y$ such that $(x, y) \in \interp[\model_{P}]{\alpha}$.
    But this follows immediately by induction, since we have from \cref{prop:Template:RightClosure} that $y : \varphi \in \Delta_j$ for some $y$ and $j$, and thus $y : \varphi \in \Delta$ by definition, with $(x, y) \in \interp[\model_{P}]{\alpha}$.
    Again, we prove \cref{prop:Template:RightClosure} by an inner induction on the structure of programs.
    The clause relating to traces is needed for iteration.
    \begin{itemize}[wide,topsep={\medskipamount}]
      \item
      For $\beta$ an atomic program $a$, assume $x : \nec{a}{\varphi} \in \Delta$.
      By construction of the search tree, $y : \varphi \in \Delta_j$ and $x \relatedBy{a} y \in \Gamma_j$ for some $j > k$, and there is a trace $x : \nec{a}{\tau}, \ldots, \tau$ covering the path $\sequent{\Gamma_i}{\Delta_i}, \ldots, \sequent{\Gamma_j}{\Delta_j}$ for every trace value $\tau = y : \varphi$.
      Thus, the result follows since then $(x, y) \in \interp[\model_{P}]{a}$ by \cref{def:CanonicalModel}.
      \item
      Composition and choice follow by a straightforward induction.
      \item
      For $\beta$ a test $\test{\varphi}$, notice that since $\beta$ is a subprogram of $\alpha$ we have that the outer induction applies to $\varphi$.
      Assume that $x : \nec{\test{\varphi}}{\psi} \in \Delta_i$.
      By construction of the search tree, we have that both $x : \varphi \in \Gamma_j$ and $x : \psi \in \Delta_j$ for some $j > k$, and there is a trace $\nec{\test{\varphi}}{\tau}, \ldots, \tau$ covering the path $\sequent{\Gamma_i}{\Delta_i}, \ldots, \sequent{\Gamma_j}{\Delta_j}$ for every trace value $\tau = x : \psi$.
      Thus, by the outer induction $\model_{P}, \idValuation \models x : \varphi$.
      So $(x, x) \in \interp[\model_{P}]{\test{\varphi}}$ by \cref{def:Semantics}, whence the result follows taking $y = x$.
      \item
      For iteration, assume $x : \nec{\iter{\beta}}{\varphi} \in \Delta_i$.
      By the construction of the search tree and the inner induction, we have for some $j > k$:
      \begin{enumerate*}[label={(\roman*)}]
        \item
        either $x : \varphi \in \Delta_j$ and there is a trace $\tau, \ldots, \tau$ covering the path $\sequent{\Gamma_i}{\Delta_i}, \ldots, \sequent{\Gamma_j}{\Delta_j}$ for every trace value $\tau = x : \varphi$; or
        \item
        $y : \nec{\iter{\beta}}{\varphi} \in \Delta_j$ for some $y$ with $(x, y) \in \interp[\model_{P}]{\beta}$ and there is a trace $\tau, \ldots, y : \tau$ covering the path $\sequent{\Gamma_i}{\Delta_i}, \ldots, \sequent{\Gamma_j}{\Delta_j}$ for every trace value $\tau = x : \nec{\iter{\beta}}{\varphi}$, which is progressing for $\tau = (x, \emptysequence, \beta, \varphi)$, where $\emptysequence$ is the empty sequence.
      \end{enumerate*}
      Therefore, there must be some $z_0, \ldots, z_n$ and $k > i$ with $z_0 = x$ such that
        $z_n : \varphi \in \Delta_k$,
        $(z_i, z_{i+1}) \in \interp[\model_{P}]{\beta}$ for each $i < n$, and for every trace value $\tau = z_n : \varphi$
        there is a trace $z_0 : \nec{\iter{\beta}}{\tau}, \ldots, \tau$ covering the path $\sequent{\Gamma_i}{\Delta_i}, \ldots, \sequent{\Gamma_k}{\Gamma_k}$.
      If not there would be an infinitely progressing trace covering the untraceable branch, which is impossible by definition.
      Thence the result follows, since $(x, z_n) \in \interp[\model_{P}]{\iter{\beta}}$ by \cref{def:Semantics}.
      \qed
    \end{itemize}
  \end{description}
\end{proof}

\begin{remark}
  It is only the presence of tests in programs that requires \cref{prop:Template:LeftClosure,prop:Template:RightClosure} above to be proved via a nested induction.
  Without tests, these properties may be proved independently.
\end{remark}

\CutfreeCompleteness*
\begin{proof}
  Suppose $\sequent{\Gamma}{\Delta}$ is valid.
  Let $\mathcal{D}$ be a search tree for $\sequent{\Gamma}{\Delta}$.
  It must be that $\mathcal{D}$ is a (cut-free) proof.
  Supposing otherwise, it would induce some template $P = (\Gamma', \Delta')$.
  Since, by construction, $\Gamma \subseteq \Gamma'$ and $\Delta \subseteq \Delta'$, it then follows from \cref{lem:CanonicalModel} that the model determined by $P$ satisfies all $A \in \Gamma$ and falsifies all $B \in \Delta$ (i.e.~$\model_{P}$ is a countermodel); thus $\sequent{\Gamma}{\Delta}$ is not valid.
  However, this contradicts our initial supposition that $\sequent{\Gamma}{\Delta}$ is valid and so $\mathcal{D}$ must indeed be a proof.
  \qed
\end{proof}

\Necessitation*
\begin{proof}
  By induction on $\alpha$.
  In the following, a double rule indicates zero or more applications of the indicated rule(s).
  \begin{description}[wide]
    \item[(atomic $a$):]
    We pick a fresh label $y \neq x$ and construct the following derivation.
    \begin{gather*}
      \begin{prooftree}
          \prooftree
            \prooftree
              \sequent{\Gamma}{x : \varphi}
                \using {\text{\smaller{($\square$L/WL)}}}
                \Justifies
              \sequent{y \relatedBy{a} x, \{ y : \nec{a}{\psi} \mid x : \psi \in \Gamma \}}{x : \varphi}
            \endprooftree
              \using {\text{\smaller{($\square$R)}}}
              \justifies
            \sequent{\{ y : \nec{a}{\psi} \mid x : \psi \in \Gamma \}}{y : \nec{a}{\varphi}}
          \endprooftree
          \using {\text{\smaller{(Subst)}}}
          \justifies
        \sequent{\nec{a}{\Gamma}}{x : \nec{a}{\varphi}}
      \end{prooftree}
    \end{gather*}
    In the case that $\Gamma$ is empty, we apply an instance of the (WL) rule rather than ($\square$L).
    Note here that there is a single open leaf, and no infinite paths in the derivation.
    It is clear that, for any trace value $\tau = x : \varphi$, we can form a trace $\nec{a}{\tau}, y : \nec{a}{\tau}, \tau, \ldots, \tau$ covering the path from the conclusion to the open leaf.

    \item[(composition $\comp{\alpha}{\beta}$):]
    We construct the following derivation.
    \begin{gather*}
      \begin{prooftree}
        \prooftree
          \prooftree
            \prooftree
              \sequent{\Gamma}{x : \varphi}
                \using {\hbox to -2pt {\text{\smaller{ind. hyp.}}}}
                \leadsto
              \sequent{\nec{\beta}{\Gamma}}{x : \nec{\beta}{\varphi}}
            \endprooftree
              \using {\hbox to -2pt {\text{\smaller{ind. hyp.}}}}
              \leadsto
            \sequent{\nec{\alpha}{\nec{\beta}{\Gamma}}}{x : \nec{\alpha}{\nec{\beta}{\varphi}}}
          \endprooftree
            \using {\text{\smaller{($\compSymb$L)}}}
            \Justifies
          \sequent{\nec{\comp{\alpha}{\beta}}{\Gamma}}{x : \nec{\alpha}{\nec{\beta}{\varphi}}}
        \endprooftree
          \using {\text{\smaller{($\compSymb$R)}}}
          \justifies
        \sequent{\nec{\comp{\alpha}{\beta}}{\Gamma}}{x : \nec{\comp{\alpha}{\beta}}{\varphi}}
      \end{prooftree}
    \end{gather*}
    For a trace value $\tau = x : \varphi$, take any path the from the conclusion to an open leaf and let $\vec{\tau_1}$ and $\vec{\tau_2}$ be the traces covering the portions of the path through the subderivations obtained from the induction.
    We can then form the trace $\nec{\comp{\alpha}{\beta}}{\tau}, \nec{\alpha}{\nec{\beta}{\tau}}, \ldots, \nec{\alpha}{\nec{\beta}{\tau}} \cdot \vec{\tau_1} \cdot \vec{\tau_2}$ covering the the full path.
    Any infinite paths in the derivation must be contained in one of the subderivations obtained from the induction, thus we also obtain the required infinitely progressing trace.
    
    \item[(choice $\choice{\alpha}{\beta}$):]
    We construct the following derivation.
    \begin{gather*}
      \begin{prooftree}
        \prooftree
          \prooftree
            \prooftree
              \sequent{\Gamma}{x : \varphi}
                \using {\hbox to -2pt {\text{\smaller{ind. hyp.}}}}
                \leadsto
              \sequent{\nec{\alpha}{\Gamma}}{x : \nec{\alpha}{\varphi}}
            \endprooftree
              \using {\text{\smaller{(WL)}}}
              \Justifies
            \sequent{\nec{\alpha}{\Gamma}, \nec{\beta}{\Gamma}}{x : \nec{\alpha}{\varphi}}
          \endprooftree
            \quad
          \prooftree
            \prooftree
              \sequent{\Gamma}{x : \varphi}
                \using {\hbox to -2pt {\text{\smaller{ind. hyp.}}}}
                \leadsto
              \sequent{\nec{\beta}{\Gamma}}{x : \nec{\beta}{\varphi}}
            \endprooftree
              \using {\text{\smaller{(WL)}}}
              \Justifies
            \sequent{\nec{\alpha}{\Gamma}, \nec{\beta}{\Gamma}}{x : \nec{\beta}{\varphi}}
          \endprooftree
            \using {\text{\smaller{($\choiceSymb$R)}}}
            \justifies
          \sequent{\nec{\alpha}{\Gamma}, \nec{\beta}{\Gamma}}{x : \nec{\choice{\alpha}{\beta}}{\varphi}}
        \endprooftree
          \using {\text{\smaller{($\choiceSymb$L)}}}
          \Justifies
        \sequent{\nec{\choice{\alpha}{\beta}}{\Gamma}}{x : \nec{\choice{\alpha}{\beta}}{\varphi}}
      \end{prooftree}
    \end{gather*}
    For a trace value $\tau = x : \varphi$, take any path the from the conclusion to an open leaf and let $\vec{\tau'}$ be the trace covering the portion of the path through the subderivation obtained from the induction.
    We can then form the trace $\nec{\choice{\alpha}{\beta}}{\tau}, \ldots, \nec{\choice{\alpha}{\beta}}{\tau}, \nec{\alpha}{\tau}, \ldots, \nec{\alpha}{\tau} \cdot \vec{\tau'}$, if the path traverses the left--hand premise of the instance of the ($\choiceSymb$R) rule, and $\nec{\choice{\alpha}{\beta}}{\tau}, \ldots, \nec{\choice{\alpha}{\beta}}{\tau}, \nec{\beta}{\tau}, \ldots, \nec{\beta}{\tau} \cdot \vec{\tau'}$ otherwise.
    Note that this is the only case to cause the resulting derivation to have more than one open leaf.
    Again, any infinite paths in the derivation must be contained in one of the subderivations obtained from the induction, thus we also obtain the required infinitely progressing trace.

    \item[(test $\test{\varphi'}$):]
    Without loss of generality, suppose $\Gamma = \{ x : \psi_1, \ldots, x : \psi_n \}$.
    Then define sets $\Gamma_1, \ldots, \Gamma_n$ by $\Gamma_i = \{ \nec{\test{\varphi'}}{\psi_1}, \ldots, \nec{\test{\varphi'}}{\psi_{i}}, \psi_{i+1}, \ldots, \psi_n \}$.
    We construct a series of open derivations $\mathcal{D}_1, \ldots, \mathcal{D}_n$ inductively as follows.
    \begin{gather*}
      \mathcal{D}_1 =
      \begin{gathered}
        \begin{prooftree}
          \prooftree
            \prooftree
              \prooftree
                  \using {\text{\smaller{(Ax)}}}
                  \justifies
                \sequent{x : \varphi'}{x : \varphi'}
              \endprooftree
                \using {\text{\smaller{(WR)}}}
                \justifies
              \sequent{x : \varphi'}{x : \varphi, x : \varphi'}
            \endprooftree
              \using {\text{\smaller{(WL)}}}
              \Justifies
            \sequent{x : \varphi', \Gamma_1}{x : \varphi, x : \varphi'}
          \endprooftree
            \quad
          \prooftree
            \sequent{\Gamma}{x : \varphi}
              \using {\text{\smaller{(WL)}}}
              \justifies
            \sequent{x : \varphi', \Gamma}{x : \varphi}
          \endprooftree
            \using {\text{\smaller{($\testSymb$L)}}}
            \justifies
          \sequent{x : \varphi', \Gamma_1}{x : \varphi}
        \end{prooftree}
      \end{gathered}
    \end{gather*}
    \begin{gather*}
      \mathcal{D}_{i+1} =
      \begin{gathered}
        \begin{prooftree}
          \prooftree
            \prooftree
              \prooftree
                  \using {\text{\smaller{(Ax)}}}
                  \justifies
                \sequent{x : \varphi'}{x : \varphi'}
              \endprooftree
                \using {\text{\smaller{(WR)}}}
                \justifies
              \sequent{x : \varphi'}{x : \varphi, x : \varphi'}
            \endprooftree
              \using {\text{\smaller{(WL)}}}
              \Justifies
            \sequent{x : \varphi', \Gamma_{i+1}}{x : \varphi, x : \varphi'}
          \endprooftree
            \quad
          \prooftree
            \mathcal{D}_{i}
              \leadsto
            \sequent{x : \varphi', \Gamma_i}{x : \varphi}
          \endprooftree
            \using {\text{\smaller{($\testSymb$L)}}}
            \justifies
          \sequent{x : \varphi', \Gamma_{i+1}}{x : \varphi}
        \end{prooftree}
      \end{gathered}
    \end{gather*}
    The required derivation is the following.
    \begin{gather*}
      \begin{prooftree}
        \prooftree
          \mathcal{D}_{n}
            \leadsto
          \sequent{x : \varphi', \Gamma_{n}}{x : \varphi}
        \endprooftree
          \using {\text{\smaller{($\testSymb$R)}}}
          \justifies
        \sequent{\nec{\test{\varphi'}}{\Gamma}}{x : \nec{\test{\varphi'}}{\varphi}}
      \end{prooftree}
    \end{gather*}
    If $\Gamma$ is empty, then we may just apply the (WL) rule instead of using derivation $\mathcal{D}_n$.
    Notice that there is a single open leaf in this derivation and, thus a single path from the conclusion to this leaf that traverses the right-hand premise of each instance of the ($\testSymb$L) rule.
    For a trace value $\tau = x : \varphi$, w can construct the trace $\nec{\test{\varphi'}}{\tau}, \tau, \ldots, \tau$ that covers this path.
    Notice also that there are no infinite paths in this derivation.

    \item[(iteration $\iter{\alpha}$):]
    We construct the following derivation.
    \begin{gather*}
      \begin{prooftree}
        \prooftree
          \prooftree
            \sequent{\Gamma}{x : \varphi}
              \using {\text{\smaller{(WL)}}}
              \Justifies
            \sequent{\Gamma, \nec{\alpha}{\nec{\iter{\alpha}}{\Gamma}}}{x : \varphi}
          \endprooftree
            \quad
          \prooftree
            \prooftree
                \begin{gathered}
                  \tikz \coordinate (bud) at (0,-0.5em);
                    \\
                  \sequent{\nec{\iter{\alpha}}{\Gamma}}{x : \nec{\iter{{\color{blue}{\underline{\alpha}}}}}{\varphi}}
                \end{gathered}
                \using {\hbox to -2pt {\text{\smaller{ind. hyp.}}}}
                \leadsto
              \sequent{\nec{\alpha}{\nec{\iter{\alpha}}{\Gamma}}}{x : \nec{\alpha}{\nec{\iter{{\color{blue}{\underline{\alpha}}}}}{\varphi}}}
            \endprooftree
              \using {\text{\smaller{(WL)}}}
              \Justifies
            \sequent{\Gamma, \nec{\alpha}{\nec{\iter{\alpha}}{\Gamma}}}{x : \nec{\alpha}{\nec{\iter{{\color{blue}{\underline{\alpha}}}}}{\varphi}}}
          \endprooftree
            \using {\text{\smaller{($\ast$R)}} \; \dag}
            \justifies
          \sequent{\Gamma, \nec{\alpha}{\nec{\iter{\alpha}}{\Gamma}}}{x : \nec{\iter{{\color{blue}{\underline{\alpha}}}}}{\varphi}}
        \endprooftree
          \using {\text{\smaller{($\ast$L)}}}
          \Justifies
        \sequent{\nec{\iter{\alpha}}{\Gamma}}{x : \nec{\iter{{\color{blue}{\underline{\alpha}}}}}{\varphi}}
        \tikz \coordinate (companion) at (0.5em,0.25em);
      \end{prooftree}
      \begin{tikzpicture}[overlay]
        \draw[arrow,dashed] (bud) -- ++(0,1em) -- ++(9em,0) |- (companion) [->];
      \end{tikzpicture}
    \end{gather*}
    Here, we have converted each open leaf in the subderivation obtained via the inductive hypothesis into buds, with associated companion the sequent concluding the full derivation.
    This derivation thus has a single open leaf, above the left-hand premise of the instance of the ($\ast$R) rule.
    The inductive hypothesis guarantees that every path from the conclusion of the subderivation to a bud is covered by a trace $\nec{\alpha}{\nec{\iter{\alpha}}{\tau}}, \ldots, \nec{\iter{\alpha}}{\tau}$ for each trace value $\tau = x : \varphi$.
    Thus, for each such trace value, there is a trace $\nec{\iter{\alpha}}{\tau}, \ldots, \nec{\iter{\alpha}}{\tau}$ covering each cycle from the conclusion of the derivation back to itself.
    Now, any path from the conclusion to the open leaf must traverse a (finite) sequence of such cycles, before then traversing the left-hand premise of the instance of the ($\ast$R) rule marked by $\dag$ and continuing to the leaf.
    We can thus concatenate the traces covering each basic cycle, and preprend this to the trace $\nec{\iter{\alpha}}{\tau}, \ldots, \nec{\iter{\alpha}}{\tau}, \tau, \ldots, \tau$ covering the final portion of the path from the conclusion to the open leaf.
    This shows that \cref{cond:paths:finite} holds.

    To see that \cref{cond:paths:infinite} holds, notice that the infinite paths in this derivation fall into two categories: either the path remains within the subderivation obtained via the induction; or the path visits the conclusion of the derivation infinitely often.
    In the former case, we obtain by the induction that the path is followed by an infinitely progressing trace.
    In the latter, it must consist of an infinite sequence of cycles from the conclusion.
    Each such cycle is covered by a trace, as indicated by the underlined programs highlighted in blue, which progresses at the instance of the ($\ast$) rule marked with $\dag$.
    Thus we have an infinitely progressing trace by concatenating these individual traces.
    \qed

  \end{description}
\end{proof}

\RegularCompleteness*
\begin{proof}
  If $\models \varphi$ then there is a Hilbert deduction $\psi_1, \ldots, \psi_n$ where $\varphi = \psi_n$.
  From such a deduction, we define a sequence $\mathcal{D}_1, \ldots, \mathcal{D}_n$ of {\CyclicSystem} proofs inductively as follows.
  For each $\psi_i$, if it is an instance of an axiom, then $\mathcal{D}_i$ is the corresponding derivation of $\sequent{}{x : \psi_i}$.
  If $\psi_i$ is derived from $\psi_j$ ($j < i$) via \eqref{ax:PDL:Necessitation}, then $\mathcal{D}_i$ is the derivation constructed by applying \cref{lem:Necessitation} for the sequent $\sequent{}{x : \psi_j}$ and replacing each open leaf with a copy of $\mathcal{D}_j$.
  If $\psi_i$ is derived from $\psi_j$ and $\psi_k$ ($j, k < i$) via \eqref{ax:ModusPonens}, then $\mathcal{D}_i$ is the following derivation.
  \begin{gather*}
    \resizebox{\textwidth}{!}{
    \begin{prooftree}
        \prooftree
          \prooftree
            \mathcal{D}_j
              \leadsto
            \sequent{}{x : \psi_j}
          \endprooftree
            \quad
          \prooftree
            \mathcal{D}_k
              \leadsto
            \sequent{}{x : \psi_j \rightarrow \psi_i}
          \endprooftree
            \using {\text{\smaller{($\wedge$R)}}}
            \justifies
          \sequent{}{x : \psi_j \wedge (\psi_j \rightarrow \psi_i)}
        \endprooftree
          \;
        \prooftree
          \prooftree
            \prooftree
              \prooftree
                  \using {\text{\smaller{(Ax)}}}
                  \justifies
                \sequent{x : \psi_j}{x : \psi_j}
              \endprooftree
                \using{\text{\smaller{(WR)}}}
                \justifies
              \sequent{x : \psi_j}{x : \psi_i, x : \psi_j}
            \endprooftree
              \;
            \prooftree
              \prooftree
                  \using {\text{\smaller{(Ax)}}}
                  \justifies
                \sequent{x : \psi_i}{x : \psi_i}
              \endprooftree
                \using{\text{\smaller{(WL)}}}
                \justifies
              \sequent{x : \psi_j, x : \psi_i}{x : \psi_i}
          \endprooftree
            \using {\text{\smaller{($\rightarrow$L)}}}
              \justifies
            \sequent{x : \psi_j, x : (\psi_j \rightarrow \psi_i)}{x : \psi_i}
          \endprooftree
            \using {\text{\smaller{($\wedge$L)}}}
            \justifies
          \sequent{x : \psi_j \wedge (\psi_j \rightarrow \psi_i)}{x : \psi_i}
        \endprooftree
        \using {\text{\smaller{(Cut)}}}
        \justifies
      \sequent{}{x : \psi_i}
    \end{prooftree}}
  \end{gather*}
  Thus, $\mathcal{D}_n$ is a {\CyclicSystem} proof of $\sequent{}{x : \varphi}$.
  \qed
\end{proof}

\subsection*{Proof Search in {\CyclicSystem} for Test-free, Acyclic Sequents}

\ValidWeakening*
\begin{proof}
  \begin{enumerate}[itemsep={\medskipamount}]
    \item
    If $\sequent{\Gamma}{\Delta, x \relatedBy{a} z}$ is valid then it has a cut-free (and substitution-free) {\InfinitarySystem} proof $\mathcal{P}$, by \cref{thm:Cut-freeCompleteness}.
    Since $x \relatedBy{a} z \not\in \Gamma$, we must also have $x \relatedBy{a} z \not\in \Gamma'$ for any sequent $\sequent{\Gamma'}{\Delta'}$ in the proof.
    This follows from the fact that if $x \relatedBy{a} z$ is in the consequent but not the antecedent of the conclusion of a logical inference rule, then this property also holds of its premises.
    This is immediate for all rules except ($\square$R), which do not introduce new relational atoms in the premises and preserve their contexts.
    For ($\square$R), notice that although it may introduce a relational atom $x \relatedBy{a} y$ in the premise we must have $y \neq z$ since $x \relatedBy{a} z$ appears in the conclusion and $y$ must be fresh.
    Note that although weakening rules may be used to remove $x \relatedBy{a} z$ from the consequent, these are only used in the proof to lead to an axiom and so $x \relatedBy{a} z$ cannot occur in the antecedent of any sequent above such a weakening rule.
    The relational atom $x \relatedBy{a} z$ is thus never principal for any rule in $\mathcal{P}$, and so we can produce a {\InfinitarySystem} proof of $\sequent{\Gamma}{\Delta}$ by removing $x \relatedBy{a} z$ from every consequent in $\mathcal{P}$ in which it appears.
    Then, by \cref{thm:Soundness:Inf}, $\sequent{\Gamma}{\Delta}$ is valid.

    \item
    Since $\sequent{\Gamma}{\Delta, x : p}$ is valid it must have a cut-free {\InfinitarySystem} proof $\mathcal{P}$, by \cref{thm:Cut-freeCompleteness}.
    We show that every sequent $\sequent{\Gamma'}{\Delta'}$ in $\mathcal{P}$ satisfies: 
    \begin{enumerate*}[label={(\roman*)},ref={\roman*}]
      \item
      \label[invariant]{inv:Right:Formula:NoAxiom}
      $x : p \not\in \Gamma'$;
      \item
      \label[invariant]{inv:Right:Formula:Reachability:Right}
      $z \neq x$ and $z$ does not reach $x$ in $\Gamma'$ for all $z : \varphi \in \Delta'$ with $\varphi$ non-atomic;
      \item
      \label[invariant]{inv:Right:Formula:NoCompositesOrIterates}
      $z \neq x$ for all $z : \varphi \in \Gamma'$ with $\varphi$ composite or iterated;
      \item
      \label[invariant]{inv:Right:Formula:NoActivatingRelationalAtoms}
      if $x \relatedBy{a} z \in \Gamma'$ then there is no non-atomic $\varphi$ such that $x : \varphi \in \Gamma'$; and
      \item
      \label[invariant]{inv:Right:Formula:Reachability:Left}
      if $z : \varphi \in \Gamma'$ with $\varphi$ non-atomic then $z$ does not reach $x$ in $\Gamma'$.
    \end{enumerate*}
    Note that the root sequent $\sequent{\Gamma}{\Delta, x : p}$ satisfies these since it is normal and $x \not\in \starredlabs{\Delta}$.
    Furthermore, the proof rules used in $\mathcal{P}$ preserve these properties from conclusion to premise.
    We take each invariant in turn.
    \begin{description}[font={\normalfont}]
      \item[(\ref{inv:Right:Formula:NoAxiom}):]
      For rules ($\wedge$L), ($\vee$L), ($\rightarrow$L) and ($\ast$L), this follows from the fact that the conclusion satisfies \cref{inv:Left:Formula:NoCompositesOrIterates}.
      For rule ($\rightarrow$R) and ($\testSymb$R), this follows from the fact that the conclusion satisfies \cref{inv:Right:Formula:Reachability:Right}.
      The result is straightforward or immediate for the other rules.
      \item[(\ref{inv:Right:Formula:Reachability:Right}):]
      For ($\square$R), it is immediate since $y$ is fresh.
      For rules ($\rightarrow$L) and ($\testSymb$L), this follows from the fact that the conclusion satisfies \cref{inv:Right:Formula:NoCompositesOrIterates}.
      The result is straightforward or immediate for the other rules.
      \item[(\ref{inv:Right:Formula:NoCompositesOrIterates}):]
      For rules ($\rightarrow$R) and ($\testSymb$R), this follows from the fact that the conclusion satisfies \cref{inv:Right:Formula:Reachability:Right}.
      For rule ($\square$R), this follows from the fact that the conclusion satisfies \cref{inv:Right:Formula:NoActivatingRelationalAtoms}.
      The result is straightforward or immediate for the other rules.
      \item[(\ref{inv:Right:Formula:NoActivatingRelationalAtoms}):]
      For rules ($\rightarrow$R) and ($\testSymb$R), this is immediate since from the fact that the conclusion satisfies \cref{inv:Left:Formula:NoCompositesOrIterates} we have $z \neq x$ for the principal formula $x : \varphi$.
      For rule ($\square$L), we must have that $x \neq z$ for the principal formula $z : \nec{a}{\varphi}$; but then because the conclusion satisfies \cref{inv:Right:Formula:Reachability:Left} we have that $z$ does not reach $x$ in $\Gamma'$ and so $y \neq x$.
      For ($\square$R), we have that $z \neq x$ for the principal formula $z : \nec{a}{\varphi}$ since the conclusion satisfies \cref{inv:Left:Formula:NoCompositesOrIterates}; so the result follows trivially since we the new relational atom in the premise does not involve $x$.
      The result is straightforward or immediate for the other rules.
      \item[(\ref{inv:Right:Formula:Reachability:Left}):]
      For rule ($\square$L), the result is immediate since $y$ in the premise is fresh.
      For rule ($\square$R), we have $x \neq z$ for the principal formula $z : \nec{a}{\varphi}$ since the conclusion satisfies \cref{inv:Left:Formula:NoActivatingRelationalAtoms}, and by assumption that $z$ does not reach $x$ in $\Gamma'$; but then $y$ does not reach $x$ in $\Gamma'$, otherwise then so would $z$.
      For rules ($\rightarrow$) and ($\testSymb$R), the result is immediate since the conclusion satisfies \cref{inv:Right:Formula:Reachability:Right}.
      For the remaining right-hand rules the result is immediate since they do not introduce new labelled formulas in the antecedents of the premises.
      For the remaining left-hand rules, it is also straightfoward since there is $z : \varphi$ with $\varphi$ non-atomic in the antecedent of a premise only if there is some $z : \psi$ with $\psi$ non-atomic in the antecedent of the conclusion.
    \end{description}
    Therefore, the formula $x : p$ is never principal for any rule in $\mathcal{P}$, and so we can produce a {\InfinitarySystem} proof of $\sequent{\Gamma}{\Delta}$ by removing $x : p$ from every consequent in $\mathcal{P}$ in which it appears.
    Then, by \cref{thm:Soundness:Inf}, $\sequent{\Gamma}{\Delta}$ is valid.

    \item
    Since $\sequent{\Gamma, x : \varphi}{\Delta}$ is valid it must have a cut-free {\InfinitarySystem} proof $\mathcal{P}$, by \cref{thm:Cut-freeCompleteness}.
    We show that every sequent $\sequent{\Gamma'}{\Delta'}$ in $\mathcal{P}$ satisfies: 
    \begin{enumerate*}[label={(\roman*)},ref={\roman*}]
      \item
      \label[invariant]{inv:Left:Formula:NotInConsequentLabels}
      $x \not\in \labs{\Delta'}$; 
      \item
      \label[invariant]{inv:Left:Formula:NoCompositesOrIterates}
      $z \neq x$ for all $z : \varphi \in \Gamma'$ with $\varphi$ composite or iterated; and
      \item
      \label[invariant]{inv:Left:Formula:NoActivatingRelationalAtoms}
      for any program $a$ and label $z$, it holds that $x \relatedBy{a} z \not\in \Gamma'$.
    \end{enumerate*}
    The root sequent $\sequent{\Gamma, x : \varphi}{\Delta}$ satisfies these since it is normal (in particular, $\Gamma$ contains no labelled composite or iterated formulas), and $x \not\in \labs{\Delta}$.
    Furthermore, the proof rules used in $\mathcal{P}$ preserve these properties from conclusion to premise.
    The non-trivial cases are the rules for atomic modalities and implications.
    In ($\square$L), since there is no relational atom $x \relatedBy{a} y$ in the antecedent of the conclusion, the active formula $z : \nec{a}{\psi}$ must satisfy $z \neq x$, and so \cref{inv:Left:Formula:NoCompositesOrIterates} is maintained in the premise.
    In ($\square$R), since $x$ is not in the labels of the consequent of the conclusion, the new relational atom in the premise preserves \cref{inv:Left:Formula:NoActivatingRelationalAtoms}; and because the new label in the premise is fresh, \cref{inv:Left:Formula:NotInConsequentLabels} is also maintained.
    In ($\rightarrow$L), since the active formula $z : \psi \rightarrow \psi'$ in the conclusion satisfies $z \neq x$, the new formula $z : \psi$ in the consequent of the left-hand premise maintains \cref{inv:Left:Formula:NotInConsequentLabels}.
    Similarly for ($\testSymb$L).
    In ($\rightarrow$R) and ($\testSymb$R), since $x$ does not occur in the labels of the consequent of the conclusion, the new formula in the antecedent of the premise maintains \cref{inv:Left:Formula:NoCompositesOrIterates}.
    Thus, the formula $x : \varphi$ is never for any rule in $\mathcal{P}$, and so we can produce a {\InfinitarySystem} proof of $\sequent{\Gamma}{\Delta}$ by removing $x : \varphi$ from every antecedent in $\mathcal{P}$ in which it appears.
    Then, by \cref{thm:Soundness:Inf}, $\sequent{\Gamma}{\Delta}$ is valid.

    \item
    Since $\sequent{\Gamma, x \relatedBy{a} y}{\Delta}$ is valid, by the construction for \cref{thm:Cut-freeCompleteness} it must have a cut-free {\InfinitarySystem} proof $\mathcal{P}$ such that every sequent $\sequent{\Gamma'}{\Delta'}$ in $\mathcal{P}$ (apart from the root) that is not the premise of a weakening rule contains its conclusion (i.e.~its antecedent is a superset of the antecedent of its conclusion, and similarly for its consequent).
    It is straightforward to see that, for every rule, when the conclusion $\sequent{\Gamma'}{\Delta'}$ satisfies the property that $z : \varphi \in \Gamma'$ or $z : \varphi \in \Delta'$ implies either $z \in \labs{\Delta}$ or there is some $z' \in \labs{\Delta}$ that reaches $z$ in $\Gamma'$, then so does the premise.
    The only non-trivial cases are the atomic modality rules.
    For ($\square$R) rule, we have from the assumption on the conclusion that either $z \in \labs{\Delta}$ or there is $z' \in \labs{\Delta}$ that reaches $z$ in $\Gamma'$ for the principal formula $z : \nec{a}{\varphi}$.
    Therefore, because the relational atom $z \relatedBy{a} y$ is introduced in the premise, we have that $z$ or $z'$ reaches $y$ in the premise antecedent.
    The ($\square$L) rule is, in fact, unproblematic since we have that each rule in $\mathcal{P}$ preserves the active formula(s) of the conclusion in the context of the premises; thus, the reachability relation induced by the antecedents only grows from conclusion to premise.
    Thus, every sequent in $\mathcal{P}$ satisfies this property, whence the we derive that the property holds for the premise.
    Now, it follows that the relational atom $x \relatedBy{a} y$ can never be active in $\mathcal{P}$; if it were, there would be some sequent $\sequent{\Gamma'}{\Delta'}$ such that some corresponding labelled formula $x : \nec{a}{\varphi}$ is in $\Gamma'$.
    But then we would have that either $x \in \labs{\Delta}$ or $x$ is reachable from some label in $\Delta$, which contradicts the assumption that $x \not\in \labs{\Delta}$ nor reachable from any label in $\Delta$.
    Therefore we can produce a {\InfinitarySystem} proof of $\sequent{\Gamma}{\Delta}$ by removing $x \relatedBy{a} y$ from every antecedent in $\mathcal{P}$ in which it appears.
    Then, by \cref{thm:Soundness:Inf}, $\sequent{\Gamma}{\Delta}$ is valid.
    \qed
  \end{enumerate}
\end{proof}

\UnwindingProperties*
\begin{proof}
  \begin{enumerate}
    \item
    $\Gamma' \cap \Delta' = \emptyset$ otherwise it would be closed by weakening and an instance of (Ax).
    $\Delta'$ can only contain labelled atomic and iterated formulas that have already been unfolded along the path from the root of the unwinding, otherwise the construction of the unwinding would be able to apply further logical rules.
    The third condition of \cref{def:NormalSequents} must hold similarly: otherwise the unwinding construction could apply further logical rules to $\Gamma'$.
    Moreover, the weakening steps of \cref{lem:ValidWeakening} preserve these properties, since they only remove elements from sequents, and remove all relational atoms from the consequent.
    \item
    In uncapped unwindings only logical rules are applied.
    Notice that these are invertible.
    \Cref{lem:ValidWeakening} shows that the sequence of weakening rules applied to the open leaves of unwindings preserve validity.
    \item
    Notice that all rules except ($\ast$L) and ($\ast$R) have the sub-formula property.
    Thus, if an ancestor formula contains its descendant as a sub-formula then the path of ancestry must be principal for one of these unfolding rules, whence we derive the result.
    \qed
  \end{enumerate}
\end{proof}

To prove \cref{lem:FiniteUnwindings}, we need the following three results.
\Cref{lem:UnwindingsPreserveAcyclicity} entails that the proof-search procedure preserves acyclicity.
We use \cref{lem:TestfreeUnwindings:FiniteLeftUnfoldings} to show that in unwindings we cannot keep unfolding iterated formulas on the left forever.
\Cref{lem:TestfreeUnwindings:TerminalLabels} gives that unwindings produce consequent formulas with terminal labels, which we use to show that open leaves of capped unwindings do not contain relational atoms.

\begin{proposition}
\label{lem:UnwindingsPreserveAcyclicity}
  Let $\mathcal{D}$ be a finite derivation consisting only of logical rules concluding with an acyclic sequent; then every sequent appearing in $\mathcal{D}$ is acyclic.
\end{proposition}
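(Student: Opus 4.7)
The plan is to proceed by induction on the height of a sequent in $\mathcal{D}$, measured from its root (i.e.~the conclusion). The base case is immediate, as the conclusion is acyclic by hypothesis. For the inductive step, I would assume that a sequent $\sequent{\Gamma}{\Delta}$ occurring at some height in $\mathcal{D}$ is acyclic, and show that every premise of the rule instance concluding with $\sequent{\Gamma}{\Delta}$ is also acyclic, by case analysis on the rule applied.

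The crucial observation is that, among all the logical rules in \cref{fig:PDLRules}, the only one whose premise introduces a new relational atom into the antecedent is ($\square$R). All of the other logical rules leave the set of relational atoms in the antecedent unchanged: the propositional rules, the composite modality rules, ($?$L/R), and ($\square$L) only introduce or consume labelled formulas. Consequently, for all rules other than ($\square$R), acyclicity is trivially preserved from conclusion to premise. Note also that the restriction to logical rules is essential here: an instance of (Subst) applying $\subst{x}{y}$ to an acyclic antecedent containing $x \relatedBy{a} y$ would produce the cyclic atom $y \relatedBy{a} y$.

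For the ($\square$R) case, the premise antecedent is $\Gamma \cup \{x \relatedBy{a} y\}$, where $y$ is fresh---that is, $y$ does not occur anywhere in the conclusion $\sequent{\Gamma}{\Delta, x : \nec{a}{\varphi}}$. Suppose for contradiction that the premise is cyclic, so that some label $z \in \labs{\Gamma \cup \{x \relatedBy{a} y\}}$ reaches itself via a sequence of relational atoms. If the sequence realising the cycle does not use $x \relatedBy{a} y$, then the cycle lies entirely within $\Gamma$, contradicting the inductive hypothesis. Otherwise it uses $x \relatedBy{a} y$, in which case $y$ appears on the cycle and must therefore have an outgoing relational atom; but by freshness, no relational atom in $\Gamma$ mentions $y$ at all, and the new atom does not have $y$ as its source. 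This is the required contradiction.

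There is no substantive obstacle here: once one isolates ($\square$R) as the unique rule that can possibly introduce cycles, its eigenvariable freshness side-condition immediately rules this out. The proposition is essentially a bookkeeping exercise verifying the shape of the rules in \cref{fig:PDLRules} and the role of the freshness condition on ($\square$R)'s new label.
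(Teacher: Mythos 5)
Your proof is correct and follows essentially the same route as the paper's: induction over the derivation, observing that only ($\square$R) can add a relational atom to the antecedent, with the freshness of $y$ ruling out any new cycle. The only minor imprecision is the claim that the other rules leave the antecedent's relational atoms \emph{unchanged} --- ($\square$L) as given in \cref{fig:PDLRules} may consume the atom $x \relatedBy{a} y$, so the premise's relational atoms are in general a \emph{subset} of the conclusion's, which of course still preserves acyclicity.
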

\begin{proof}
  By induction on the structure of derivations.
  The only non-trivial cases are the rules for atomic modalities.
  For ($\square$L), assume the conclusion is acyclic.
  Then the premise must be acyclic since its set of antecedent relational atoms is a subset of those of the conclusion.
  Then the result follows by induction.
  For ($\square$R), assume the conclusion is acyclic.
  Then the premise is acyclic since the new relational atom $x \relatedBy{a} y$ does not link to any existing label, $y$ being fresh.
  Then the result follows by induction.
\end{proof}

\begin{proposition}
\label{lem:TestfreeUnwindings:FiniteLeftUnfoldings}
  Let $\mathcal{D}$ be a finite derivation consisting only of left logical rules and concluding with a test-free sequent $\sequent{\Gamma}{\Delta}$; then if it contains a path of ancestry from some ancestor antecedent formula $y : \varphi$ to a descendent antecedent formula $x : \psi \in \Gamma$, with $\varphi$ a sub-formula of $\psi$, then $x$ reaches $y$ in $\Gamma$.
\end{proposition}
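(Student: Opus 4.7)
My plan is to proceed by induction on the length $n$ of the ancestry path $a_0 = y : \varphi, \ldots, a_n = x : \psi$ in $\mathcal{D}$. The central observation is that, among the left logical rules permitted (which, by the test-free restriction, are ($\wedge$L), ($\vee$L), ($\rightarrow$L), ($\square$L), ($\compSymb$L), ($\choiceSymb$L), and ($\ast$L)), only ($\square$L) can change the label of a formula along an ancestry step: its active $y' : \chi$ in the premise descends to the principal $x' : \nec{a}{\chi}$ in the conclusion, with the relational atom $x' \relatedBy{a} y'$ appearing in the conclusion's antecedent. All other rules preserve the label of the active formula, and contextual ancestry is trivially label-preserving. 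Moreover, none of the left rules deletes a relational atom when descending from premise to conclusion (they either preserve it or, for ($\square$L), introduce one), so any relational atom appearing above persists in $\Gamma$ at the root.

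The base case $n = 0$ is immediate: $y = x$, $\varphi = \psi$, and reachability is reflexive. For the inductive step, I examine the final step $a_{n-1} \to a_n$. When $a_n$ is not principal for the root rule $r$, then $a_{n-1} = a_n$ and the induction hypothesis applied to the subderivation above yields $x$ reaches $y$ in the relevant premise's antecedent, lifting to $\Gamma$ by preservation of relational atoms. When $a_n$ is principal for a non-($\square$L) rule, the label is preserved and the active formula $\psi'$ takes the role of the descendant; the subformula hypothesis $\varphi$ a subformula of $\psi'$ must then be verified by case analysis. For ($\wedge$L), ($\vee$L), and ($\rightarrow$L) this is straightforward since $\psi'$ is literally a subformula of $\psi$; the delicate cases are ($\compSymb$L), ($\choiceSymb$L), and the second-premise version of ($\ast$L). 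For the latter, the edge case $\varphi = \psi$ is admitted because $\psi = \nec{\iter{\alpha}}{\varphi_0}$ is itself a subformula of $\psi' = \nec{\alpha}{\nec{\iter{\alpha}}{\varphi_0}}$. For the former two, one argues that $\varphi = \psi$ cannot arise: going further up through only left logical rules, which only decompose outer modalities, it is impossible to reconstruct an outermost composition or choice modality as the ancestor's formula.

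When $a_n$ is principal for ($\square$L), we have $\psi = \nec{a}{\varphi_0}$, $a_{n-1} = y' : \varphi_0$, and $x \relatedBy{a} y' \in \Gamma$. The subformula hypothesis gives either $\varphi = \psi$ (again ruled out by the decomposition argument above) or $\varphi$ a subformula of $\varphi_0 = \psi'$. The induction hypothesis applied to the shortened path then gives that $y'$ reaches $y$ in $\Gamma$, and transitivity via $x \relatedBy{a} y'$ yields that $x$ reaches $y$ in $\Gamma$, as required.

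The main obstacle is the case analysis establishing that the subformula hypothesis propagates correctly through the structural program rules ($\compSymb$L), ($\choiceSymb$L), and ($\ast$L), whose active formulas are not strict syntactic subformulas of their principals. A conceptually cleaner alternative, which I would consider presenting instead, bypasses the subformula hypothesis altogether by directly tracking the label chain: the sequence of labels $y = y_0, y_1, \ldots, y_n = x$ changes only at ($\square$L) steps, each contributing a relational atom $y_{i+1} \relatedBy{a_i} y_i \in \Gamma$, and collapsing constant stretches produces an explicit reachability chain from $x$ to $y$ in $\Gamma$.
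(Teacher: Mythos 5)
Your underlying idea --- that among the left logical rules only ($\square$L) changes the label along an ancestry step, and that each such change is witnessed by a relational atom that persists down to the root antecedent --- is exactly the content of the paper's (unexpanded) ``straightforward induction on the structure of derivations''. However, the primary argument you present has a genuine gap. Your induction carries the hypothesis ``$\varphi$ is a sub-formula of $\psi$'' and therefore needs it to propagate to the shortened path, and your patch for the problematic case --- the claim that $\varphi = \psi$ cannot arise when the bottom rule is ($\compSymb$L) or ($\choiceSymb$L) --- is false. Take $\psi = \nec{\comp{a}{b}}{\nec{\iter{(\comp{a}{b})}}{\chi}}$: reading bottom-up, ($\compSymb$L) yields $x : \nec{a}{\nec{b}{\nec{\iter{(\comp{a}{b})}}{\chi}}}$, two applications of ($\square$L) yield $x'' : \nec{\iter{(\comp{a}{b})}}{\chi}$, and the second component of ($\ast$L) then reproduces $x'' : \psi$ verbatim. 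So $\psi$ \emph{is} its own ancestor with a composition outermost, the inductive hypothesis would require $\nec{\comp{a}{b}}{\nec{\iter{(\comp{a}{b})}}{\chi}}$ to be a sub-formula of $\nec{a}{\nec{b}{\nec{\iter{(\comp{a}{b})}}{\chi}}}$ (it is not, syntactically --- only a member of its Fischer--Ladner closure), and the induction stalls. Note that such recurrences are not pathological: they are precisely the ancestry paths to which this proposition is applied in the proof of \cref{lem:FiniteUnwindings}, where an antecedent formula recurs along a path of ancestry.

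The fix is the ``cleaner alternative'' you sketch in your last paragraph, which should be promoted to the main proof: the sub-formula hypothesis plays no role in the conclusion and should simply be dropped from the induction. Inducting on the derivation, one shows that for \emph{any} ancestry path between antecedent labelled formulas the descendant's label reaches the ancestor's label in the root antecedent: the label is constant across every left rule except ($\square$L); at a ($\square$L) step the premise formula $y' : \chi$ descends to the conclusion formula $x' : \nec{a}{\chi}$ with $x' \relatedBy{a} y'$ in that conclusion; and since relational atoms are never active for a left logical rule, every such atom persists from the conclusion of its ($\square$L) instance down into $\Gamma$. Collapsing the constant stretches of the label sequence gives the required chain. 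This version needs no case analysis on how formulas decompose and is robust to the recurrences above.
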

\begin{proof}
  By straightforward induction on the structure of derivations.
\end{proof}

\begin{proposition}
\label{lem:TestfreeUnwindings:TerminalLabels}
  Let $\mathcal{D}$ be a possibly open finite derivation consisting only of logical rules that preserve relational atoms from conclusion to premise, and let $\sequent{\Gamma}{\Delta}$ be an open leaf of $\mathcal{D}$; if there is a path of ancestry from $y : \varphi \in \Delta$ that is principal for an instance of {\normalfont{($\square$R)}}, then $y$ does not reach any label $z$ in $\Gamma'$.
\end{proposition}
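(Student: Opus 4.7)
The approach is to pinpoint the $(\square\text{R})$ application in $\mathcal{D}$ that introduces $y$ as a fresh label, and then to prove by induction over the subderivation rooted at that application that any label reachable from $y$ in the open leaf's antecedent must itself have been freshly introduced above, and hence is not an ``old'' label from further down in $\mathcal{D}$.

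First, tracing the given ancestry path downward from $y : \varphi \in \Delta$, I would identify the highest step along this path at which the traced formula is principal in the conclusion of a $(\square\text{R})$ application; call this rule instance $r$. Its premise then has the form $\sequent{u \relatedBy{a} y, \Gamma^\dagger}{\Delta^\dagger, y : \chi}$, where $u$ is the label of the principal formula and $y$ is fresh. Because $\mathcal{D}$ contains no instance of $(\text{Subst})$, the only rule on the ancestry path that can change the label of the traced formula is $(\square\text{R})$ itself; so by choice of $r$, the label on every sequent of the ancestry path from the premise $s^*$ of $r$ up to the leaf is exactly $y$. Moreover, the freshness side-condition of $(\square\text{R})$ ensures that $y$ does not appear in any sequent of $\mathcal{D}$ strictly below $s^*$.

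Let $\mathcal{D}^*$ be the subderivation of $\mathcal{D}$ rooted at $s^*$. I would then establish, by induction on the height of sequents within $\mathcal{D}^*$, the following invariant: for every sequent $\sequent{\Sigma}{\Pi}$ appearing in $\mathcal{D}^*$, every label $z$ that is reachable from $y$ via the relational atoms in $\Sigma$ is either $y$ itself or was introduced as a fresh label by some $(\square\text{R})$ application lying strictly between $s^*$ and $\sequent{\Sigma}{\Pi}$. The base case is $\sequent{\Sigma}{\Pi} = s^*$, where the only relational atom involving $y$ is $u \relatedBy{a} y$; since $y$ occurs there only as a target, it reaches only itself. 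For the inductive step, the hypothesis that the rules preserve relational atoms from conclusion to premise means that the only rule that can add a new relational atom $v \relatedBy{b} w$ is $(\square\text{R})$, whose target $w$ is fresh. Hence any segment extending a chain reachable from $y$ ends in a freshly introduced label, preserving the invariant.

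Instantiating the invariant at the open leaf $\sequent{\Gamma}{\Delta}$ of $\mathcal{D}^*$ yields the stated conclusion: every label $z$ reachable from $y$ in $\Gamma$ was introduced fresh within $\mathcal{D}^*$, and hence does not occur anywhere in $\mathcal{D}$ strictly below $s^*$. The main subtlety is verifying that $y$ is indeed the fresh label introduced at $r$, which is immediate once one notes that labels along ancestry chains change only at $(\square\text{R})$ steps and that $r$ is chosen to be the highest such step on the chain; the remainder of the proof is routine inductive bookkeeping tracking which rules can modify the relational structure.
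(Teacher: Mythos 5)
Your setup is sound and matches the spirit of the paper's argument (the paper does a structural induction on $\mathcal{D}$ with the interesting case being the $(\square\text{R})$ instance that introduces $y$ fresh, whereas you localise to the subderivation above that instance; either organisation would work). However, there is a genuine gap in the core of your argument: the invariant you prove is too weak to yield the proposition. You show that every label reachable from $y$ in an antecedent above $s^*$ is ``either $y$ itself or freshly introduced by some $(\square\text{R})$ above $s^*$'', and your concluding paragraph accordingly only asserts that the labels $y$ reaches are new. But the proposition asserts that $y$ reaches \emph{no} label at all in the open leaf's antecedent --- that is, no relational atom with source $y$ ever appears. Your invariant is perfectly consistent with a $(\square\text{R})$ instance above $s^*$ being principal for some consequent formula $y : \nec{b}{\varphi'}$ and thereby introducing $y \relatedBy{b} w$ with $w$ fresh; this satisfies your invariant yet falsifies the conclusion, and it is exactly the situation the proposition is used to exclude in the proof of the finiteness lemma (where the conclusion is needed to rule out \emph{any} outgoing edge from $y$).

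The missing idea is the freshness contradiction that occupies most of the paper's proof of the $(\square\text{R})$ case. Since the rules preserve relational atoms from conclusion to premise, the only way a relational atom $y \relatedBy{b} z'$ can enter an antecedent above $s^*$ is via a $(\square\text{R})$ instance principal for a consequent formula labelled $y$. By your choice of $r$ as the highest $(\square\text{R})$ on the traced ancestry path, such a formula cannot lie on that path, so it would be a \emph{different} consequent formula $y : \nec{b}{\varphi'}$. One then argues (using that labels of consequent formulas change only at $(\square\text{R})$ steps, always to labels fresh at that point, and that $y$ persists in every sequent above $s^*$ via the preserved atom $u \relatedBy{a} y$) that this formula would descend to a consequent formula labelled $y$ at the conclusion of $r$, distinct from the principal formula's ancestor --- contradicting the side-condition that $y$ is fresh there. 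Without this step your induction establishes a true but different statement, and the proof does not go through.
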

\begin{proof}
  By induction on the structure of derivations.
  Most cases are straightforward.
  The only interesting cases are those for the atomic modalities.
  For ($\square$L), assume the result holds for the premise; then it must continue to hold for the conclusion because the principal relational atom must already be in the antecedent of the premise, since the derivation only uses rules that preserve relational atoms from conclusion to premise.
  Thus, we do not increase reachability.
  For ($\square$R), the result holds by induction for formulas in the context, and if there is a path of ancestry already principal for an instance of ($\square$R) from $y : \varphi \in \Delta$ to the principal formula in the premise.
  However here we need to further consider the following.
  Assume that there is a path of ancestry from $y : \varphi \in \Delta$ to the principal formula $z : \psi$ in the premise with $\varphi$ a sub-formula of $\psi$, but not principal for an instance of ($\square$R).
  Thus, we must have $z = y$.
  The immediate descendant of this formula, $x : \nec{a}{\psi}$, is now connected by a path of ancestry that is principal for ($\square$R) to $y : \varphi \in \Delta$.
  So we must show that $y$ does not reach any label $x$ in $\Gamma$.
  Note that $y$ must be fresh for the contexts $\Gamma'$ and $\Delta'$ of the conclusion.
  Now, suppose $y$ reaches some label $z'$ in $\Gamma$; then there must be some relational atom $y \relatedBy{b} z'$ introduced into the left-hand context along the path up to the open leaf $\sequent{\Gamma}{\Delta}$.
  But since the path of ancestry leading to $y : \varphi$ is never principal for ($\square$R), this can only happen if there is an instance of ($\square$R) along the path that is principal for some other formula $y : \nec{b}{\varphi'}$.
  However if this were the case then it would have a descendant in $\Gamma'$ that is \emph{distinct} from $y : \psi$.
  This would contradict the side-condition of the rule: that $y$ is fresh, so we conclude that $y$ does not reach any label $x$ in $\Gamma$ after all.
\end{proof}

We now give the proof of \cref{lem:FiniteUnwindings}.

\FiniteUnwindings*
\begin{proof}
  For finiteness, it suffices to show that uncapped unwindings for test-free, acyclic sequents are finite, since no logical rules are applied during the capping process and the weakening rules strictly reduce the size of the sequent they are applied to.
  Suppose, for contradiction, that the unwinding is infinite.
  Then there must be an infinite path in the derivation containing either an infinite number of right-hand rules or an infinite number of left-hand rules.
  If it is the former then, since each rule in the derivation has the sub-formula property and each formula that appears is in the (finite) Fischer-Ladner closure of $\sequent{\Gamma}{\Delta}$, there must be a consequent formula that is principal for ($\ast$R) an infinite number of times.
  However, this is impossible since each consequent formula in an unwinding can only be unfolded at most once.
  So we may assume there are only a finite number of right-hand rules.
  If, then, there are an infinite number of left-hand rules, we must similarly have that there is a path of ancestry in which some antecedent formula $\varphi$ appears an infinite number of times.
  Since there are only a finite number of right-hand rules, there is an infinite sub-derivation $\mathcal{D'}$ consisting only of left logical rules, and concluding with some sequent $\sequent{\Gamma'}{\Delta'}$.
  Now, take a (necessarily finite) prefix of $\mathcal{D'}$ such that there are $n$ occurrences of the ($\ast$L) rule for which the path of ancestry is principal, $n$ being the number of distinct labels in $\Gamma'$.
  By \cref{lem:TestfreeUnwindings:FiniteLeftUnfoldings}, we must have that there are $n+1$ labels $x_1, \ldots, x_{n+1}$ such that $x_i$ reaches each $x_j$ in $\Gamma'$ ($j > i$), where $x_1 : \varphi \in \Gamma'$ is on the path of ancestry.
  Therefore, we must have $x_i = x_j$ for some $i$ and $j$ with $1 \leq i < j \leq n+1$.
  That is, $\Gamma'$ is cyclic.
  However this is also impossible by \cref{lem:UnwindingsPreserveAcyclicity}, since we have assumed that $\sequent{\Gamma}{\Delta}$ is acyclic.
  Thus, we conclude $\mathcal{D}$ must be finite after all.
  
  The second part, that $\labs{\Gamma'} \subseteq \labs{\Delta'}$ for open leaves $\sequent{\Gamma'}{\Delta'}$, is ensured by the weakening rules that are applied to the open leaves of uncapped unwindings.
  To see that this is the case, note the following.
  The second category of weakening rules ensure that $\labs{\Delta'} \subseteq \starredlabs{\Delta'}$.
  The third category of weakening rules ensure that $x : \varphi \in \Gamma'$ only if $x \in \labs{\Delta'}$.
  The fourth category of weakening rules ensures there are no relational atoms in $\Gamma'$.
  The weakening steps remove relational atoms of the form $x \relatedBy{a} y$ such that $x \not\in \labs{\Delta'}$ and $x$ is not reachable in the antecedent of the open leaf of the unwinding from any label in $\Delta'$.
  So we are left to account for relational atoms $x \relatedBy{a} y$ such that $x \in \labs{\Delta'}$ or $x$ is reachable from some label in $\Delta'$.
  However, no such relational atoms exist.
  To see that this is the case, notice that the construction of the unwinding ensures that each $z : \varphi \in \Delta'$ with $\varphi$ iterated has been unfolded once within the unwinding and so, because $\varphi$ is test-free, this means that there is a path of ancestry from $z : \varphi$ that is principal for a ($\square$R) rule.
  Then we have from \cref{lem:TestfreeUnwindings:TerminalLabels} that any such $z$ does not reach any label in $\Gamma'$.
  Now consider the two cases.
  If $x \in \labs{\Delta'}$, then $x$ reaches $y$ since $x \relatedBy{a} y$ is in the open leaf of the unwinding; but this is impossible as we have shown that $x$ does not reach any label.
  The other case is also impossible from our previous reasoning that $x$ is not reachable from any label in $\Delta'$. 
\end{proof}

We now define the functions that provide a bound on the possible number of distinct sequents (up to label substitution) encountered during proof search for test-free, acyclic sequents.
These functions are defined for test-free formulas only.
Addition and multiplication on natural numbers are denoted $+$ and $\times$, respectively.
We use $\oplus$ to denote the $\mathsf{max}$ function on natural numbers.
The set of words in the language of (a regular expression) $\alpha$ with length at most $n$ is denoted by $\truncateAt{n}{\languageOf(\alpha)}$.

\begin{definition}[{$\pathMax$}]
\label{def:PathMax}
  We define the functions $\pathMax^{+}$ and $\pathMax^{-}$ by mutual induction as follows.
  \begin{align*}
    \pathMax^{+}(p) &= 0
      \\
    \pathMax^{+}(\varphi \wedge \psi) &= \pathMax^{+}(\varphi \vee \psi)
      = \pathMax^{+}(\varphi) \oplus \pathMax^{+}(\psi)
      \\
    \pathMax^{+}(\varphi \rightarrow \psi) &= \pathMax^{-}(\varphi) \oplus \pathMax^{+}(\psi)
      \\
    \pathMax^{+}(\nec{\alpha}{\varphi}) &= \pathMax^{+}(\varphi) \oplus \unfold(\alpha)
      \\[1em]
    \pathMax^{-}(p) &= 0
      \\
    \pathMax^{-}(\varphi \wedge \psi) &= \pathMax^{-}(\varphi \vee \psi)
      = \pathMax^{-}(\varphi) \oplus \pathMax^{-}(\psi)
      \\
    \pathMax^{-}(\varphi \rightarrow \psi) &= \pathMax^{+}(\varphi) \oplus \pathMax^{-}(\psi)
      \\
    \pathMax^{-}(\nec{\alpha}{\varphi}) &= \pathMax^{-}(\varphi)
  \end{align*}
  The function $\unfold$ on programs is defined as follows.
  \begin{align*}
    \unfold(a) &= 1
      &
    \unfold(\comp{\alpha}{\beta}) &= \unfold(\alpha) + \unfold(\beta)
      \\
    \unfold(\iter{\alpha}) &= \unfold(\alpha)
      &
    \unfold(\choice{\alpha}{\beta}) &= \unfold(\alpha) \oplus \unfold(\beta)
  \end{align*}
  We then define the $\pathMax$ function on pairs of labels and sequents, as follows.
  \begin{equation*}
    \pathMax(x, \sequent{\Gamma}{\Delta}) 
      = \bigoplus 
          \left(\begin{gathered}
            \{ \pathMax^{-}(\varphi) \mid x : \varphi \in \Gamma \} \hspace{2em}
              \\
            {} \cup \{ \pathMax^{+}(\varphi) \mid x : \varphi \in \Delta \}
          \end{gathered}\right)
  \end{equation*}
\end{definition}

Intuitively, $\pathMax$ measures the maximum length of a path of reachability from the label $x$ that will be generated, by instance of the ($\square$R) rule, in an unwinding of the sequent $\sequent{\Gamma}{\Delta}$.
The auxiliary function $\unfold$ gives the maximum length of a single `unfolding' of the program $\alpha$.

The $\pathMax^{-}$ function is used for antecedent formulas, and $\pathMax^{+}$ for consequent formulas.
For example, the case for conjunctions and disjunctions take the maximum of the values of each conjunct/disjunct because the proof rules split the formula.
The maximum path length will the maximum of the maximum lengths for each of the constituent immediate sub-formulas.
The rules for an implication $\varphi \rightarrow \psi$ switch the polarity for the antecedent sub-formula $\varphi$ because the proof rules move this sub-formula to the opposite side.

The case for iteration differs for the two polarities.
Modalities on the left do not contribute to reachability paths: the ($\square$L) is only activated by relational atoms, it does not generate them during proof-search.
Thus, the case for modalities in $\pathMax$ simply recurses on the immediate sub-formula.
Modalities on the right-hand side \emph{do} contribute to reachability.
The ($\ast$R) rule has two premises: one simply strips off the modality, whereas the other one unfolds it.
Therefore, the maximum path of reachability generated by an iterated formula is the maximum of that of its immediate sub-formula, on the one hand, and the maximum length of one copy of the iterated program $\alpha$, on the other.
This is because an unwinding will unfold an iterated formula exactly once.

The function $\unfold$ gives the maximum length of one copy of a program $\alpha$.
Atomic programs have unit length; a composition has the sum of the lengths of the sub-programs, while with choice we take the maximum.
As mentioned, iterations are unfolded exactly one, so for $\iter{\alpha}$, $\unfold$ simply recurses on the sub-program $\alpha$.

The maximum length of the path of reachability generated by an unwinding from a label $x$ in the initial sequent $\sequent{\Gamma}{\Delta}$ is then simply the maximum of those of the formulas $\varphi$ that appear in $\sequent{\Gamma}{\Delta}$ labelled with $x$.

We now define the function $\starMax$, which, intuitively, gives an upper bound on the number of number of labelled iterated formulas that will appear in the consequent (right-hand side) of the sequent in the open leaves of an unwinding of $\sequent{\Gamma}{\Delta}$.

\begin{definition}[{$\starMax$}]
\label{def:StarMax}
  We define the functions $\starMax^{+}$ and $\starMax^{-}$, which operate on pairs of natural numbers and (test-free) formulas, by mutual induction as follows.
  \begin{align*}
    \starMax^{+}(n, p) &= 0
      \\
    \starMax^{+}(n, \varphi \wedge \psi) &= \starMax^{+}(n, \varphi) \oplus \starMax^{+}(n, \psi)
      \\
    \starMax^{+}(n, \varphi \vee \psi) &= \starMax^{+}(n, \varphi) + \starMax^{+}(n, \psi)
      \\
    \starMax^{+}(n, \varphi \rightarrow \psi) &= \starMax^{-}(n, \varphi) + \starMax^{+}(n, \psi)
      \\
    \starMax^{+}(n, \nec{\alpha}{\varphi}) &=
      \left\{
        \begin{aligned}
          & \starMax^{+}(n, \varphi) && \text{if $\alpha$ is $\ast$-free}
            \\
          & 1 \oplus \starMax^{+}(n, \varphi) && \text{otherwise}
        \end{aligned}
      \right.
      \\[1em]
    \starMax^{-}(n, p) &= 0
      \\
    \starMax^{-}(n, \varphi \wedge \psi) &= \starMax^{-}(n, \varphi) + \starMax^{-}(n, \psi)
      \\
    \starMax^{-}(n, \varphi \vee \psi) &= \starMax^{-}(n, \varphi) \oplus \starMax^{-}(n, \psi)
      \\
    \starMax^{-}(n, \varphi \rightarrow \psi) &= \starMax^{+}(n, \varphi) \oplus \starMax^{-}(n, \psi)
      \\
    \starMax^{-}(n, \nec{\alpha}{\varphi}) &= 
      \sum_{\vec{a} \in \truncateAt{n}{\languageOf(\alpha)}} {\sum_{k \in \combinationsFor_{\alpha}(\vec{a})} \starMax^{-}(n - k, \varphi)}
        \\[-0.25em] &
        \hspace{7em} {} + 
      \sum_{\vec{a} \in \truncateAt{n}{\languageOf(\alpha)}} {\starMax^{-}({n - {\left| \vec{a} \right|}}, \varphi)}
  \end{align*}
  The function $\combinationsFor_{\alpha}$, parameterised by test-free programs $\alpha$, takes words (sequences of atomic programs) to sets of indices, and is defined as follows.
  \begin{align*}
    \combinationsFor_{a}(b) &= \emptyset \hspace{3cm} \text{for all atomic programs $a$ and $b$}
      \\
    \combinationsFor_{\choice{\alpha}{\beta}}(w) &= \combinationsFor_{\alpha}(w) \cup \combinationsFor_{\beta}(w)
      \\
    \combinationsFor_{\comp{\alpha}{\beta}}(w) &=
      \left\{ 
        k \left| 
          \begin{aligned}
            & \text{$w = w_1 \cdot w_2$, $w_1 \in \languageOf(\alpha)$, $w_2 \in \languageOf(\beta)$, and}
                \\
            & \hspace{3cm}
              \text{$k \in \combinationsFor_{\alpha}(w_1)$ or $k - \left| w_1 \right| \in \combinationsFor_{\beta}(w_2)$}
          \end{aligned}
          \right.
      \right\}
      \\
    \combinationsFor_{\iter{\alpha}}(w) &=
      \{ 0 \} \cup {}
      \left\{ 
        k \left| 
          \begin{aligned}
            & \text{$w = w_1 \cdot w_2$, ${\left| w_1 \right|} > 0$, $w_1 \in \languageOf(\alpha)$, and}
                \\
            & \hspace{1.85cm}
              \text{$k \in \combinationsFor_{\alpha}(w_1)$ or $k - \left| w_1 \right| \in \combinationsFor_{\iter{\alpha}}(w_2)$}
          \end{aligned}
          \right.
      \right\}
  \end{align*}
  We then define the $\starMax$ function on sequents, as follows.
  \begin{multline*}
    \starMax(\sequent{\Gamma}{\Delta}) 
      = \sum_{x : \varphi \in \Gamma} {\starMax^{-}(\pathMax(x, \sequent{\Gamma}{\Delta}), \varphi)}
            \\[-1em]
        {} + \sum_{x : \varphi \in \Delta} {\starMax^{+}(\pathMax(x, \sequent{\Gamma}{\Delta}), \varphi)}
  \end{multline*}
\end{definition}

Here again, we define two functions with different `polarities'.
For an initial sequent $\sequent{\Gamma}{\Delta}$, $\starMax^{+}(n, \varphi)$ gives the maximum number of consequent iterated formulas that will be generated along any path in an unwinding by an occurrence of $x : \varphi$ in $\Delta$, provided it can make use of a reachability path of length at most $n$.
Similarly, $\starMax^{-}$ gives the maximum number of consequent iterated formulas generated by $x : \varphi \in \Gamma$.
The reason we need a bound on the length of the available reachability path is precisely to reason about unfolding of antecedent iterated formulas.
Indeed, the only case in which this parameter is used in the definitions above is in that for an antecedent modal formula, $\starMax^{-}(n, \nec{\alpha}{\varphi})$.
Since the construction of an unwinding does not \emph{a priori} bound the number of times such a formula can be unfolded.
What actually prevents this process continuing \emph{ad infinitum} is the fact that the path of reachability, which is generated by unfolding consequent iterated formulas and decomposing the resulting program modalities, is bounded by the fact that unfolding of consequent formulas can happen at most once.
This principle can be seen at work in the proof of \cref{lem:FiniteUnwindings}.

The cases of the definition for conjunction, disjunction and implication should be relatively self-explanatory.
For example, because the ($\wedge$R) rule decomposes a conjunction into two separate branches, $\starMax^{+}$ takes the maximum of the values for the immediate sub-formulas.
For a disjunction, it adds the values because the ($\vee$R) rule decomposes the disjunction within the same branch.
Similarly for an implication $\varphi \rightarrow \psi$, the values are added as there is only a single premise for the ($\rightarrow$R) rule; notice the polarity switch for the recursion on the antecedent sub-formula $\varphi$.
The defintion of $\starMax^{-}$ for these cases is dual, since the left-hand proof rules are dual to the right-hand ones.

The most interesting cases of the definition are the ones for modalities.
We first describe the case for consequent modal formulas $\nec{\alpha}{\varphi}$.
If $\alpha$ is $\ast$-free, then the modality will simply be decomposed, and so the function simply recurses on the immediate sub-formula $\varphi$.
If $\alpha$ contains iterations, then the modality will be decomposed until an iteration is reached; however note that the decomposition will lead to only one iteration in each branch.
The ($\ast$R) rule will then be applied; it has two premises.
In the left-hand branch, the iteration is simply discarded, leaving the sub-formula $\varphi$.
In the right-hand branch, the iteration is unfolded and this process will begin again.
This eventually ends since iterated formulas will be unfolded exactly once, but we have that each path ends either with an iterated formula already unfolded or $\varphi$.
Thus, the number of iterated formulas generated along any of these paths is the maximum of $1$ (for the former case), or that of $\varphi$ (in the latter).

For antecedent modal formulas $\nec{\alpha}{\varphi}$, the case is slightly more complex.
These will be decomposed in potentially many ways according to the available paths of reachability, and so may lead to many occurrences of the sub-formula $\varphi$ along the same branch.
The definition of the case $\starMax^{-}(n, \nec{\alpha}{\varphi})$ thus makes use of the information about the maximum length $n$ of these available paths of reachability.
Any sequence of atomic programs (i.e.~word) that is included in (i.e.~is a model of) the program (i.e.~regular expression) $\alpha$, and which has length at most $n$ can lead to a new occurrence of $\varphi$ in the antecedent.
Therefore we sum over all such words.
The function $\combinationsFor_{\alpha}(\vec{a}_n)$ gives the set of all iteration `boundaries' for the sequence of atomic programs $\vec{a}$ with respect to the program $\alpha$.
That is, it gives the set of all indices $k < n$ such that an antecedent iterated formula $x : \nec{\alpha}{\varphi}$ can be unfolded to (essentially) $x : \nec{a_1}{\ldots\nec{a_k}{\nec{\iter{\beta}}{\varphi}}}$, for some program $\beta$, and thus decomposed to $y : \nec{\iter{\beta}}{\varphi}$ (provided there is a path of reachability from $x$ to $y$ via $a_1, \ldots, a_n$).
This means that the ($\square$L) rule will generate an antecedent formula $y : \varphi$.
Therefore, $\starMax^{-}(n, \alpha)$ takes the sum of $\starMax^{-}(n - k, \varphi)$ for all iteration boundaries $k$ in all words $\vec{a}$ of length up to $n$ that are in the language of the program $\alpha$.
Since $\combinationsFor_{\alpha}(\vec{a}_n)$ does not contain the index $n$ itself, $\starMax^{-}(n, \alpha)$ also includes in the sum $\starMax^{-}(n - {\left| \vec{a} \right|}, \varphi)$ for all such words $\vec{a}$.

To compute the value of $\starMax$ for a sequent, first the maximum length of the available reachability path for each label in the sequent is determined, using the $\pathMax$ function.
Then, using this information, we take the sum of the value of $\starMax$ (with the appropriate polarity) for each of the lablled formulas it contains.

For termination of proof-search, we need two properties.
The first is that the number of consequent iterated formulas in the open leaves of unwindings is bounded as a function of the initial sequent, namely $\starMax$.
The second is that the value of $\starMax$ does not increase during proof-search.
This means that the value of $\starMax$ for the initial sequent of the proof-search gives a bound on the size of the conclusions of all unwindings constructed during the search.
Thus, proof-search must terminate.

\begin{proposition}
\label{lem:Termination:BoundingFunction}
  For $\mathcal{D}$ be a capped unwinding for test-free, acyclic $\sequent{\Gamma}{\Delta}$, then $\left| \{ x : \varphi \in \Delta' \mid \text{$\varphi$ non-atomic} \} \right| \leq \starMax(\sequent{\Gamma}{\Delta})$ for all its open leaves $\sequent{\Gamma'}{\Delta'}$.
\end{proposition}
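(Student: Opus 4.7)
The plan is to prove, by induction on the structure of the uncapped unwinding underlying $\mathcal{D}$, a strengthened invariant tracking the contribution of each labelled formula in the root. Since the capping step only applies weakening rules and axioms, which cannot increase the number of consequent formulas at an open leaf, it suffices to establish the bound before capping. Concretely, I would show that for every open leaf $\sequent{\Gamma'}{\Delta'}$ of the uncapped unwinding of $\sequent{\Gamma}{\Delta}$ (denoted $S$), every non-atomic $y : \psi \in \Delta'$ can be assigned, via ancestry, to a unique ``parent'' labelled formula in $S$, in such a way that the total number of children assigned to $x : \varphi \in \Gamma$ is at most $\starMax^{-}(\pathMax(x, S), \varphi)$ and symmetrically for $\Delta$. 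Summing, this gives exactly $\starMax(S)$.

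The first technical ingredient is an auxiliary lemma: for each open leaf $\sequent{\Gamma'}{\Delta'}$ and each label $x$ appearing in $S$, every sequence of atomic programs labelling a path of reachability from $x$ in $\Gamma'$ has length at most $\pathMax(x, S)$. This is a routine induction on the unwinding, using that $(\square\text{R})$ is the only rule that extends reachability, and that the function $\unfold$ correctly accounts for the maximal length of atomic-program sequences contributed by a single unfolding of a program modality (with iteration contributing only one copy, per the definition of an unwinding).

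The second ingredient is the per-formula contribution bound, proved by mutual induction on the formula. The propositional cases mirror the definitions of $\starMax^{\pm}$: $(\wedge\text{R})$ splits branches and so takes $\oplus$, while $(\vee\text{R})$ keeps both sub-formulas in one branch and so takes $+$, with duals on the left; the implication cases encode the side-switch. For a consequent modality $\nec{\alpha}{\varphi}$, the crucial observation is that in any unwinding a consequent iterated formula is unfolded at most once, so after decomposing $\alpha$ the leaf either retains $\varphi$ (contributing $\starMax^{+}(n, \varphi)$) or contains a single unfolded iterated subformula (contributing the extra $1$ in the $\ast$-containing case).

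The main obstacle, and where I would spend the most care, is the antecedent modal case $\starMax^{-}(n, \nec{\alpha}{\varphi})$, because $(\square\text{L})$ can be applied repeatedly and may generate many $y : \varphi$ descendants along a single branch. The plan is to biject each such descendant with a pair consisting of a word $\vec{a} \in \truncateAt{n}{\languageOf(\alpha)}$ and either a full-consumption marker or an iteration boundary $k \in \combinationsFor_{\alpha}(\vec{a})$: the former case arises when the chain of $(\square\text{L})$ applications exhausts all of $\vec{a}$ against relational atoms in $\Gamma'$ and produces $y : \varphi$ directly (residual reachability budget $n - |\vec{a}|$), while the latter arises when the decomposition halts at an iteration position $k$, leaving a fresh labelled iterated formula whose own subsequent unfoldings are bounded by $\starMax^{-}(n - k, \varphi)$ via the induction hypothesis. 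Acyclicity and test-freeness of $S$, together with \cref{lem:FiniteUnwindings}, guarantee that this bookkeeping is finite and that each descendant is counted exactly once. Summing contributions over all labelled formulas of $S$ then yields the claim.
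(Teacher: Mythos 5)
You should first be aware that the paper offers no proof of this proposition to compare against: the authors state explicitly that, at time of submission, they do not have a fully detailed proof, and the statement sits in the appendix with no proof environment. Your sketch is therefore an attempt to supply what the paper deliberately leaves open, and its architecture --- discharging the capping step, assigning each non-atomic consequent formula of a leaf to a parent in the root via ancestry, bounding per-parent contributions by $\starMax^{\pm}$, and handling the antecedent modal case via words of $\languageOf(\alpha)$ and iteration boundaries --- is a faithful formalisation of the informal reading of $\pathMax$ and $\starMax$ that the paper gives after \cref{def:StarMax}.

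The genuine gap is your first ``technical ingredient'', which you dismiss as a routine induction but which is false for the definitions as given, and its failure propagates into the antecedent modal case, i.e.\ exactly the point you identify as the crux. There are two separate problems. First, $\pathMax(x, \sequent{\Gamma}{\Delta})$ is computed from labelled formulas only, so relational atoms already present in $\Gamma$ contribute reachability it does not see. Take $\Gamma = \{ x \relatedBy{a} y_1, y_1 \relatedBy{a} y_2, \ldots, y_{m-1} \relatedBy{a} y_m, \; x : \nec{\iter{a}}{(\nec{\iter{b}}{p} \rightarrow q)} \}$ with empty consequent: this is test-free and acyclic, $\pathMax(x, \cdot)$ evaluates to $1$ independently of $m$, and a direct computation gives $\starMax(\sequent{\Gamma}{}) = 5$; yet a capped unwinding unfolds the antecedent formula along the whole chain and, in the branch where every instance of ($\rightarrow$L) takes its left premise and every instance of ($\ast$R) its right premise, leaves $m+1$ distinct labelled iterated formulas in the consequent of an open leaf --- none of which the capping weakenings remove --- so the bound fails for $m \geq 5$. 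Second, even for roots without relational atoms, $\pathMax^{+}(\nec{\alpha}{\varphi}) = \pathMax^{+}(\varphi) \oplus \unfold(\alpha)$ takes a maximum where the generated reachability paths concatenate: a consequent formula $x : \nec{a}{\nec{a}{\nec{a}{\nec{\iter{b}}{p}}}}$ has $\pathMax^{+}$ value $1$ but generates, via successive ($\square$R) instances, a path of length at least $3$ from $x$, along which an antecedent formula $x : \nec{\iter{a}}{\psi}$ in the same sequent will be unfolded. Since $\starMax^{-}(n, \nec{\alpha}{\varphi})$ sums only over $\truncateAt{n}{\languageOf(\alpha)}$, an underestimate of $n$ means antecedent iterated formulas are unfolded more often than your word/boundary bookkeeping counts, so the injection you propose cannot exist. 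To make your plan work you would need to restrict the proposition to roots without relational atoms (which suffices for the proof-search procedure, since \cref{lem:FiniteUnwindings} guarantees open leaves contain none) \emph{and} replace $\oplus$ by $+$ in the modality clause of $\pathMax^{+}$, then re-verify the monotonicity results; as stated, the statement itself appears to be false, so no proof strategy can close the gap without first amending the definitions.
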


The second property requires some auxiliary monotonicity results.

\begin{lemma}[Path Monotonicity]
\label{lem:Monotonicity}
  \begin{enumerate}[label={\arabic*.},ref={{\thelemma}(\arabic*)},topsep={\smallskipamount}]
    \item
    \label[lemma]{lem:Monotonicity:PathMax:Left}
    $\pathMax(x, \sequent{\Gamma}{\Delta}) \leq \pathMax(x, \sequent{\Gamma, A}{\Delta})$.
    \item
    \label[lemma]{lem:Monotonicity:PathMax:Right}
    $\pathMax(x, \sequent{\Gamma}{\Delta}) \leq \pathMax(x, \sequent{\Gamma}{A, \Delta})$.
    \item
    \label[lemma]{lem:Monotonicity:StarMax:Formulas}
    $\starMax^{\pi}(m, \varphi) \leq \starMax^{\pi}(n, \varphi)$ for $\pi \in \{ +, - \}$ and $m < n$.
  \end{enumerate}
\end{lemma}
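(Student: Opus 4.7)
The plan is to dispatch \Cref{lem:Monotonicity:PathMax:Left,lem:Monotonicity:PathMax:Right} directly from the definition of $\pathMax$ and then prove \Cref{lem:Monotonicity:StarMax:Formulas} by a mutual induction on the structure of $\varphi$.

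For \Cref{lem:Monotonicity:PathMax:Left,lem:Monotonicity:PathMax:Right}, observe that $\pathMax(x, \sequent{\Gamma}{\Delta})$ is, by \cref{def:PathMax}, the $\oplus$-max of a finite set of natural numbers determined by the labelled formulas of $\sequent{\Gamma}{\Delta}$ labelled by $x$. Enlarging $\Gamma$ (respectively $\Delta$) by a new element $A$ either leaves this set unchanged (if $A$ is a relational atom or a labelled formula with a label other than $x$) or enlarges it by a single extra entry $\pathMax^{-}(\varphi)$ (respectively $\pathMax^{+}(\varphi)$) when $A = x : \varphi$. Since $\oplus$ is monotone under superset, the inequalities follow immediately.

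For \Cref{lem:Monotonicity:StarMax:Formulas}, I will proceed by simultaneous induction on the structure of $\varphi$, showing $\starMax^{+}(m, \varphi) \leq \starMax^{+}(n, \varphi)$ and $\starMax^{-}(m, \varphi) \leq \starMax^{-}(n, \varphi)$ for all $m \leq n$. The base case $\varphi = p$ is trivial since both values are $0$. The cases for $\wedge$, $\vee$, and $\rightarrow$ follow directly from the inductive hypotheses combined with the monotonicity of $+$ and $\oplus$ in each argument, taking care of the polarity swap for the antecedent subformula of $\rightarrow$. The case $\starMax^{+}(n, \nec{\alpha}{\varphi})$ is immediate by the inductive hypothesis, whether or not $\alpha$ is $\ast$-free, since adding $1 \oplus (-)$ preserves monotonicity.

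The main obstacle is the antecedent modal case $\starMax^{-}(n, \nec{\alpha}{\varphi})$, whose defining expression features both the truncation $\truncateAt{n}{\languageOf(\alpha)}$ and the depth parameters $n - k$ and $n - |\vec{a}|$ appearing inside the recursive calls. The idea is to decompose the comparison into two effects. First, for any word $\vec{a} \in \truncateAt{m}{\languageOf(\alpha)}$, the set $\combinationsFor_{\alpha}(\vec{a})$ depends only on $\vec{a}$ and $\alpha$, not on $n$, so the relevant inner sums range over the same index sets at both $m$ and $n$; each summand $\starMax^{-}(m - k, \varphi)$ (resp.\ $\starMax^{-}(m - |\vec{a}|, \varphi)$) is bounded above by $\starMax^{-}(n - k, \varphi)$ (resp.\ $\starMax^{-}(n - |\vec{a}|, \varphi)$) by the inductive hypothesis applied to the structurally smaller $\varphi$, using $m - k \leq n - k$ and $m - |\vec{a}|\leq n - |\vec{a}|$. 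Second, since $m \leq n$ we have $\truncateAt{m}{\languageOf(\alpha)} \subseteq \truncateAt{n}{\languageOf(\alpha)}$, so the outer sums at $n$ range over a superset of words; the additional summands are non-negative, hence only increase the total. Combining these two observations yields $\starMax^{-}(m, \nec{\alpha}{\varphi}) \leq \starMax^{-}(n, \nec{\alpha}{\varphi})$, completing the induction.
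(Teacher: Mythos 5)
Your proposal is correct and follows essentially the same route as the paper: parts (1) and (2) are dispatched directly from the definition of $\pathMax$ by observing that adding $A$ can only enlarge the set over which the maximum is taken, and part (3) is a structural induction whose only non-trivial case is $\starMax^{-}(n, \nec{\alpha}{\varphi})$, resolved via the inclusion $\truncateAt{m}{\languageOf(\alpha)} \subseteq \truncateAt{n}{\languageOf(\alpha)}$ together with the inductive hypothesis at the smaller depth parameters. Your write-up is in fact somewhat more explicit than the paper's (which leaves the modal case as "straightforward"), but the argument is the same.
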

\begin{proof}
  The first two are straightforward.
  If $A$ is a relational atom, the result is immediate since these do not contribute to the value of $\pathMax$ for sequents.
  If $A$ is a labelled formula $y : \varphi$, then there are two cases to consider.
  For the case $x \neq y$ the result is immediate since the set of formulas from which we take the maximum does not change.
  If on the other hand $x = y$, the result is straightforward since we only add a formula to the set from which we take the maximum: if $\pathMax^{-}(\varphi)$ (resp.~$\pathMax^{+}(\varphi)$) is less than $\pathMax^{-}(\psi)$ (resp.~$\pathMax^{+}(\psi)$) for some $y : \psi \in \Gamma$ (resp.~$y : \psi \in \Delta$), then the value of $\pathMax(x, \cdot)$ for the two sequents are the same; otherwise it is greater for the larger sequent, as required.

  We prove the third property by induction on the structure of formulas.
  The only non-trivial case is for $\starMax^{-}(n, \nec{\alpha}{\varphi})$.
  In this case, the result follows straightforwardly from the induction and the fact that $\truncateAt{m}{\languageOf(\alpha)} \subseteq \truncateAt{n}{\languageOf(\alpha)}$ when $m < n$.
  \qed
\end{proof}



We now prove monotonicity of the $\starMax$ function for unwindings.
In fact, this holds more generally for finite (test-free) derivations of logical rules of any height.

\begin{proposition}[Monotonicity of {$\starMax$}]
\label{lem:Termination:BoundNonIncreasing}
  For $\mathcal{D}$ be a capped unwinding for test-free, acyclic $\sequent{\Gamma}{\Delta}$, then $\starMax(\sequent{\Gamma'}{\Delta'}) \leq \starMax(\sequent{\Gamma}{\Delta})$, for all its open leaves $\sequent{\Gamma'}{\Delta'}$.
\end{proposition}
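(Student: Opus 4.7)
The plan is to prove the result by induction on the structure of the unwinding $\mathcal{D}$, reducing to a rule-by-rule verification that for every inference rule with conclusion $S$ and premise $S'$ appearing in $\mathcal{D}$, one has $\starMax(S') \leq \starMax(S)$. The weakening steps generated by the capping phase only remove elements from sequents, and since $\starMax$ is defined as a sum of non-negative $\starMax^{\pm}$ values (each of which is monotone in its sequent argument by the same reasoning as \cref{lem:Monotonicity:PathMax:Left,lem:Monotonicity:PathMax:Right}), these steps can only decrease $\starMax$. Thus only the logical rules of the unwinding phase need to be checked.

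For the propositional rules, the corresponding clauses of $\starMax^{\pm}$ are designed to mirror the shape of each rule: branching rules use $\oplus$ whereas single-premise decompositions that introduce two sub-formulas on the same side use $+$, and polarity swaps for $(\rightarrow L)$ and $(\rightarrow R)$ are reflected in the mutual recursion between $\starMax^{+}$ and $\starMax^{-}$. Each such case is then an immediate calculation. For the atomic modal rules, the argument proceeds by tracking how $\pathMax$ values at each label change. In $(\square R)$, the fresh label $y$ in $S'$ satisfies $\pathMax(y,S') = \pathMax^{+}(\varphi) \leq \pathMax^{+}(\varphi)\oplus\unfold(a) = \pathMax^{+}(\nec{a}{\varphi}) \leq \pathMax(x,S)$, so by \cref{lem:Monotonicity:StarMax:Formulas} the contribution of $y:\varphi$ in $S'$ is dominated by that of $x:\nec{a}{\varphi}$ in $S$, while the $\pathMax$ values at all other labels are unchanged (and hence their $\starMax$ contributions are too). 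In $(\square L)$, the active formula $x:\nec{a}{\varphi}$ is consumed and replaced by $y:\varphi$: the iteration boundary $k=0$ is present in $\combinationsFor_{a}$ (vacuously, since the clause sums over $\truncateAt{n}{\languageOf(a)}$ for $n\geq 1$), so the term $\starMax^{-}(\pathMax(y,S'),\varphi)$ generated in $S'$ is dominated by a summand in the unfolding of $\starMax^{-}(\pathMax(x,S),\nec{a}{\varphi})$, once we verify that the decrease in $\pathMax$ corresponds to the length of the word $a$ consumed.

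The cases for the composite and iterative modalities require the most care, and I expect them to be the main obstacle. For $(\compSymb L)$, $(\compSymb R)$, $(\choiceSymb L)$ and $(\choiceSymb R)$, the inequality reduces to identities of the form $\truncateAt{n}{\languageOf(\comp{\alpha}{\beta})}\supseteq\{w_1 \cdot w_2 \mid w_1\in\truncateAt{n}{\languageOf(\alpha)},\; w_2\in\truncateAt{n-\left|w_1\right|}{\languageOf(\beta)}\}$ and $\truncateAt{n}{\languageOf(\choice{\alpha}{\beta})}=\truncateAt{n}{\languageOf(\alpha)}\cup\truncateAt{n}{\languageOf(\beta)}$, combined with the recursive clauses of $\combinationsFor$. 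The genuinely delicate cases are $(\ast L)$ and $(\ast R)$. For $(\ast L)$, which replaces $x:\nec{\iter{\alpha}}{\varphi}$ by $x:\varphi$ and $x:\nec{\alpha}{\nec{\iter{\alpha}}{\varphi}}$, one exploits the clause $\{0\}\subseteq\combinationsFor_{\iter{\alpha}}(w)$ to recover the $x:\varphi$ summand, and the recursive clause for $\combinationsFor_{\iter{\alpha}}$ involving concatenations $w_1 \cdot w_2$ to recover the summand arising from the composite formula $x:\nec{\alpha}{\nec{\iter{\alpha}}{\varphi}}$. For $(\ast R)$, the inequality $\starMax^{+}(n,\nec{\iter{\alpha}}{\varphi}) = 1\oplus\starMax^{+}(n,\varphi) \geq \starMax^{+}(n,\varphi)$ handles the left premise, while for the right premise one uses that $\starMax^{+}(n,\nec{\alpha}{\nec{\iter{\alpha}}{\varphi}})$ reduces (by the $\ast$-free clause iterated through $\alpha$) to $\starMax^{+}(n,\nec{\iter{\alpha}}{\varphi})$ itself.

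Throughout, the unwinding invariant that no trace progresses more than once, together with the acyclic and test-free hypotheses inherited via \cref{lem:UnwindingsPreserveAcyclicity} and preservation of test-freeness under all rules, ensures that the arithmetic bounds above remain well-defined along $\mathcal{D}$. Combining the rule-by-rule bounds by induction on the path from the conclusion of $\mathcal{D}$ to an arbitrary open leaf yields $\starMax(\sequent{\Gamma'}{\Delta'})\leq\starMax(\sequent{\Gamma}{\Delta})$, as required.
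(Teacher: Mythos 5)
Your overall strategy---induction on the structure of the unwinding, reduced by transitivity to a per-rule check that $\starMax(S') \leq \starMax(S)$, with the capping weakenings and the propositional connectives handled via \cref{lem:Monotonicity} and the mirrored $\oplus$/$+$ clauses of \cref{def:StarMax}---is exactly the approach the paper takes. Note, though, that the paper only carries out the weakening and conjunction cases and leaves the modal cases unfinished (its proof of this proposition terminates in an author's ``TO DO'' note), so the part of your sketch that would constitute genuinely new content is the treatment of the modal rules, and that is where there is a real gap.

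The problem is the ($\square$L) case, at precisely the point you defer with ``once we verify that the decrease in $\pathMax$ corresponds to the length of the word $a$ consumed.'' That verification fails with the definitions as given. For the per-rule inequality you need the contribution $\starMax^{-}(\pathMax(y,S'),\varphi)$ of the new antecedent formula $y : \varphi$ to be dominated by the summand $\starMax^{-}(\pathMax(x,S)-1,\varphi)$ sitting inside $\starMax^{-}(\pathMax(x,S),\nec{a}{\varphi})$, i.e.\ you need $\pathMax(y,S') \leq \pathMax(x,S)-1$. But \cref{def:PathMax} gives $\pathMax^{-}(\nec{a}{\varphi}) = \pathMax^{-}(\varphi)$, so both $\pathMax(x,S)$ and $\pathMax(y,S')$ are bounded below by $\pathMax^{-}(\varphi)$ and can coincide; the first argument passed to $\starMax^{-}$ then does not decrease, and since $\starMax^{-}$ is monotone \emph{increasing} in that argument (\cref{lem:Monotonicity:StarMax:Formulas}) the required inequality points the wrong way. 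Concretely, take $\psi = \nec{\iter{c}}{r} \rightarrow s$ and $S = \sequent{x \relatedBy{a} y,\; x : \nec{a}{\nec{\iter{b}}{\psi}}}{}$, which is test-free and acyclic. Then $\pathMax(x,S) = \pathMax^{-}(\psi) = 1$ and the antecedent contributes $\starMax^{-}(1,\nec{a}{\nec{\iter{b}}{\psi}}) = \starMax^{-}(0,\nec{\iter{b}}{\psi}) = 2$, whereas the ($\square$L) premise $S'$ has $\pathMax(y,S') = 1$ and contributes $\starMax^{-}(1,\nec{\iter{b}}{\psi}) = 5$, because $\truncateAt{1}{\languageOf(\iter{b})}$ now contains the word $b$ in addition to $\emptysequence$. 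So the local monotonicity that both you and the paper rely on is false for ($\square$L); repairing this requires either a global argument that tolerates non-monotone intermediate steps or a redefinition of $\pathMax$ under which antecedent modalities genuinely consume path budget. The remaining modal cases in your sketch (the language-inclusion identities for ($\compSymb$) and ($\choiceSymb$), the $\{0\}\subseteq\combinationsFor_{\iter{\alpha}}(w)$ trick for ($\ast$L), and the $1\oplus{}$ bookkeeping for ($\ast$R)) look plausible, but they do not rescue this one.
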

\begin{proof}
  By induction on the structure of derivations.
  By transitivity of $\leq$, it suffices to show that for the conclusion $\sequent{\Gamma}{\Delta}$ and each premise $\sequent{\Gamma'}{\Delta'}$ of any proof rule, $\starMax(\sequent{\Gamma'}{\Delta'}) \leq \starMax(\sequent{\Gamma}{\Delta})$.
  Notice that we do not need to consider {\normalfont{(Cut)}} and {\normalfont{(Subst)}}, since these are not used in unwindings.
  \begin{description}[font={\normalfont}]
    \item[(WL), (WR):]
    If the principal formula is a relational atom the result is immediate since these do not contribute to the value of $\starMax$ for sequents.
    If the principal formula is a labelled formula $x : \varphi$ then the result follows by \cref{lem:Monotonicity}.
    \Cref{lem:Monotonicity:PathMax:Left,lem:Monotonicity:PathMax:Right} gives that the value of $\pathMax$ for every label is greater for the conclusion that it is for the premise.
    Then, for every labelled formula common to both the premise and concluse, \cref{lem:Monotonicity:StarMax:Formulas} gives that its value is greater in the conclusion; therefore, the sum for the conclusion is greater than the sum for the premise.
    The new labelled formula in the conclusion can only increase the value of the sum.
    \item[Classical connectives:]
    These cases follow relatively straightforwardly.
    We work through the left- and right-hand cases for conjunction as an illustration.
    \begin{description}[font={\normalfont}]
      \item[($\wedge$L):]
      Let $S$ denote the conclusion sequent $\sequent{x : \varphi \wedge \psi, \Gamma}{\Delta}$ and $S'$ denote the premise sequent $\sequent{x : \varphi, x : \psi, \Gamma}{\Delta}$.
      First of all note that, since $\pathMax^{-}(\varphi \wedge \psi) = \pathMax^{-}(\varphi) \oplus \pathMax^{-}(\psi)$, we have that $\pathMax(S) = \pathMax(S')$.
      Therefore, we need to show that the following inequality holds: $\starMax^{-}(n, \varphi) + \starMax^{-}(n, \psi) \leq \starMax^{-}(n, \varphi \wedge \psi)$, where $n = \pathMax(x, S) = \pathMax(x, S')$.
      This follows directly from \cref{def:StarMax}, which shows that they are in fact equal.
      \item[$(\wedge$R):] 
      Let $S$, $S'_1$, and $S'_2$ denote the conclusion sequent $\sequent{\Gamma}{\Delta, x : \varphi \wedge \psi}$, the left-hand premise sequent $\sequent{\Gamma}{\Delta, x : \varphi}$, and the right-hand premise sequent $\sequent{\Gamma}{\Delta, x : \psi}$, respectively.
      We first consider the result for $S'_1$.
      Here, since $\pathMax(\varphi \wedge \psi)$ is the maximum of $\pathMax(\varphi)$ and $\pathMax(\psi)$, we have only that $\pathMax(x, S'_1) \leq \pathMax(x, S)$.
      But then by \cref{lem:Monotonicity:StarMax:Formulas}, we have that the following holds for all $x : \varphi$ in $\Gamma$: $\starMax^{-}(\pathMax(x, S'_1), \varphi) \leq \starMax^{-}(\pathMax(x, S), \varphi)$; and that $\starMax^{+}(\pathMax(x, S'_1), \varphi) \leq \starMax^{+}(\pathMax(x, S), \varphi)$ for all $x : \varphi$ in $\Delta$.
      Thus it suffices to show that the following inequality also holds: $\starMax^{+}(\pathMax(x, S'_1), \varphi) \leq \starMax^{+}(\pathMax(x, S), \varphi \wedge \psi)$.
      By \cref{def:StarMax}, this holds if we also have the that following inequality holds: $\starMax^{+}(\pathMax(x, S'_1), \varphi) \leq \starMax^{+}(\pathMax(x, S), \varphi)$.
      But this follows directly from \cref{lem:Monotonicity:StarMax:Formulas}.
      The result for the right-hand premise $S'_2$ is shown symmetrically.
    \end{description}
%
  \rrnote{TO DO.} %
    \qed
  \end{description}
\end{proof}

\CutfreeRegularCompleteness*
\begin{proof}
  The sequent $\sequent{}{x : \varphi}$ is test-free and acyclic.
  We start with the open derivation consisting of the single open leaf $\sequent{}{x : \varphi}$.
  A proof can be found by iteratively building capped unwindings for the currently open leaves.
  By \cref{lem:UnwindingProperties:ValidLeaves}, all open leaves at each stage are valid.
  Moreover, since $\varphi$ is test-free, so are all sequents occurring during proof-search, which contain only formulas in the Fischer-Ladner closure of $\varphi$.
  By \cref{lem:UnwindingsPreserveAcyclicity}, each open leaf at each stage is also acyclic.
  By \cref{lem:FiniteUnwindings}, each unwinding is finite, and so can be built in finite time.
  \Cref{lem:Termination:BoundingFunction,lem:Termination:BoundNonIncreasing} ensure that the number of labelled iterated formulas in each open leaf encountered in proof-search is bounded by $\starMax(\varphi)$.
  Moreover, \cref{lem:FiniteUnwindings} ensures that the bound on the number of iterated formulas is also a bound on the number of labels occurring in the open leaves.
  This means that the size of the sequent in each open leaf is bounded.
  That is, the number of distinct open leaves encountered during proof-search is bounded, modulo label substitution.
  Therefore, during proof-search, if any open leaf is a substitution instance of a previously encountered sequent we can apply a substitution, convert it to a bud and assign the previously encountered node as its companion.
  We must eventually end up with a pre-proof after finite time.
  Due to the sub-formula property of all rules except ($\ast$L) and ($\ast$R), for any infinite path through the pre-proof there must be a path of ancestry following some consequent iterated (sub-)formula.
  Since all unwindings are finite, the infinite path must cycle through some unwinding an infinite number of times.
  Moreover, since sequents are finite, the path must go through some labelled formula $x : \varphi$ in the conclusion of this unwinding an infinite number of times.
  By the construction of unwindings, $\varphi$ is an iterated formula $\nec{\iter{\alpha}}{\psi}$.
  We can therefore form a trace from this formula starting with $\tau = (x, \emptysequence, \alpha, \psi)$ that follows this path.
  Then by \cref{lem:UnwindingProperties:Traces}, each segment of the trace between occurrences of this formula (of which there are infinitely many) is progressing.
  Thus, the pre-proof satisfies the global trace condition.
  \qed
\end{proof}

\end{document}